\documentclass[11pt]{article}

\usepackage[singlespacing]{setspace}
\usepackage[letterpaper, margin=1in]{geometry}

\usepackage{graphics}
\usepackage{xspace,xcolor}
\usepackage{amsmath}
\usepackage{amssymb}
\usepackage{amsthm}
\usepackage{mathtools}
\usepackage{complexity}
\usepackage[hidelinks]{hyperref}
\usepackage[noadjust]{cite}
\usepackage[sort&compress,capitalise,noabbrev]{cleveref}
\usepackage[skins]{tcolorbox}
\usepackage{tabularx}
\usepackage{enumitem}

\usepackage{thm-restate}
\usepackage{orcidlink}
\usepackage{fontawesome5}

\newtheorem{alphatheorem}{Theorem}

\definecolor{authormailgray}{rgb}{0.3,0.3,0.3}
\newcommand{\mailicon}{\raisebox{0.045em}{\textcolor{authormailgray}{\fontsize{9}{12}\selectfont\faIcon[regular]{envelope}}}}

\newenvironment{bracketenumerate}
  {\begin{enumerate}[label=(\arabic*),ref=\arabic*]}
  {\end{enumerate}}

\newcommand{\upsf}[1]{\textnormal{\textsf{#1}}}
\newcommand{\problemfont}[1]{\textnormal{\textsc{#1}}}

\newcommand{\td}{\textsf{\textnormal{td}}}
\newcommand{\tw}{\textsf{\textnormal{tw}}}

\newcommand{\vtxptype}{\ensuremath{p}\textnormal{-}\upsf{type}}
\newcommand{\rt}{\upsf{rt}}
\newcommand{\orig}{\upsf{orig}}
\newcommand{\soltype}{\upsf{sol}\textnormal{-}\upsf{type}}
\newcommand{\ramsey}{\upsf{Ram}}
\DeclareMathOperator{\atp}{atp}

\newcommand{\normprop}[1]{\Cref{lem:normalization_properties}\,(\ref{#1})}
\newcommand{\multprop}[1]{\Cref{lem:multiplicity_properties}\,(\ref{#1})}
\newcommand{\patprop}[1]{\Cref{lem:pattern_properties}\,(\ref{#1})}
\newcommand{\Size}[1]{\ensuremath{\left\vert #1 \right\vert}}
\newcommand{\BigO}[1]{\ensuremath{\mathcal{O}(#1)}}
\DeclareMathOperator{\arity}{ar}
\newcommand{\clos}{\ensuremath{\text{clos}}}

\newcommand{\MSO}{\ensuremath{\ComplexityFont{MSO}}\xspace}
\newcommand{\MSOone}{\ensuremath{\ComplexityFont{MSO}_1}\xspace}
\newcommand{\MSOtwo}{\ensuremath{\ComplexityFont{MSO}_2}\xspace}
\renewcommand{\FO}{\ensuremath{\ComplexityFont{FO}}\xspace}
\newcommand{\ESO}{\ComplexityFont{ESO}}

\newcommand{\RR}{\ensuremath{\mathcal{R}}}
\newcommand{\JJ}{\ensuremath{\mathcal{J}}}

\newcommand{\snpmodelchecking}{\SNP \problemfont{Model Checking}}

\newcommand{\N}{\mathbb{N}}
\newcommand{\real}{\mathbb{R}}
\newcommand{\Part}{\mathcal{P}}
\newcommand{\tww}{\textsf{\textnormal{tww}}}
\newcommand{\rdeg}{\operatorname{rdeg}}
\newcommand{\Red}{\mathrm{Red}}
\renewcommand{\SNP}{\ensuremath{\mathsf{SNP}}\xspace}
\newcommand{\Lo}{\mathrm{L}}
\newcommand{\Su}{\mathrm{S}}
\newcommand{\Tr}{\mathrm{T}}
\newcommand{\tn}{\textsf{\textnormal{tn}}}
\newcommand{\bx}{\textsf{\textnormal{box}}}
\newcommand{\Pg}{\mathrm{P}}
\newcommand{\sn}{\textsf{\textnormal{sn}}}
\newcommand{\qn}{\textsf{\textnormal{qn}}}

\graphicspath{{./graphics/}}
\newtheorem{theorem}{Theorem}
\newtheorem{remark}{Remark}

\newtheorem{fact}{Fact}

\newtheorem{lemma}[theorem]{Lemma}
\newtheorem{definition}[theorem]{Definition}
\newtheorem{observation}[theorem]{Observation}
\newtheorem{corollary}[theorem]{Corollary}

\newtheoremstyle{claimstyle}
  {\topsep}{\topsep}
  {\normalfont}{0pt}
  {\normalfont\itshape}{.}{0.5em}
  {\textcolor{black!60}{\ensuremath{\vartriangleright}}\nobreakspace\thmname{#1}\thmnumber{ #2}\thmnote{ (#3)}}
\theoremstyle{claimstyle}
\newtheorem{claim}[theorem]{Claim}

\newenvironment{claimproof}[1][Proof of claim]
  {\begin{proof}[{\normalfont\itshape #1}]}
  {\end{proof}}

\theoremstyle{definition}
\newtheorem*{example*}{Example}
\newtheorem{algorithm}{Algorithm}

\theoremstyle{plain}

\crefname{mtheorem}{Theorem}{Main Theorems}
\crefname{claim}{Claim}{Claims}
\crefname{algorithm}{Algorithm}{Algorithms}

\newcommand{\defparproblem}[4]{\begin{tcolorbox}[
            enhanced,
            colback=white,
            colframe=black,
            boxrule = 0.5pt,
            coltitle=black,
            title=#1,
            rounded corners,
            attach boxed title to top left={yshift=-10pt, xshift=6pt},
            bottom=-0.5mm,
            boxed title style={
                    interior style={fill=white},
                    frame hidden
                }
        ]\begin{tabularx}{12.5cm}{r X p {0.5cm}}
            Input:     & #2 \\
            Parameter: & #3 \\
            Question:  & #4
        \end{tabularx}
    \end{tcolorbox}
}

\renewcommand{\phi}{\varphi}

\title{A Uniform Algorithm for Strict NP on Bounded-Treedepth Graphs}
\makeatletter
\AddToHook{cmd/@maketitle/before}{\let\titleoriginalfnsymbol\@fnsymbol
  \renewcommand{\@fnsymbol}[1]{\ifnum#1=2 \ensuremath{\ddag}\else\titleoriginalfnsymbol{#1}\fi}\renewcommand{\@makefnmark}{\hbox{\@textsuperscript{\normalfont\@thefnmark}}}}
\makeatother
\author{Thomas Depian\thanks{Algorithms and Complexity Group, TU Wien, Austria.}\,\href{mailto:tdepian@ac.tuwien.ac.at}{\mailicon}\,\orcidlink{0009-0003-7498-6271},
  Robert Ganian\footnotemark[1]\,\href{mailto:rganian@gmail.com}{\mailicon}\,\orcidlink{0000-0002-7762-8045},
  Jakob Greilhuber\thanks{CISPA Helmholtz Center for Information Security and Saarland University, Saarbrücken, Germany.}\,\href{mailto:jakob.greilhuber@cispa.de}{\mailicon}\,\orcidlink{0009-0001-8796-6400},\\
  Marlene Gr\"undel\footnotemark[1]\,\href{mailto:mgruendel@ac.tuwien.ac.at}{\mailicon}\,\orcidlink{0000-0003-3470-1326}, and Simon Wietheger\footnotemark[1]\,\href{mailto:swietheger@ac.tuwien.ac.at}{\mailicon}\,\orcidlink{0000-0002-0734-0708}}
\date{\vspace{-5ex}}

\begin{document}

\maketitle

\begin{abstract}
A classical well-quasi-ordering result guarantees the existence of non-uniform linear-time algorithms for all problems in Strict NP on relational structures of bounded treedepth; however, this provides neither a procedure for constructing these algorithms nor computable bounds on their parameter dependence. We turn this existential result into a uniform algorithmic metatheorem. Given a Strict NP sentence $\varphi$ and a relational structure $\mathcal{R}$, our algorithm decides whether $\mathcal{R}\models\varphi$ in time $f(\Size{\varphi},\td(\mathcal{R}))\cdot\Size{\mathcal{R}}$ for a computable function $f$, where the treedepth of~\RR\ is measured on the Gaifman graph. The algorithm also constructs witness relations, with the polynomial exponent depending on their arity, and provides a unified framework for settling hereditary graph problems parameterized by treedepth.
We also present several applications---among others, our result resolves open questions on the fixed-parameter tractability of computing the stack number, queue number, track number and twin-width parameterized by treedepth.
\end{abstract}

\maketitle

\thispagestyle{empty}

\section{Introduction}
\label{sec:introduction}

Robertson and Seymour's Graph Minor Theorem~\cite{RobertsonS04} states that in every infinite sequence of graphs, some earlier graph is a minor of a later one. In particular, every minor-closed graph class is characterized by a finite set of forbidden minors. Unfortunately, the same does not hold for the induced subgraph relation: there are simple examples of infinite sequences of graphs without any pair being an induced subgraph of another (e.g., the sequence of all cycles). However, a well-quasi-ordering under induced subgraphs can be guaranteed when we restrict our attention to graphs of bounded \emph{treedepth}~\cite{dingSubgraphsWellquasiordering1992,NesetrilM12}---a fundamental graph parameter with ties to the theory of graph sparsity. For a graph $G$, the treedepth measures the minimum depth among all forests $F$, called elimination forests, such that $G$ is a subgraph of the ancestor-descendant closure of $F$~\cite{NesetrilM12}.

The existence of the aforementioned well-quasi-ordering has a broad algorithmic consequence for problems in the complexity class \emph{Strict NP}, denoted \SNP~\cite{FederV98,CCbook}.
This class consists of all problems over relational structures expressible via a second-order sentence of the form
\[
    \varphi =
    \exists R_1\cdots\exists R_r\,
    \forall x_1\cdots\forall x_q\,
    \psi(R_1,\ldots,R_r,x_1,\ldots,x_q),
\]
where $\psi$ is quantifier-free and may use the relations $E_1,\ldots,E_b$ of the input structure.
The existentially quantified variables range over relations on the input domain, and the universally quantified variables range over its elements. The relations $R_1,\ldots,R_r$ may have arbitrary arity; a satisfying interpretation of these relations then constitutes a witness for the property expressed by $\varphi$.

The fact that the arity of the relations is unbounded is a crucial difference between \SNP{} and monadic second-order logic (\MSO).
For example, a witness for an \SNP{} property may contain a linear order on the entire domain or higher-arity relations describing interactions between several vertices.
Moreover, the witness relations may be dense even if the input structure is sparse.

A crucial feature of this syntax is that satisfaction is preserved under taking induced substructures.
Indeed, deleting elements and restricting the witness relations to the remaining domain preserves every constraint imposed by the universal first-order part.
The well-quasi-ordering result also extends to relational structures over any fixed finite signature, with treedepth measured on their Gaifman graph\footnote{The vertex set of the \emph{Gaifman} (also called \emph{primal}) graph is the set of elements, and two distinct elements are connected by an edge whenever they occur together in a tuple of an input relation.}~\cite{lopezFirstOrderPreservation2023,dingSubgraphsWellquasiordering1992}. Consequently, for each fixed formula and treedepth bound, the negative instances have a finite set of minimal obstructions under the induced-substructure relation.
These fixed obstructions can be tested for on structures of bounded treedepth in linear time~\cite{Courcelle90},
yielding the following well-established fact:

\begin{fact}
\label{fact:nonuniform_snp}
For every problem $\mathcal{P}$ in \SNP expressed by a fixed sentence $\varphi$ and for every fixed integer $k$, there exists a linear-time algorithm that solves $\mathcal{P}$ on all relational structures whose Gaifman graph has treedepth at most $k$.
\end{fact}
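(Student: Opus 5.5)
The plan follows the route sketched just before the statement: closure of \SNP under induced substructures, then well-quasi-ordering, then Courcelle's theorem. Fix the sentence $\varphi$ over the signature $E_1,\ldots,E_b$ and the integer $k$. Let $\mathcal{C}_k$ be the class of finite structures over this signature whose Gaifman graph has treedepth at most $k$. Let $\mathcal{N} \subseteq \mathcal{C}_k$ be the structures that do not satisfy $\varphi$.

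\textbf{Step 1: $\mathcal{N}$ is upward closed under induced substructures within $\mathcal{C}_k$.} Suppose $\RR \models \varphi$ with witnesses $R_1,\ldots,R_r$, and let $\RR'$ be an induced substructure of $\RR$. Restrict each $R_i$ to tuples over the domain of $\RR'$. Every universal instantiation $x_1,\ldots,x_q$ in $\RR'$ is also one in $\RR$. The quasi-free matrix $\psi$ only inspects atoms on these elements, and those atoms agree in $\RR$ and $\RR'$. Hence $\RR' \models \varphi$. Contrapositively, if $\RR \in \mathcal{N}$, then every $\RR'' \in \mathcal{C}_k$ containing $\RR$ as an induced substructure is also in $\mathcal{N}$. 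Note that $\mathcal{C}_k$ itself is closed under induced substructures, since treedepth is monotone under subgraphs.

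\textbf{Step 2: finitely many minimal obstructions.} Use the cited result~\cite{dingSubgraphsWellquasiordering1992,lopezFirstOrderPreservation2023}: $\mathcal{C}_k$ is well-quasi-ordered by the induced-substructure relation (up to isomorphism). Let $\mathcal{M}$ be the set of minimal elements of $\mathcal{N}$. These are pairwise incomparable, so they form an antichain, and $\mathcal{M}$ is finite. Since the order is well-founded on finite structures, every member of $\mathcal{N}$ contains some member of $\mathcal{M}$. Together with Step 1, for $\RR \in \mathcal{C}_k$ we get $\RR \not\models \varphi$ if and only if some $M \in \mathcal{M}$ embeds into $\RR$ as an induced substructure.

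\textbf{Step 3: linear-time testing.} For each $M \in \mathcal{M}$, the existence of an induced copy of $M$ is expressible by a first-order sentence $\chi_M$. This sentence existentially quantifies $\Size{M}$ pairwise distinct elements and fixes exactly which tuples of each $E_j$ hold among them. Then $\varphi$ is equivalent on $\mathcal{C}_k$ to $\bigwedge_{M\in\mathcal{M}} \neg\chi_M$, which is a fixed first-order (hence \MSO) sentence. Treedepth at most $k$ bounds treewidth by $k-1$. So by Courcelle's theorem~\cite{Courcelle90}, applied to the Gaifman graph or to the incidence encoding of $\RR$, this sentence can be evaluated in linear time. We first compute a tree decomposition of width at most $k$ in linear time via Bodlaender's algorithm. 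An elimination forest could also be computed directly, but it is not needed.

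\textbf{Main obstacle.} The only nontrivial ingredient is Step 2, the well-quasi-ordering of bounded-treedepth structures over a fixed finite signature under induced substructures. The statement allows citing it rather than reproving it. If it had to be reproved, I would encode each structure by an elimination forest of depth at most $k$, with each node labeled by the finitely many relation atoms it forms with its ancestors. I would then apply Higman's lemma and Kruskal's tree theorem by induction on the depth. The argument is inherently non-constructive: $\mathcal{M}$ exists but is not computed. This is why the resulting algorithm is non-uniform, which is exactly what the statement claims and no more.
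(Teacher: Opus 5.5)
Your argument is correct and follows the route the paper sketches just before the fact: SNP is closed under induced substructures, the Ding/Lopez well-quasi-ordering of bounded-treedepth structures makes the minimal obstruction set finite, and the fixed obstructions, written as a first-order (hence MSO) sentence, are tested in linear time via Courcelle's theorem on bounded treewidth. One small aside on your optional re-proof of the well-quasi-ordering: Kruskal's theorem is not the right tool there, since homeomorphic tree embeddings need not preserve the depth-indexed ancestor labels. Induction on depth using Higman's lemma, applied to injective embeddings of finite multisets of subtrees, suffices instead.
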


In the setting of graphs, the well-quasi-ordering argument applies to every hereditary graph property and Fact~\ref{fact:nonuniform_snp} is often understood in those terms. Note that avoiding a given set of finitely many induced subgraph obstructions arising from the well-quasi-ordering on bounded-treedepth graphs can be easily encoded in an \SNP{} sentence---in this sense, Fact~\ref{fact:nonuniform_snp} establishes the existence of non-uniform linear-time algorithms for testing hereditary graph properties on graphs of bounded treedepth.

However, the generality of Fact~\ref{fact:nonuniform_snp} comes at a substantial price (see also~\cite{FellowsL94}):
\begin{enumerate}[itemsep=0.5pt]
    \item It guarantees the existence of algorithms, but does not provide a procedure for constructing any such algorithm from $\varphi$ and $k$.
    \item It gives no computable bound on how the running time depends on $\varphi$ and $k$.
    \item The promised algorithm merely tests for obstructions---it does not construct witness relations.
\end{enumerate}

In the language of parameterized complexity, algorithms whose existence is guaranteed in this way are sometimes called \emph{non-uniform fixed-parameter algorithms}~\cite[Definition~6.15]{CyganFKLMPPS15}. These are \textbf{not} fixed-parameter algorithms in the usual sense: fixed-parameter tractability (\FPT) requires an algorithm whose running-time dependence on the formula length and treedepth is bounded by a computable function; see, for example,~\cite[Definition~2.1.1]{DowneyF13},~\cite[Definition~1.4]{FlumG06}, and~\cite[Definition~1.2]{CyganFKLMPPS15}.\footnote{We also refer to these books for an introduction to parameterized complexity.}
For a range of hereditary graph problems, fixed-parameter tractability with respect to treedepth remained open despite the non-uniform linear-time algorithms guaranteed by Fact~\ref{fact:nonuniform_snp}. Well-known examples include computing the queue number~\cite{HeathR92,DujmovicJMMUW20,BhoreGMN22},
stack number (also known as book thickness)~\cite{BK.btg.1979,Yan.EPG.1989,BhoreGMN20}, track number~\cite{DujmovicWood2004,DujmovicMW05,BannisterDDEW19},
or twin-width~\cite{BonnetKTW22,BonnetGKTW24,GanianR26} of a graph\footnote{Further examples are provided in Section~\ref{sec:applications}.}; all four cases have been explicitly identified as open problems even in recent works~\cite{BannisterDDEW19,GanianR26,DepianFGS25,BhoreGMN22}.

\subsection*{Our Contribution}
We present a \emph{single}, \emph{uniform}, and \emph{constructive} fixed-parameter algorithm for all problems in \SNP{} parameterized by treedepth.
In particular, we establish:

\begin{alphatheorem}[Simplified formulation of \cref{main_thm:treedepth_meta_theorem}]
\label{thm:treedepth_meta_theorem_simplified}
There is an \Cref{alg:main-algorithm} that takes as input an \SNP{} sentence $\varphi$ and a relational structure $\mathcal{R}$, decides whether $\mathcal{R}\models\varphi$, and runs in time
\[
    f(\Size{\varphi},\td(\mathcal{R}))
    \cdot \Size{\mathcal{R}}
\]
for some computable function $f$.
Moreover, if $\mathcal{R}\models\varphi$, the algorithm can output a witness for this in time
\[
    f(\Size{\varphi},\td(\mathcal{R}))
    \cdot \Size{\mathcal{R}}^{\arity(\varphi)}.
\]
\end{alphatheorem}

Here, $\Size{\mathcal{R}}$ denotes the encoding size of the input structure, $\td(\mathcal{R})$ is the treedepth of its Gaifman graph, and $\arity(\varphi)$ denotes the maximum arity of any relation in $\{R_1,\dots,R_r\}$. In particular, the theorem yields a fixed-parameter decision algorithm for every fixed sentence $\varphi$ in \SNP, parameterized by treedepth; it does not require a forbidden-substructure characterization as part of the input and the polynomial degree does not depend on $\varphi$.
Moreover, it also allows us to output an explicit interpretation of the existentially quantified relations in $\varphi$ as a witness for positive instances---however, such a witness may have size $\Size{\mathcal{R}}^{\arity(\varphi)}$, necessitating a more relaxed running time bound.

\begin{remark}
\label{remark:fine_grained_guarantees}
In the full statement of our main result, \cref{main_thm:treedepth_meta_theorem} in \Cref{sec:metatheorem}, we give a more fine-grained guarantee.
The algorithm can output any requested subset of the witness relations, and the polynomial exponent depends only on the maximum arity among those relations.
\end{remark}

A concrete description of $\mathbb{A}$ is deferred to the technical overview in the upcoming \Cref{sec:overview}; the full proof of our metatheorem is deferred to \Cref{sec:metatheorem}. One peculiarity of the result is the function~$f$ itself: our proof establishes computability via a concrete non-trivial procedure which constructs the function and is guaranteed to terminate, but does not itself provide any closed-form upper bound on its growth.

\subparagraph*{Applications.}
\cref{thm:treedepth_meta_theorem_simplified} resolves the parameterized complexity w.r.t.\ treedepth for a number of problems for which it was previously unknown.
Concretely, we use our metatheorem to provide $f(\td(G)) \cdot |V(G)|$-time algorithms for computing the stack number, queue number, track number, twin-width, and boxicity of a graph $G$.
These applications (which include all the examples listed below Fact~\ref{fact:nonuniform_snp}) are provided in Section~\ref{sec:applications}.

\subparagraph*{Barriers for Generalizing the Metatheorem.}
We exclude in \Cref{sec:extensions} three natural generalizations of our result unless $\P = \NP$.
First, we use Fagin's theorem~\cite{fagin1974generalized} to show
that \SNP\ formulas cannot be replaced with \ESO\ formulas (where $\psi$ can quantify over first-order variables, see, e.g.,~\cite{LibkinElementsOfFiniteModelTheory}) while maintaining tractability w.r.t.\ treedepth.
Second, we observe that there are \SNP{}-problems which are \NP-hard on graphs of pathwidth 2, meaning that it is impossible to generalize \Cref{thm:treedepth_meta_theorem_simplified} to pathwidth.
This shows a striking contrast to Courcelle's theorem~\cite{Courcelle90}, and the known results for \upsf{ASNP} and \upsf{GMSNP} described below.
Third, we rule out a generalization to the complexity class $\textsf{MaxSNP}_0$~\cite{CCbook,PapadimitriouY91MaxSNP}, which is the natural optimization variant of \SNP{}.

\subparagraph*{Related Work.}

Feder and Vardi~\cite{FederV98} established close connections between constraint
satisfaction problems (CSPs) and monotone monadic \SNP{} without inequality (\upsf{MMSNP}).
They also showed that
every problem in \NP{} is polynomial-time \emph{equivalent} to a problem
in \SNP{}.\footnote{These reductions do not necessarily preserve bounded
treedepth, and therefore this does not imply that all problems in \NP{} are tractable on graphs of bounded treedepth.}
Hence, assuming $\P \neq \NP$, we obtain from Ladner's theorem~\cite{Ladner75} that, unlike for finite domain CSPs with a fixed constraint language and \upsf{MMSNP}~\cite{Bulatov17CSPDich,Zhuk20CSPDich,Kun13}, there is no complexity dichotomy for \SNP{}.
Madelaine~\cite{Madelaine09} studies the connections between \upsf{MMSNP} and CSPs on classes of bounded treedepth.
Other relevant fragments of \SNP{} include guarded monotone \SNP{}
(\upsf{GMSNP})~\cite{BienvenuCLW14} and amalgamation \SNP{}
(\upsf{ASNP})~\cite{BodirskyKS20}.
The latter work showed that for each fixed \upsf{ASNP} or \upsf{GMSNP} sentence and treewidth
bound there is a polynomial-time model checking algorithm (see also the introduction of Bodirsky and Guzm{\'{a}}n-Pro~\cite{BodirskyG26} for further background on \upsf{GMSNP}).
Since \SNP{} sentences define hereditary properties, \SNP{} satisfies
a $0$-$1$ law: each sentence is either true on all finite
structures or false on almost all finite structures.
Kolaitis and Vardi~\cite{KolaitisV87} showed that it is moreover
decidable, given an \SNP{} sentence, which of these two cases holds.
Their proof uses Ramsey arguments on atomic types, which are core building blocks that our proof relies on as well.
We provide a detailed discussion of the similarities and differences of their work compared to ours in   \Cref{paragraph:kolatis_vardi_comp}.

Classical metatheorems provide \FPT{} algorithms for monadic second-order logic
\MSOtwo{} on graphs of bounded treewidth~\cite{Courcelle90,BodlaenderALinearTime},
\MSOone{} on graphs of bounded clique-width~\cite{CourcelleMR00,OumS06}, and first-order logic (\FO{}) on effectively nowhere dense graph classes~\cite{GroheKS17}.
The bounded-treewidth results also cover \MSO{} on relational structures
and output-sensitive query evaluation~\cite{FlumFG02QueryEvaluationTreeDecompositions}.
Gajarsk\'y and Hlin\v{e}n\'y~\cite{gajarskyKernelizingMSOProperties2015}
obtained algorithms for \MSO{} model checking on graphs of bounded treedepth
by kernelizing labeled trees of bounded height.
More recently, Bergougnoux, Chekan, and
Stamoulis~\cite{BergougnouxCS26} proposed randomized algorithms with
single-exponential dependence on treedepth and polynomial space
for each fixed formula of a logic extending fully existential
\MSOtwo{}, assuming that an elimination forest is given in the input.

The logical scope of these results differs from ours.
On the one hand, \SNP{} can only express properties that are closed under taking induced substructures, and therefore cannot express non-hereditary properties such as \problemfont{Hamiltonian Cycle}, which are captured by \MSOtwo{}~\cite{Courcelle90}.
On the other hand, \SNP{} allows existential quantification over arbitrary
relations on the domain, whereas \MSOone{} and \MSOtwo{}
quantify over vertex sets and, for \MSOtwo{}, also edge sets.
As mentioned above, some of our applications are already \NP-hard on graphs of bounded pathwidth, so their tractability cannot follow from metatheorems that apply to classes of bounded pathwidth, assuming $\P \neq \NP$.

Recall that our result provides a single uniform algorithm for all of \SNP{}, parameterized
by the formula length and the treedepth of the input structure.
So to some extent, the distinction between uniform and non-uniform tractability is at the heart of our contribution.
Due to the graph minor theorem~\cite{RobertsonS04} coupled with polynomial-time minor testing~\cite{RobertsonS95b}, this question has historically been of special interest for minor-closed graph classes~\cite{FellowsL87Nonconstructive,FellowsL89AnAnalogueoftheMyhill-NerodeTheorem,FellowsL94}.
Courcelle, Downey, and Fellows~\cite{CourcelleDF97} showed that the forbidden
minors cannot be computed from an \MSOtwo{} description of the class alone.
Under additional assumptions, Adler, Grohe, and Kreutzer~\cite{AdlerGK08ComputingExcludedMinors} give an algorithm that computes sets of forbidden minors.
For vertex deletion to a fixed minor-closed class, Sau, Stamoulis, and
Thilikos bound the size of the resulting obstructions~\cite{SauST23} and give
parameterized algorithms~\cite{SauST22}.

\section{Technical Overview}
\label{sec:overview}
In this section, we provide a high-level overview of the main proof ingredients used to establish \Cref{thm:treedepth_meta_theorem_simplified}.
For formal definitions and proofs, we refer to later sections.
Throughout this section, let $(\varphi, \RR)$ be an instance of \snpmodelchecking, i.e., the problem of deciding whether $\RR \models \varphi$ holds, where $\varphi = \exists R_1,\dots,R_r \forall x_1,\dots, \forall x_k \psi$ is an \SNP\ sentence and $\RR = (V, E_1, \ldots, E_b)$ a relational structure.
We call a tuple $(R'_1,\dots,R'_r)$ of relations $R'_i \subseteq V(\RR)^{\arity(R_i)}$, for $i \in [r]$, a \emph{solution} if
\[
(\RR,R'_1,\dots,R'_r) \models \forall x_1 \cdots \forall x_k\,\psi.
\]
For the purpose of this overview, we assume that the domain $V$ comes with (an arbitrarily constructed) fixed total order, allowing us to speak about \emph{smaller} and \emph{smallest} elements.
Before we discuss our main result, let us first describe a pruning-based \emph{non-uniform} algorithm~$\mathbb{U}$ for \snpmodelchecking\ and examine the guarantees it provides.

\subsection{Extending the Non-Uniform Algorithm}
\label{sec:overview-extend-non-uniform}
Fix $\varphi$ and a positive integer $h$.
As hinted in \Cref{sec:introduction}, there is some \emph{finite} obstruction set $\JJ_{h,\phi}$, such that a relational structure $\RR$ with treedepth at most $h$ satisfies $\phi$ if and only if it contains no structure of $\JJ_{h,\phi}$ as an induced substructure.
When analyzing the structure of an elimination forest $T$ of the Gaifman graph of $\RR$, it is not hard to see that we can define some notion of \emph{types} for $V(T)$ and ``prune'' $T$ to only keep a bounded number of siblings of the same type.
Indeed, it is possible to define the types in such a way that, given an integer $p$, we can prune $\RR$ to an induced substructure $\RR^p$ of size bounded by a computable function $g(p,h)$, such that $\RR$ contains an induced substructure of size at most $p$ if and only if $\RR^p$ contains the same induced substructure.
Intuitively, we keep at most $p$ sibling-leaves for each possible interaction between the ancestors of a leaf and the relational structure $\RR$ (whose number can be bounded in $h$ and $\Size{\varphi}$).
After this step, the substructures rooted at the parents of the leaves can only have one out of boundedly-many ``isomorphism and neighborhood types'', and we can iteratively apply the above-sketched idea on our way to the root of $T$.

Thus, if an oracle provided us with the size $M$ of the largest obstruction in $\JJ_{h,\phi}$, we could set $p = M$ and have $\RR^p \models \phi$ if and only if $\RR \models \phi$.\footnote{Similar pruning arguments appear in, e.g., algorithms for
	one-queue layouts~\cite{BhoreGMN22} and for \MSO\ model checking
	on labeled trees of bounded height~\cite{gajarskyKernelizingMSOProperties2015}.
	Madelaine~\cite{Madelaine09} uses a similar idea to reduce colored structures of bounded treedepth to bounded-size induced substructures while preserving homomorphic equivalence.}
Since the size of $\RR^p$ is bounded by $g(p,h)$, we can decide whether $\RR^p \models \phi$ by brute force.
This immediately gives rise to our pruning-based non-uniform algorithm $\mathbb{U}$: Prune $\RR$ to $\RR^p$ and check whether $\RR^p \models \varphi$.
This approach has one inherent catch: up front, it is not clear how we can compute the correct value of $p$ as it depends on the unknown set of forbidden substructures $\mathcal{J}_{h,\phi}$.

A natural approach to circumvent this limitation is to try increasing values of $p$.
This leads to a ``semi-\FPT'' algorithm:
If we find a $p$ such that $\RR^p \not \models \phi$, we face a negative instance and can terminate.
In fact, since $\JJ_{h,\varphi}$ is finite, this algorithm will eventually terminate for every negative instance $(\varphi, \RR)$ and the maximum iteration count depends only on $h$ and $\varphi$.
The crux lies with positive instances:
since the correct upper bound for $p$ is unknown, we cannot phrase a termination criterion for our algorithm.
Hence, the issue is one-sided: for negative instances, the obstructions in $\JJ_{h,\phi}$ form a finite set of bounded-sized ``negative certificates'', and we eventually uncover such a certificate when facing a negative instance.
What is missing to turn this into a working \FPT\ algorithm is a set of certificates for positive instances that give us similar guarantees.

The key step for bounded treedepth is to obtain positive certificates that can be recognized on a bounded-size pruned instance and extended to the original input.
We define such certificates and show that if $\RR \models \varphi$ then there exists a large enough $p$ (depending only on $\varphi$ and $\td(\RR)$) such that $\RR^p$ contains one.
The uniform \Cref{alg:main-algorithm} is then a natural extension of the above-sketched ``semi-\FPT'' algorithm.
Starting at a certain threshold value (depending only on~$\varphi$ and $\td(\RR)$), we iteratively examine values for $p$ in increasing order.
For each such $p$, we check whether $\RR^p \not \models \varphi$, in which case we can correctly output that $\RR\not\models\varphi$.
If $\RR^p \models \varphi$, we additionally check whether $\RR^p$ admits one of our positive certificates.
If this is the case, then we can correctly deduce $\RR\models\varphi$.
Otherwise, we continue with $p+1$.

What makes this approach work is that bounded-size certificates now exist for any relational structure $\RR$, regardless of whether we are facing a positive or negative instance.
Besides allowing us to show that $\mathbb{A}$ terminates in time $f(\phi, \td(\RR)) \cdot \Size{\RR}$, positive certificates will also provide us with a ``recipe'' to extend a witness for $\RR^p \models \varphi$ into a witness for $\RR\models\varphi$.
Interestingly, we do not know a closed-form expression for the function $f$: it depends on the size of the largest obstruction in $\JJ_{h,\phi}$, for which we have no explicit bound.
However, we stress that the function $f$ is \emph{computable}.
Next, we describe these positive certificates in greater detail.

\subsection{Certificates for Positive Instances}
\label{sec:overview-certificates}
Let $T$ be a treedepth decomposition of $\RR$ of height $h \leq \td(\RR) + 1$, i.e., an elimination \emph{tree} of the primal graph of $\RR$.
The additional $+1$ stems from requiring $T$ to be a tree, which simplifies the upcoming discussion. Since the vertex set of $T$ equals the set of elements of $\RR$, we use the terms ``element'' and ``vertex'' to refer to the same object.
As indicated in \Cref{sec:overview-extend-non-uniform}, it is natural to define some notion of types when working with $T$.
We build our notion of types on the standard notion
of an \emph{atomic type}  (see, for example,~\cite[p.~226]{LibkinElementsOfFiniteModelTheory}).
The atomic type of a tuple $(v_1,\dots,v_z) \in V^z$, for some $z \in \mathbb{N}$,
captures which atoms on $v_1,\dots,v_z$ are true and which are false, which we use in the following form.

\begin{restatable}[Atomic type]{definition}{defAtomicType}
	\label{def:atomic_type}
	Let $\mathcal{R} = (V, E_1,\dots,E_b)$ be a relational structure and let $z\in \mathbb{N}$.
	An \emph{atomic formula} over the variables $x_1,\dots,x_z$ is a formula of the form
	$E_i(x_{n_1},\dots,x_{n_a})$ with $i \in [b]$, $a = \arity(E_i)$ and $(n_1,\dots,n_a) \in [z]^a$,
	or of the form $x_n = x_{n'}$ with $(n,n') \in [z]^2$.
	The \emph{atomic type} of a tuple $(v_1,\dots,v_z)\in V^z$ in $\mathcal{R}$ is
	\[
		\atp_{\mathcal{R}}(v_1,\dots,v_z) \coloneq
		\bigl\{\, \alpha(x_1,\dots,x_z) \text{ atomic} \;\bigm|\; \mathcal{R} \models \alpha(v_1,\dots,v_z) \,\bigr\}.
	\]
\end{restatable}

In addition to atomic types, we need some notion of type that is defined w.r.t.\ $T$, which we call the \emph{vertex $p$-type} for an integer $p \geq 1$ that will play the role of the pruning threshold from \Cref{sec:overview-extend-non-uniform}.
For simplicity, let us assume for a moment that $\RR$ encodes a graph $G$.
Then, the vertex $p$-type of a leaf $v$ of $T$ stores which of its up to $h - 1$ ancestors are neighbors.
For a non-leaf $v$ of $T$, we additionally store how many children $v$ has of each vertex $p$-type up to the number $p$ to ensure that the overall number of types remains bounded in $p$ and $h$.
This is a common technique when tackling graph problems parameterized by treedepth;
see, e.g.,~\cite{gajarskyKernelizingMSOProperties2015,BhoreGMN22}.
For general relational structures, adjacency is replaced by the atomic type of the root-to-$v$ path in $T$.
Formally, the vertex $p$-type is defined as follows.

\begin{restatable}[Vertex $p$-type]{definition}{defVertexType}
	\label{def:vertex-type}
	Let $p\in \mathbb{N}$ and $\mathcal{R}$ be a relational structure with treedepth decomposition $T$.
	For any $v\in V(\mathcal{R})$, its \emph{vertex $p$-type} in $(\RR, T)$ is the tuple $\vtxptype(v) = (\delta_v,\lambda_v)$ for an ancestor label $\lambda_v$ and a subtree label $\delta_v$, defined as follows.

	Let $d$ be the depth of $v$ in $T$ and $P = w_1,w_2,\dots,w_{d}$ be the path from root $w_1$ of $T$ to vertex $v = w_d$.
	The ancestor label $\lambda_v = \atp_\RR(w_1,w_2,\dots,w_{d})$ is the atomic type of the vertices on $P$.

	The subtree label is defined recursively.
	All leaves in $T$ have the empty subtree label $\varepsilon$.
	For a non-leaf vertex $v \in V(T)$, let $\mathcal{C}_v$ be the set of children of $v$ in $T$, and $\mathcal{T}$ be the set $\{\vtxptype(w) \mid w \in \mathcal{C}_v\}$.
	The subtree label is then a function $\delta_v: \mathcal{T} \to [0,p-1] \cup \{\star\}$, defined by
	\begin{align*}
		\delta_v(t) = \begin{cases}
			\Size{\{w \in \mathcal{C}_v \mid \vtxptype(w) = t\}} & \text{if $\Size{\{w \in \mathcal{C}_v \mid \vtxptype(w) = t\}} \leq p-1$,} \\
			\star                                         & \text{otherwise.}
		\end{cases}
	\end{align*}
\end{restatable}

Hence, $\lambda_v$ stores information about the ancestors of $v$ in $T$, while $\delta_v$ tells us how many of each kind of child $v$ has.
Observe that $\delta_v(t) = \star$ indicates that $v$ has at least $p$ children of vertex $p$-type~$t$, meaning that the amount of stored information is bounded by a function of the treedepth and $p$.
In the following, we say that a vertex $p$-type $t$ is \emph{$p$-abundant in the type}
$t'$ if $\delta_v(t) = \star$ for some vertex $v$ with vertex $p$-type $t'$.
This allows us to define a crucial notion: the \emph{$s$-multiplicity structure} $\RR^s$ of $\RR$, $T$, and $p$, for an integer $s \geq p$.
{Intuitively, $\RR^s$ has precisely $s$ copies of the $p$-abundant children of each vertex $v$; non-abundant types remain unchanged.}
Thus, the value of $p$ determines which vertex $p$-types of children of a vertex are abundant, whereas $s$ specifies their multiplicity in $\RR^s$---put simply, we count up to $p-1$ and then represent ``$p$ or more'' via $s$.

We can construct $\RR^s$ in a two-step procedure:
First, we keep the smallest
$s$ siblings of each $p$-abundant type and delete the remaining ones (if there are any) together with their entire subtrees.
By doing so, we also have to restrict all input relations to the retained elements.
Second, we process the pruned tree bottom-up.
If we encounter a vertex with fewer than $s$ children of a $p$-abundant type, we construct copies to bring their number up to $s$.

For $s = p$, the above construction is a formalization of the pruning operation described in \Cref{sec:overview-extend-non-uniform}.
In particular, we can observe that $\RR^p$ is an induced substructure of $\RR$ and thus $\RR^p \not\models\phi$ implies $\RR \not\models\phi$.
Note that for $s > p$, $\RR^s$ may contain elements not present in $\RR$.
While allowing for $\RR^s\not \subseteq \RR$
sounds counterintuitive at first, this is useful in \Cref{sec:overview-obtain-pure-solution} when we show that positive certificates exist for a suitable choice of $p$.

The $p$-multiplicity structure $\RR^p$ is an important stepping stone towards our concept of positive certificates.
Essentially, positive certificates are well-structured witnesses of $\RR^p \models \varphi$ that can be extended to a witness of $\RR \models \varphi$.
Observe that an \SNP\ formula $\phi = \exists R_1,\dots,R_r \,\forall x_1,\dots,x_k \colon \psi$ can express forbidden substructures on at most $k$ elements.
To detect these forbidden substructures, we cannot consider individual elements but have to consider tuples of size at most $k$.
This makes an extension of vertex $p$-types to tuples natural.

Defining the correct notion of a ``tuple type'' is non-trivial.
On the one hand, if the notion is too coarse, there is no guarantee that we can obtain a solution for $\RR$ from a positive certificate for $\RR^p$.
On the other hand, if the notion is too fine-grained, the number of types cannot be bounded, and we cannot hope that $\RR^p$ admits a positive certificate.
Intuitively, we want a notion of ``tuple type'' such that the following implication holds:
\begin{align*}
	\text{tuples } (v_1,\dots,v_k) \text{ and } &(w_1,\dots,w_k) \text{ have the same ``tuple type''}\\
	&\Longrightarrow \tag{IMPL}\label{eq:equivalence-overview}\\
	v_1,\dots,v_k \text{ behave identically to } & w_1,\dots,w_k \text{ in a positive certificate.}
\end{align*}

One could now be tempted to define the ``tuple type'' of $(v_1,\dots,v_k)$ by taking the $k$-tuple of the vertex $p$-types of $v_1,\dots,v_k$.
However, \textbf{this initial idea completely breaks down even for the simplest structures}, as the following example shows.
Let $\RR$ encode a star on $n$ leaves.
An \SNP\ sentence $\varphi$ can express the existence of a total order $\prec$ on the elements $V(\RR)$ of $\RR$.
Any total order suffices, so clearly $\RR \models \varphi$.
Although any two $k$-tuples of leaves would have the same ``naive'' tuple type, they cannot behave identically in any solution. To make this concrete, consider a total order $\prec$ on $V(\RR)$, fix $k=2$ and some leaves $v_1,v_2$, $w_1,w_2$ with $v_1 \prec v_2$ and $w_2\prec w_1$. Then, $(v_1,v_2)$ and $(w_1,w_2)$ would have the same type, yet $(v_1,v_2)$ behaves differently than $(w_1,w_2)$ in the considered solution.
In a nutshell, \SNP\ is powerful enough to distinguish between elements that seem indistinguishable in the input relational structure.
This requires us to define some
notion of ``tuple types'' which can, on the one hand, distinguish between siblings in a meaningful manner but, on the other hand, still ensure that the number of types is bounded.
Towards finding the correct notion of ``tuple types'', we define \emph{$p$-routes}, which enrich the concept of a vertex $p$-type with a ``map'' of how to reach it from the root of $T$.

\begin{restatable}[$p$-route and $p$-route of a vertex]{definition}{defRoutes}
	\label{def:p_route}
	Consider a relational structure $\mathcal{R}$ with treedepth decomposition $T$, and an integer $p \geq 1$.
	A \emph{$p$-route} is a non-empty finite sequence $\varrho = (e_1,\dots,e_d)$ of \emph{entries} $e_\ell = (t_\ell,i_\ell)$, each consisting of a vertex $p$-type $t_\ell$ and an index $i_\ell \in \mathbb{N}$; we call $d$ the \emph{length} of $\varrho$ also denoted as $\Size{\varrho}$.
	The entry $e_\ell$ is \emph{abundant} if $\ell \geq 2$ and $t_\ell$ is $p$-abundant in $t_{\ell-1}$.

	We write $\rt(v)$ for the $p$-route of a vertex $v$ of $T$, which is defined recursively as follows.
	For the root $u$ of $T$ we set $\rt(u) = ((\vtxptype(u), 1))$.

	Let $v$ be a non-root vertex in $T$ and $t = \vtxptype(v)$.
	Let $w$ be the parent of $v$ and let $i\in \mathbb{N}$ be such that $v$ is the $i$-th smallest\footnote{{With respect to the aforementioned total order on $V$}.} element
	among all children of $w$ of type $t$.
	Then $\rt(v) = \rt(w) \diamond (t,i)$.
\end{restatable}

Any given element of $\RR$ is uniquely identified by its $p$-route in $T$.
To see this, start from the root of $T$, and then descend the tree according to the route.
Each entry of the route tells us, for each vertex we encounter, which of its children we need to follow; \Cref{fig:example-route} visualizes such a descent.
\begin{figure}
	\centering
	\includegraphics{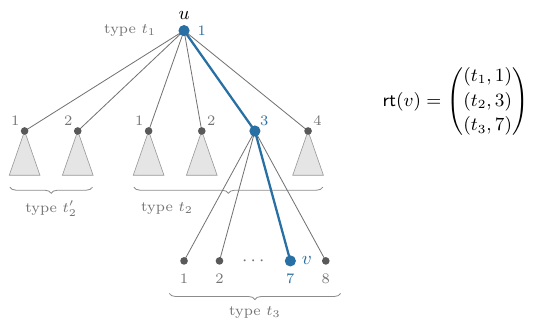}
	\caption{A $p$-route $\rt(v)$ (right) identifying a vertex $v$ in a treedepth decomposition $T$ (left).}
	\label{fig:example-route}
\end{figure}
However, on its own it is not a useful definition of a ``tuple type'':
The indices can be arbitrarily large, giving us no hope that we can bound the number of $p$-routes.

The key idea now is that $p$-routes can be normalized in a way that achieves the desired properties.
Recalling the total order example from the previous page, we already know that the vertex $p$-types do not suffice to capture behavior w.r.t.\ a hypothetical witness. However, we will later show that our witnesses can be assumed to be, in a carefully defined sense, uniform towards the considered ``arbitrary'' total order on $V$. So while $(a,b)$ can  behave differently than $(c,d)$ w.r.t.\ a solution if these are the respectively $10$th, $100$th, $30$th and $25$th elements of the same vertex $p$-type, we will make sure that it is only their relative order that matters in the end.
The idea is that we will not store the counting information in the $p$-route ``maps'' exactly, but in a way which is guaranteed to preserve the relative ordering among a set $S$ of such maps.
This is captured by the notion of a \emph{$p$-route pattern}.

\begin{restatable}[Normalization and $p$-route pattern]{definition}{defPatterns}
	\label{def:p_route_pattern}
	Consider a relational structure $\mathcal{R}$ with treedepth decomposition $T$ and an integer $p \geq 1$.
	Let $S$ be a finite set of $p$-routes. Define
	\begin{align*}
		I(S) = \bigl\{i \mid \text{ there is an abundant entry $(t,i)$ in at least one route of $S$} \,\bigr\},
	\end{align*}
	and let $\pi_S \colon I(S) \to [\Size{I(S)}]$ map each index to its rank in $I(S)$, that is, $\pi_S(i) = \Size{\{i' \in I(S) \mid i' \leq i\}}$.
	We now replace every entry $e_\ell$ in every $p$-route in $S$ by
	\begin{align*}
		\hat e_\ell = \begin{cases}
			(t_\ell, \pi_S(i_\ell)) & \text{if $e_\ell = (t_\ell, i_\ell)$ is abundant,} \\
			e_\ell                               & \text{otherwise,}
		\end{cases}
	\end{align*}
	and call $\nu_S(\varrho) = (\hat e_1,\dots,\hat e_d)$ the \emph{normalization of $\varrho$ with respect to $S$}.
	We write $\nu_S(S') = \{\nu_S(\varrho) \mid \varrho \in S'\}$ for $S' \subseteq S$.

	Let $(v_1,\dots,v_z)$ be a tuple of elements of $\RR$, (i.e., vertices of $T$) and let $S = \{\rt(v_1),\dots,\rt(v_z)\}$ be the set of their $p$-routes.
	The \emph{$p$-route pattern} of $(v_1,\dots,v_z)$ is the tuple $\bigl(\nu_S(\rt(v_1)),\dots,\nu
	_S(\rt(v_z))\bigr)$.
\end{restatable}

Note that the normalization is applied to all abundant entries of a tuple of $p$-routes together, regardless of where in the route they appear.
Let us give an example to illustrate the definition.

\begin{example*}
	\definecolor{abundantblue}{HTML}{286FA3}
	\newcommand{\highlight}[1]{\mathbf{\textcolor{abundantblue}{\ensuremath{#1}}}}

	Let $p = 4$, let $t_1$ be the vertex $p$-type of the root of $T$, and suppose the $p$-routes of $v_1,v_2,v_3,v_4$ are, respectively, displayed as column vectors below:
	\[
	\begin{pmatrix} (t_1,1) \\ \highlight{(t_2,3)} \\ \highlight{(t_3,7)} \\ \highlight{(t_4,6)} \end{pmatrix},\qquad
	\begin{pmatrix} (t_1,1) \\ \highlight{(t_2,3)} \\ \highlight{(t_3,4)} \end{pmatrix},\qquad
	\begin{pmatrix} (t_1,1) \\ \highlight{(t_2,7)} \\ \highlight{(t_3,3)} \end{pmatrix},\qquad
	\begin{pmatrix} (t_1,1) \\ (t_2',2) \\ (t_3,3) \end{pmatrix}.
	\]

	Suppose further that $t_2$ is $p$-abundant in $t_1$, that $t_3$ is $p$-abundant in $t_2$, and that $t_4$ is $p$-abundant in $t_3$, while $t_2'$ is not $p$-abundant in $t_1$ and $t_3$ is not $p$-abundant in $t_2'$.
	The abundant entries are then all entries of the first three routes except their first ones (highlighted in $\textbf{\textcolor{abundantblue}{blue}})$, and the indices they contribute are $I(S) = \{3,4,6,7\}$, so $\pi_S$ maps $3,4,6,7$ to $1,2,3,4$, respectively.
	Hence the $p$-route pattern of $(v_1,v_2,v_3,v_4)$ is
	\[ \left(
	\begin{pmatrix} (t_1,1) \\ \highlight{(t_2,1)} \\ \highlight{(t_3,4)} \\ \highlight{(t_4,3)} \end{pmatrix},\qquad
	\begin{pmatrix} (t_1,1) \\ \highlight{(t_2,1)} \\ \highlight{(t_3,2)} \end{pmatrix},\qquad
	\begin{pmatrix} (t_1,1) \\ \highlight{(t_2,4)} \\ \highlight{(t_3,1)} \end{pmatrix},\qquad
	\begin{pmatrix} (t_1,1) \\ (t_2',2) \\ (t_3,3) \end{pmatrix} \right).
	\]
	Note that $\pi_S$ is applied to an index regardless of where that index occurs.
	The index $3$ will be replaced by $1$ both in the entries $(t_2,3)$ of the first two routes and in the entry $(t_3,3)$ of the third.
	Likewise, the index $7$ is replaced with $4$ both in $(t_3,7)$ and in $(t_2,7)$.
	The $p$-route of $v_4$ is left untouched, since it consists of entries of non-$p$-abundant vertices, whose indices are at most $p-1$ to begin with.
	In particular, one and the same entry may be renumbered in one route and kept in another: the entry $(t_3,3)$ becomes $(t_3,1)$ in the third route, where it is abundant, and stays $(t_3,3)$ in the fourth, where it is not.
	\noindent\hfill{\small\textcolor{black}{\textbf{End of example.}}}\par
\end{example*}

The $p$-route pattern of an element tuple of $\RR$, or equivalently a vertex tuple in $T$, now forms the sought-after notion of a ``tuple type''.
Note that, for a fixed tuple length $z$, the number of $p$-route patterns for tuples of length at most $z$ can be bounded by a function of $p$, $z$, $h$, and $\phi$.

Now that we have established our notion of ``tuple type'', it is time to re-visit the intended~Implication~\eqref{eq:equivalence-overview} that we aim to achieve.
There, we wanted to ensure that two tuples with the same $p$-route pattern behave identically in a positive certificate.
We identify the notion of \emph{pureness} as the answer to this question.

Consider a solution $\mathcal{S}$ and a tuple $(v_1,\dots,v_z)$ over $V(\RR)$.
The \emph{solution type} of $(v_1,\dots,v_z)$ is the atomic type of $(v_1,\dots,v_z)$ in $\mathcal{S}$, i.e., it describes the behavior of the tuple in $\mathcal{S}$.
We are now interested in solutions that behave homogeneously on all tuples of the same $p$-route pattern.\footnote{
	Obtaining homogeneous sub-solutions forms the basis of many existing pruning-based algorithms, e.g.,~\cite{DepianFGS25,KolaitisV87}.}
\begin{restatable}[$p$-pure solution]{definition}{defPPureSolution}
	\label{def:p_pure_solution}
	Let $p\in \mathbb{N}$ and $(\RR, \varphi)$ be an \snpmodelchecking{} instance with treedepth decomposition $T$ and a solution $\mathcal{S} = R_1',\dots,R_r'$.
	A $p$-route pattern $\mu$ is \emph{$p$-pure} in $\mathcal{S}$ if any two tuples with $p$-route pattern $\mu$ have the same solution type in $\mathcal{S}$.
	We call $\mathcal{S}$ \emph{$p$-pure} if all $p$-route patterns of tuples of length at most $\Size{\varphi}$ are $p$-pure.
\end{restatable}
A consequence of this definition is that $p$-pure solutions admit a bounded-size description by assigning a solution type to each $p$-route pattern of length at most $\Size{\varphi}$.

\subsection{Extending Pure Solutions}
For pure solutions to serve as positive certificates, they must satisfy two crucial properties: (i) a $p$-pure solution of $\RR^p$ must extend to a solution of $\RR$, and (ii) if $\RR \models \varphi$, then $\RR^p$ admits a $p$-pure solution for sufficiently large $p$.
This section describes the proof technique for Property~(i) that is captured by the following theorem.

\begin{restatable}[Extension theorem]{theorem}{thmExtension}
	\label{thm:extension}
	Let $(\mathcal{R},\varphi = \exists R_1,\dots,R_r \,\forall x_1,\dots,x_k \colon \psi)$ be an instance of \snpmodelchecking{} with treedepth decomposition $T$ of height $h$ and $p \geq \Size{\varphi} \cdot h$.
	If $\mathcal{R}^p$ has a $p$-pure solution $\mathcal{S}^p$, then $\mathcal{R}$ has a solution $R_1', \ldots, R_r'$.

	Moreover, for a computable function $f$, after spending $f(p) \cdot \Size{\mathcal{R}}$ time on preprocessing, for every $i\in [r]$ and $\bar u \in V(\mathcal{R})^{a}$ we can decide whether $R_i'(\bar u)$ holds in $\mathcal{O}(ah \log (ah))$ time, where $a = \arity(R_i)$.
\end{restatable}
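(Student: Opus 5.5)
The plan is to define $R_1',\dots,R_r'$ on $\RR$ by transferring solution types along $p$-route patterns. By pureness, $\mathcal{S}^p$ assigns a well-defined solution type $\sigma(\mu)$ to every $p$-route pattern $\mu$ of tuples of length at most $\Size{\varphi}$ occurring in $\RR^p$. For a tuple $\bar u\in V(\RR)^a$, compute its $p$-route pattern $\mu(\bar u)$ in $(\RR,T)$ and set $R_i'(\bar u)$ to hold iff the atom $R_i(x_1,\dots,x_a)$ belongs to $\sigma(\mu(\bar u))$. For this to make sense, we first show that every pattern realized in $\RR$ by a tuple of length $z\le\Size{\varphi}$ is also realized in $\RR^p$. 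The argument descends $T$ and maps each $v_j$ to a vertex of $\RR^p$ with the same normalized route. Non-abundant entries have index at most $p-1$ and are preserved verbatim by the pruning. Abundant entries use at most $z\cdot h\le\Size{\varphi}\cdot h\le p$ distinct ranks in $I(S)$. Since $\RR^p$ keeps exactly $p$ children of each abundant type, we can realize ranks $1,\dots,|I(S)|$ by the $\pi_S(i)$-th retained child. This is where the hypothesis $p\ge\Size{\varphi}\cdot h$ enters. We also need that vertex $p$-types and abundance are the same in $\RR$ and $\RR^p$, which I take from the properties of the multiplicity structure established earlier.

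Next, we check consistency of this definition: the solution type of $\bar u$ in $(\RR,R')$ must equal $\sigma(\mu(\bar u))$ for all tuples of length up to $k$, not only for the single atoms $R_i$ applied to the whole tuple. The key lemma needed is that patterns behave well under projection and repetition. Specifically, for a subtuple $\bar u'$ obtained by choosing coordinates (with repetitions) from $\bar u$, the pattern $\mu(\bar u')$ is determined by $\mu(\bar u)$ and the coordinate map. Normalization with respect to a subset is just re-ranking, which is monotone, so this holds. The same projection applied to the realizing tuple in $\RR^p$ gives a realization of $\mu(\bar u')$ there. By pureness, the atom $R_i(\bar u')$ in $(\RR,R')$ agrees with the corresponding atom in $\mathcal{S}^p$ on the realizer.

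The ancestor labels in the routes fix the atomic type in $\RR$ of each tuple: the $\lambda$ of the deepest vertices encode the input atoms among the $v_j$, since all related elements lie on a common root path. Equality atoms are fixed as well, since routes identify vertices. Hence every $k$-tuple $\bar u$ of $\RR$ has, in $(\RR,R')$, exactly the full atomic type of its realizer in $(\RR^p,\mathcal{S}^p)$. As $\psi$ is quantifier-free and holds on the realizer, it holds on $\bar u$. So $R'$ is a solution. I expect this step to be the main obstacle: proving rigorously that patterns determine the input atomic type and are compatible with sub-tuples, especially when an entry is abundant in one route and non-abundant in another.

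For the algorithmic part, preprocessing computes $T$-types bottom-up, the ranks $i$ of each vertex among same-type siblings, and the routes as pointers. It then builds the finite table $\sigma$ indexed by patterns, whose size is bounded by a function of $p$; this costs $f(p)\cdot\Size{\RR}$. A query $R_i'(\bar u)$ collects the at most $ah$ abundant indices appearing in the routes of $\bar u$ and sorts them in $\mathcal{O}(ah\log(ah))$ time to obtain $\pi_S$. It then builds the pattern and looks it up in $\sigma$, using a trie or hashing keyed on the pattern, in time linear in its length $\mathcal{O}(ah)$.
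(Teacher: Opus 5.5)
Your proposal is correct and takes essentially the paper's approach. The paper defines $R_i'(\bar v)$ through the realizing vertices of the $p$-route pattern of $\bar v$, which are retained in $\mathcal{R}^p$ because $p \geq \Size{\varphi}\cdot h$; by purity this coincides with your $\sigma(\mu(\bar u))$, and the paper then verifies $\psi$ atom by atom using exactly the facts you sketch (normalization preserves prefixes and re-ranks monotonically on sub-tuples, ancestor labels fix the input atomic type, and retained elements keep their routes in $T^p$), with a preprocessing-plus-sorting-and-trie algorithm matching yours.
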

\begin{proof}[Proof sketch]
	We focus on the existence of $R_1', \ldots, R_r'$ and defer its construction to the full proof in \Cref{sec:metatheorem}.

	Let $R_1^p,\dots,R_r^p$ be a $p$-pure solution of $\RR^p$.
	The overall idea is to use tuples of $\RR^p$ as blueprints for tuples of $\RR$.
	More formally, to build a solution $R_1',\dots,R_r'$ for $\RR$ consider, for every $i \in [r]$,  every tuple $\overline{v} = (v_1,\dots,v_a)$ of elements of \RR, where $a$ is the arity of $R_i$.
	Let $\mu$ be the $p$-route pattern of~$\overline{v}$.
	Then, using the fact that $p \geq \Size{\varphi} \cdot h \geq a \cdot h$, it can be shown that there are elements $z_1,\dots,z_a$ in $\RR^p$ such that $\overline{z} = (z_1,\dots,z_a)$ has $p$-route pattern $\mu$.
	Since $R_1^p,\dots,R_r^p$ is $p$-pure, the concrete choice of $\overline{z}$ matters only up to the requirement on having the $p$-route pattern $\mu$ (which forces all matching tuples to have the same behavior in the solution).
	We let $R_i'(\overline{v})$ hold if and only if $R_i^p(\overline{z})$ holds.

	The difficult part is proving that $R_1',\dots,R_r'$ indeed form a solution for $\RR$.
	Assume that this is not the case.
	Then, there exist elements $v_1,\dots,v_k$ in $\RR$ such that $\psi(R_1',\dots,R_r',v_1,\dots,v_k)$ is false on $\RR$, that is, together with the chosen relations we find a forbidden substructure.
	We now argue that there are elements $w_1,\dots,w_k$ in $\RR^p$ that also form this forbidden substructure together with $R_1^p,\dots,R_r^p$.
	Let $\mu$ be the $p$-route pattern of $(v_1,\dots,v_k)$.
	We can find a tuple $(w_1, \ldots, w_k)$ in $\RR^p$ with the same $p$-route pattern $\mu$.
	Using the definition of $p$-route patterns, we can show that these tuples have the same atomic type, and, therefore, behave the same with respect to input relations and equality.

	The remaining part of the proof is showing that both tuples also behave identically w.r.t.\ the \emph{solution} relations.
	To this end, consider some relation $R_i$ of the solution of arity $a \leq \Size{\varphi}$ and a tuple $(v_{n_1},\dots,v_{n_a})$ of elements of $\RR$, where $n_1,\dots,n_a \in [k]$.
	Let $\mu'$ be the $p$-route pattern of $(v_{n_1},\dots,v_{n_a})$ in $\RR$.
	Recall that $R_i'(v_{n_1},\dots,v_{n_a})$ holds if and only if $R_i^p(z_1,\dots,z_a)$ does, where $(z_1,\dots,z_a)$ has the $p$-route pattern $\mu'$ in $\RR^p$.
	Crucially, the definition of $p$-route patterns ensures that also $(w_{n_1},\dots,w_{n_a})$ has the $p$-route pattern $\mu'$.
	Note that there is no guarantee that $w_{n_1},\dots,w_{n_a}$ forms the tuple $(z_1,\dots,z_a)$ that we used to define $R_i'$.
	Precisely for this reason we required the solution to be $p$-pure.
	Purity gives us the guarantee that $R_i^p(\overline{z})$ holds if and only if $R_i^p(\overline{w'})$ does.
	Hence, $R_i^p(\overline{w'})$ holds if and only if $R_i'(\overline{v'})$ does.
	Put differently, this shows that $\overline{v'}$ and $\overline{w'}$ also behave identically with respect to the relations $R_i'$, respectively $R_i^p$.
	Overall, with this proof strategy we can show that the two tuples $(v_1,\dots,v_k)$ and $(w_1,\dots,w_k)$ behave identically with respect to any atom of $\psi$.
	Therefore, $(w_1,\dots,w_k)$ would already form the forbidden substructure with respect to $R_1^p,\dots,R_r^p$ and $\RR^p$, a contradiction.
\end{proof}

\subsection{Obtaining Pure Solutions}
\label{sec:overview-obtain-pure-solution}
\Cref{thm:extension} can only shine if we can guarantee that $\RR^p$ admits a $p$-pure solution for sufficiently large~$p$ (provided $\RR \models \varphi$).
For the following, let us assume that $\RR$ has a solution.
Thus, also $\RR^p$ has a solution $\mathcal{S}$ as \SNP-satisfaction is closed under taking induced substructures (and $\RR^p$ is an induced substructure of $\RR$).
However, $\mathcal{S}$ might not be $p$-pure.
That is, there may be two tuples $\bar v$ and $\bar w$ with the same $p$-route pattern $\mu$, but different solution types.

A naive attempt to repair $\mathcal{S}$ is to delete vertices that make $\mathcal{S}$ impure, e.g., the vertices from $\bar w$.
However, this alters the structure $\RR^p$ and, consequently, can affect the $p$-types of the remaining vertices.
If we start from a larger structure $\RR'$ with $\RR' \models \varphi$ from which we delete vertices in such a way that we end up with a structure isomorphic to $\RR^p$, this problem would not occur.
In particular, we could guarantee that the $p$-route patterns of $\RR$ and the resulting structures would remain identical.

The natural candidate for $\RR'$ is the $s$-multiplicity structure $\RR^{s}$ for some $s \geq p$ (recall \Cref{sec:overview-certificates}), where $p$ remains fixed and defines the vertex $p$-types.
However, we need to show that $\RR^{s} \models \varphi$.
To this end, let us recall that $\RR \models \phi$ implies that $\RR$ does not contain an obstruction from $\JJ_{h,\phi}$.
Pick for $p$ a value that is at least $M = \max\bigl(\{1\}\cup\{\Size{V(H)}:H\in\JJ_{h,\varphi}\}\bigr)$, i.e., at least the number of vertices of every obstruction in $\JJ_{h,\varphi}$.
Then, $\RR$ and $\RR^{s}$ contain exactly the same induced substructures of size at most $p$.
Moreover, $\RR^{s}$ still has a treedepth decomposition of height $h$, so the same obstruction set applies.
Hence, $\RR^{s}$ does not contain an obstruction from $\JJ_{h,\phi}$, and we have $\RR^{s} \models \phi$:

\begin{restatable}{lemma}{thmBlowupFeasibility}
	\label{thm:blowup_preserves_feasibility}
	Let $(\mathcal{R},\varphi)$ be an instance of \snpmodelchecking{} and $T$ a treedepth decomposition of $\mathcal{R}$ of height $h$.
	Set $M = \max\bigl(\{1\}\cup\{\Size{V(H)}:H\in\JJ_{h,\varphi}\}\bigr)$.
	Then, for all $s \geq p \geq M$ we have that
	$\mathcal{R} \models \varphi$ if and only if $\mathcal{R}^{s} \models \varphi$, where $\mathcal{R}^{s}$ is the $s$-multiplicity structure for $(\mathcal{R}, T, p)$.
\end{restatable}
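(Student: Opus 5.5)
The proof runs through the obstruction set $\JJ_{h,\varphi}$. A structure $\mathcal{S}$ whose Gaifman graph has a treedepth decomposition of height at most $h$ satisfies $\varphi$ if and only if it contains no member of $\JJ_{h,\varphi}$ as an induced substructure. Two facts therefore suffice. First, $\mathcal{R}^{s}$ again admits a treedepth decomposition of height at most $h$, so the same obstruction set governs it. Second, $\mathcal{R}$ and $\mathcal{R}^{s}$ contain, up to isomorphism, exactly the same induced substructures on at most $p$ elements. Since every obstruction has at most $M \le p$ elements, $\mathcal{R}$ contains an obstruction exactly when $\mathcal{R}^{s}$ does, and the equivalence follows.

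\textbf{Step 1: height is preserved.} I would make the two-step construction of $\mathcal{R}^{s}$ precise: prune, then copy bottom-up. When a subtree rooted at a child $c$ of $v$ is copied, the copy is attached as a new child of $v$. Each copied tuple of an input relation is the image of an original tuple under the map that sends copied vertices to their originals and fixes the ancestors of $v$. The new tree $T^{s}$ is then an elimination tree of the Gaifman graph of $\mathcal{R}^{s}$ of the same height $h$. The reason is that every tuple of $\mathcal{R}^{s}$ lies on a root-to-leaf path of $T^{s}$: pruning only restricts relations, and copying maps ancestor chains to ancestor chains. I would also check, by induction from the leaves upward, that each copy has the same vertex $p$-type as its original. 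Finally, all $s \ge p$ children of an abundant type still give the symbol $\star$, so vertex $p$-types computed in $\mathcal{R}^{s}$ agree with those in $\mathcal{R}$.

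\textbf{Step 2: same small induced substructures.} This is the main obstacle. I would prove a more general claim by induction on depth. Take two vertices $u$ and $u'$, one in $(\mathcal{R},T)$ and one in $(\mathcal{R}^{s},T^{s})$, with the same vertex $p$-type and with root paths carrying the same ancestor atomic type. Let $X$ be any set of at most $p$ vertices in the subtree of $u$. Then there is a set $X'$ in the subtree of $u'$ and a bijection $X\to X'$ that, together with the identification of the two root paths, preserves atomic types. The same holds with the roles of the two structures exchanged.

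The inductive step goes as follows. Group the children of $u$ whose subtrees meet $X$ by their vertex $p$-type. For a type $t$ with $\delta(t)\neq\star$, both $u$ and $u'$ have exactly $\delta(t)$ children of type $t$, so they can be matched one to one. For a type $t$ with $\delta(t)=\star$, both vertices have at least $p$ such children. At most $|X|\le p$ of these children are used, so each can be assigned an unused partner. Within each matched pair, recursion applies: the children have the same vertex $p$-type, and their ancestor labels are the same atomic type, because the ancestor label is part of the vertex $p$-type. Distinct child subtrees have no tuples between them in either structure, and the relations to ancestors are fixed by the ancestor labels. So the union of the local bijections preserves the atomic type of the whole set.

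\textbf{Conclusion.} Apply the claim at the two roots, whose types agree by Step 1 and whose ancestor labels are trivially equal. This gives the desired equality of induced substructures on at most $p \ge M$ elements, and the lemma follows by the obstruction argument. The delicate bookkeeping is that $X$ may contain ancestors of other elements of $X$. This is handled by treating $u$ itself as either in $X$ or not, and mapping it to $u'$ in the former case.
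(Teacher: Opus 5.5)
Your proposal is correct and follows essentially the same route as the paper. Like the paper, you combine the obstruction-set characterization with two facts: $T^s$ is a treedepth decomposition of height $h$ (\Cref{lem:multiplicity_properties}), and $\mathcal{R}$ and $\mathcal{R}^s$ have the same induced substructures on at most $p$ elements (\Cref{thm:blowup_structure_same_substructures}). You prove the latter by the same type-by-type matching of children, using $\delta(t)$ for non-abundant types and $|X|\le p$ for abundant ones. The only cosmetic difference is that you run one symmetric induction with identified root paths, whereas the paper builds the map on the ancestor closure $\hat W$, treats the two directions separately, and concludes via \Cref{lem:labels_determine_atp}.
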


\Cref{thm:blowup_preserves_feasibility} shows that we can start from a solution of $\RR^{s}$.
Moreover, we can afford to delete many vertices and still retain an induced copy of $\RR^p$.
It remains to choose our deletion strategy such that we eventually arrive at a $p$-pure solution.
One way of ensuring this is finding a (sub)structure in which the solution type
of a tuple depends only on its $p$-route pattern.
Then, the solution will be $p$-pure ``by design''.
We now apply Ramsey-based arguments to find such a structure.

\begin{restatable}[Existence of a $p$-pure solution]{theorem}{thmExistenceOfPureSolution}
	\label{thm:existenceOfPureSolution}
	Let $(\mathcal{R},\varphi)$ be an instance of \snpmodelchecking{} and $T$ a treedepth decomposition of $\mathcal{R}$ of height $h$.
	Set $M = \max\bigl(\{1\}\cup\{\Size{V(H)}:H\in\JJ_{h,\varphi}\}\bigr)$.
	Then, for all $p \geq \max\{M,\Size{\varphi}h\}$ we have that
	if $\mathcal{R} \models \varphi$, then $\mathcal{R}^{p}$ has a $p$-pure solution, where $\mathcal{R}^{p}$ is the $p$-multiplicity structure for $(\mathcal{R}, T, p)$.
\end{restatable}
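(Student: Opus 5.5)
We fix $p \geq \max\{M,\Size{\varphi}h\}$ and assume $\mathcal{R}\models\varphi$. We pick a very large multiplicity $s \geq p$, to be determined by Ramsey numbers depending only on $p$, $h$ and $\Size{\varphi}$. By \Cref{thm:blowup_preserves_feasibility}, the $s$-multiplicity structure $\mathcal{R}^s$ satisfies $\varphi$, so we fix a solution $\mathcal{S}^s$ of $\mathcal{R}^s$. The plan is to select, top-down in the decomposition, for every vertex $v$ and every type $t$ that is $p$-abundant in $\vtxptype(v)$, a sub-collection of exactly $p$ of the $s$ children of type $t$. We discard all other abundant children with their subtrees and keep all non-abundant children. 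The retained substructure should be isomorphic to $\mathcal{R}^p$, and the restriction of $\mathcal{S}^s$ to it should be $p$-pure. Restriction preserves being a solution because \SNP\ is closed under induced substructures.

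We first verify that such a selection yields a copy of $\mathcal{R}^p$ whose $p$-types are unchanged. Non-abundant counts are untouched, and abundant counts drop from $s$ to $p$, so they are still recorded as $\star$. Ancestor labels are atomic types of root paths and are inherited. By induction from the leaves, every retained vertex keeps its vertex $p$-type. Subtrees of siblings of equal type are isomorphic in the multiplicity construction, so the retained structure is isomorphic to $\mathcal{R}^p$ via a map that preserves types and the sibling order. We must also check that $p$-route patterns computed in the retained structure, with indices re-ranked among retained siblings, match those computed in $\mathcal{R}^s$ up to normalization. This holds because normalization only records the relative order of abundant indices, and re-ranking within a retained subset is order-preserving. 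One subtlety is that $\pi_S$ compares indices across different levels and types. Hence the Ramsey step must make the solution type depend only on the full order-type of the multiset of abundant indices in the tuple, not merely on the order within each sibling group.

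The core step, and the main obstacle, is the Ramsey argument. We want to choose index sets so that for every tuple of length at most $\Size{\varphi}$, its solution type in $\mathcal{S}^s$ depends only on its $p$-route pattern. To make cross-level comparisons meaningful, we first choose a single common index set $J\subseteq[s]$ of size $p$ and select the children with indices in $J$ in every abundant sibling group. This choice is uniform across the whole tree. The abundant index values occurring in a tuple then all lie in $J$, and their normalization is their rank within the set of occurring indices. Consider tuples whose non-abundant data (types, non-abundant indices, route shapes) are fixed. Such a tuple is determined by an assignment of its abundant positions to index values. If at most $m\le \Size{\varphi}h$ distinct values occur, we may view the tuple as a function of an increasing $m$-tuple of elements of $[s]$ together with the pattern telling which position gets which rank. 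We colour each $m$-subset of $[s]$ by the vector of solution types of all tuples realizable from it, ranging over all patterns and non-abundant data. Both the number of colours and the number of patterns are bounded by a computable function of $p,h,\Size{\varphi}$.

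By the finite Ramsey theorem, applied for each $m\le\Size{\varphi}h$ in turn, there is a $J$ of size $p$ all of whose $m$-subsets receive the same colour, provided $s$ exceeds an iterated Ramsey number. The bound $p\ge \Size{\varphi}h$ guarantees that $J$ is large enough to realize every pattern, since a tuple of length at most $\Size{\varphi}$ uses at most $\Size{\varphi}h$ abundant indices. Under homogeneity, two tuples in the retained structure with the same $p$-route pattern have the same non-abundant data and the same rank pattern, and they differ only in which $m$-subset of $J$ realizes the ranks. They therefore receive identical solution types. The delicate points to check carefully are these. First, vertices with the same route entries but under different ancestors are consistently indexed; routes encode ancestors, so the route shape includes this information. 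Second, choosing the same $J$ everywhere is legitimate, since every abundant group in $\mathcal{R}^s$ has exactly $s$ members. Third, the resulting structure, transported via the isomorphism, gives a $p$-pure solution of $\mathcal{R}^p$.
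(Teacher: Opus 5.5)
Your proposal is correct and follows essentially the same route as the paper. You blow up to $\mathcal{R}^s$ via \Cref{thm:blowup_preserves_feasibility}, parametrize the tuples of each pattern by strictly increasing integer tuples, and choose one common index set by iterated hypergraph Ramsey. You then pull the solution back along the embedding of $\mathcal{R}^p$, which maps the $j$-th abundant child to the one with the $j$-th smallest index in that set. The only cosmetic difference is how the Ramsey steps are organized: you bundle all patterns with the same number $m$ of distinct abundant indices into one vector-valued colouring, so there is one Ramsey step per $m$, whereas the paper performs one Ramsey step per pattern with $D$ colours.
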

\begin{proof}[Proof sketch]
	On a high level, we want to find some integer $s \geq p$, such that the following holds.
	Given a solution $\mathcal{S}^{s}$ for the $s$-multiplicity structure $\RR^{s}$, we can prune $\RR^{s}$ down to a structure isomorphic to $\RR^p$ such that $\mathcal{S}^{s}$ restricted and transferred to $\RR^p$ is $p$-pure.
	We start our proof idea by illustrating how we achieve this for a single $p$-route pattern $\mu$.

	Fix some $p$-route pattern $\mu$ of length at most $\Size{\varphi}$.
	Recall that each entry of such a pattern is a pair $(t,i)$ where $t$ is a vertex $p$-type, and $i$ is an integer.
	The entry is abundant if $t$ is $p$-abundant in the vertex $p$-type in the preceding entry (if such an entry exists).
	Let $\beta$ be the number of distinct indices of abundant entries of $\mu$.
	If $\beta=0$, only one tuple has the $p$-route pattern $\mu$ and is therefore pure in every solution.
	So we assume that $\beta>0$.

	By normalization, the distinct indices in abundant entries are precisely $1,\dots,\beta$.
	We now establish the following crucial property: there exists a bijection $\pi$ from strictly increasing integer tuples in $[s]^\beta$ to vertex tuples of $\RR^{s}$ with $p$-route pattern $\mu$.
	Given a vertex tuple $\bar v$ of $\RR^{s}$ with $p$-route pattern $\mu$, let $\gamma$ be the tuple of the $p$-routes of $\bar v$ (and \emph{not} their $p$-route pattern).
	Now, let $\bar n$ be the tuple of length $\beta$ obtained by collecting the distinct indices of abundant entries of~$\gamma$, in increasing order.
	This integer tuple together with $\mu$ determines the original $p$-routes, which uniquely identify $\bar v$.
	Conversely, for any strictly increasing integer tuple $\bar n  = (n_1,\dots,n_\beta) \in [s]^\beta$, replace every abundant entry $(t,i)$ of $\mu$ by $(t,n_i)$, leaving all other entries unchanged.
	The resulting routes identify exactly one vertex tuple $\pi(\bar n)$, whose $p$-route pattern is $\mu$ (see \Cref{lem:pattern_realization}).

	Recall that we want to choose some value $s$ such that we can prune $\RR^{s}$ to a copy of $\RR^p$ and ensure that $\mu$ is $p$-pure in the restricted solution.
	We now use the above-established bijection $\pi$ in our Ramsey arguments as follows.
	Let $D$ be the number of possible solution types for tuples of length $|\mu|$.
	The assumption $p \geq \Size{\varphi}h$ ensures that $\beta \leq p$.
	To achieve our goal, we can set $s = \ramsey_\beta(p,D)$, where $\ramsey_\beta(p,D)$ is the finite hypergraph Ramsey number~\cite{ramseyProblemFormalLogic1930}: it is the smallest integer $\delta$ such that any edge coloring of a complete $\beta$-uniform hypergraph on $\delta$ vertices using $D$ colors contains a monochromatic hyperclique on $p$ vertices.

	To see that this is indeed the correct value for $s$, consider a solution $\mathcal{S}$ for the structure $\RR^{s}$.
	Note that $\mathcal{S}$ exists by \cref{thm:blowup_preserves_feasibility}.
	We create a $\beta$-uniform complete hypergraph $\mathcal{H}$ whose vertex set is the integer set $[s]$.
	For each subset of $[s]$ of size $\beta$, we color the corresponding hyperedge in $\mathcal{H}$ by using the above-sketched bijection $\pi$.
	Concretely, for such a subset $S'$ create the $\beta$-tuple $\bar n$ by ordering the vertices of $S'$ increasingly, and use this tuple to obtain the corresponding vertex tuple~$\pi(\bar n)$ of $\RR^{s}$ with $p$-route pattern $\mu$.
	Then, $\pi(\bar n)$ has some solution type in $\mathcal{S}$, and we color the hyperedge $S'$ by this solution type.

	By Ramsey's theorem, there is some subset $N$ of $[s]$ of size exactly $p$, such that $\mathcal{H}$ induced by~$N$ is monochromatic.
	By the construction of $\mathcal{H}$, the corresponding vertex tuples, or equivalently element tuples, all have the same solution type.
	Now, prune $\RR^{s}$ as follows.
	Starting from the root, let $v$ be a non-pruned vertex of type $t$, and let $t'$ be a type that is abundant in $t$.
	Then, $v$ has exactly $s$ children of type $t'$ in $\RR^{s}$, label them $1$ to $s$.
	Keep exactly those children of type $t'$ whose label belongs to $N$, and delete the others together with their subtrees.
	We use the same set $N$ at every retained vertex and for every abundant type, and leave all children of non-abundant types untouched.

	Since $|N| = p$, the resulting structure $\mathcal{U}$ is isomorphic to $\RR^p$.
	In the end, each tuple of $\RR^p$ with $p$-route pattern $\mu$ corresponds to a tuple of $\mathcal{U}$ whose distinct abundant indices belong to $N$ (and thus have $p$-route pattern $\mu$).
	Therefore, all such tuples have the same solution type in $\mathcal{S}$ restricted to $\mathcal{U}$.
	Transferring this restricted solution along the isomorphism gives a solution for $\RR^p$ that is pure w.r.t.\ $\mu$.

	By nesting the above arguments, we can achieve the result for all relevant $p$-route patterns of~$\RR$.
	More concretely, $s$ is obtained by nesting the hypergraph Ramsey numbers at most $K$ times, where $K$ is the number of $p$-route patterns of tuples of length at most $\Size{\varphi}$ occurring in $\RR$.
	Let $\mu_1, \ldots, \mu_K$ be the $p$-route patterns in $\RR$.
	We fix one solution for $\RR^{s}$ and iteratively choose nested sets $N_i$ of indices.
	More concretely, for $i = 1, \ldots, K$, we apply Ramsey's theorem to the hypergraph restricted to $N_{i - 1}$.
	This way, we can obtain a monochromatic set $N_{i + 1} \subseteq N_i$ w.r.t.\ $\mu_i$.
	Passing from $\mu_{i - 1}$ to $\mu_{i}$ preserves the purity already obtained for the patterns $\mu_j$, $j \leq i - 1$.
	Hence, the index set $N_k$ that we eventually obtain is pure w.r.t.\ all $p$-route patterns.
	We select the Ramsey bounds such that we can ensure that $N_K$ has $p$ indices.
	If we use the above-described pruning procedure for $\RR^{s}$ w.r.t.\ $N_K$, we obtain a $p$-pure solution.

	As to be expected when applying (iterative) Ramsey arguments, the obtained value $s$ is astronomical.
	However, note that the above arguments are purely existential.
	Our algorithm never constructs $\RR^{s}$.
\end{proof}

\subsection{The Algorithm \texorpdfstring{$\boldsymbol{\mathbb{A}}$}{A}}
We are now ready to combine \cref{thm:extension,thm:existenceOfPureSolution} to obtain our \Cref{alg:main-algorithm} for \snpmodelchecking{}.
Given an instance $(\RR, \varphi)$ of \snpmodelchecking, we first compute the primal graph $G$ of $\RR$.
Then, we compute its treedepth decomposition $T$ of height $h \leq \td(\RR) + 1$ using a known algorithm~\cite{ReidlRVS14}.
Afterwards, we iteratively consider, for larger and larger integers $p \geq \Size{\varphi} \cdot h$, the $p$-multiplicity structure $\RR^p$ for $(\RR, T, p)$ until the following holds:
Either the set of all solutions $\mathcal{L}^p$ of $\RR^p$ is empty, in which case we face a no-instance since \SNP-satisfaction is closed under induced substructures, or $\mathcal{L}^p$ contains a $p$-pure solution $\mathcal{S}^p$.
Applying \Cref{thm:extension} to $\mathcal{S}^p$ then allows us to conclude that $\RR\models\varphi$.

While $\mathbb{A}$ is easy to state and its correctness follows almost immediately from previous arguments, its termination is not immediate.
To this end, we show that $\mathbb{A}$ terminates in an iteration for $p \leq \max\{M, \Size{\varphi}(\td(\RR)+1)\}$, where $M = \max\bigl(\{1\}\cup\{\Size{V(H)}:H\in\JJ_{\td(\RR)+1,\varphi}\}\bigr)$.
Consider the iteration for $p = \max\{M, \Size{\varphi}(\td(\RR)+1)\}$, if it is reached.
If $\RR \models \varphi$, then by \cref{thm:existenceOfPureSolution}, we have that $\RR^p$ has a $p$-pure solution, which is found by the algorithm and allows us to correctly detect the yes-instance.
Otherwise, if $\RR \not \models \varphi$, then we know that already $\RR^p$ must contain a member of $\JJ_{\td(\RR)+1,\varphi}$, and the algorithm correctly detects the no-instance.
See \Cref{lem:algorithm-termination-correctness} for the formal proof.

However, although \Cref{lem:algorithm-termination-correctness} yields an upper bound on the number of iterations of $\mathbb{A}$, on its own it is not strong enough to bound the running time of $\mathbb{A}$ by a computable function $f$ in $\Size{\varphi}$ and $\td(\RR)$---after all, \Cref{lem:algorithm-termination-correctness} gives no bound on the size of $M$ in terms of $\Size{\varphi}$ and $\td(\RR)$.
As our last missing ingredient, we show that there exists an algorithm to compute the obstruction set $\JJ_{\td(\RR) + 1,\varphi}$ (up to isomorphism) in time $g(\Size{\varphi}, \td(\RR))$ for some computable function $g$.

The idea to show that this function is computable is now based on the following enumeration argument.
For the enumeration argument, fix a formula $\phi$ and a treedepth bound $d \geq 1$, so that the elimination tree of a structure $\RR$ with $\td(\RR) \leq d$ has height at most $d+1$.
Set $M_d = \max\bigl(\{1\}\cup\{\Size{V(H)}:H\in\JJ_{d+1,\phi}\}\bigr)$ and try increasing values of $p \geq \Size{\varphi} \cdot (d+1)$.
For such a fixed value, we enumerate all structures of matching treedepth and signature such that $\mathcal{R} = \mathcal{R}^p$.
We check if all enumerated structures either do not fulfill $\phi$, or have a $p$-pure solution.
If this is the case, we stop.
The size of such a structure $\RR$ is bounded by $p,d,\Size{\varphi}$, and hence this enumeration is possible for each fixed $p$.
The arguments above guarantee that, once we have reached $p = \max\{M_d,\Size{\varphi}(d+1)\}$, all structures we enumerate either do not fulfill the formula, or have a $p$-pure solution.
Hence, this enumeration algorithm can compute an upper bound on the number of iterations algorithm~$\mathbb{A}$ requires.
In fact, we can even use this idea to compute the obstruction set $\JJ_{d,\phi}$.
Formally, we obtain the following.

\begin{restatable}{theorem}{theoremObstructionSetComputable}
	\label{thm:obstruction-max-size-computable}
	Let $h \geq 1$ be an integer and $\varphi$ be an \SNP formula.
	There is an algorithm that computes the obstruction set $\JJ_{h,\varphi}$ (up to isomorphism) in time $f(h, \Size{\varphi})$ for some computable function $f$.
	In particular, we have $\max\bigl(\{1\}\cup\{\Size{V(H)}:H\in\JJ_{h,\varphi}\}\bigr) \leq g(h, \Size{\varphi})$ for some computable function $g$.
\end{restatable}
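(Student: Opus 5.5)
The plan is to turn the enumeration argument sketched above into a terminating procedure that certifies a threshold $p^\ast$, and then to read off $\JJ_{h,\varphi}$ by brute force among structures of size at most $p^\ast$. Fix $h$ and $\varphi$; the signature of the input relations is determined by $\varphi$. For $p = \Size{\varphi}h, \Size{\varphi}h+1, \dots$ we enumerate, up to isomorphism, all pairs $(\RR,T)$ where $T$ is a treedepth decomposition of height at most $h$ with $\RR = \RR^p$ for $(\RR,T,p)$. That is, no vertex has more than $p$ children of any $p$-type. The number of vertex $p$-types is bounded in $p,h,\Size{\varphi}$ by induction on height, so $\Size{V(\RR)}$ is bounded by a computable $g(p,h,\Size{\varphi})$, and the enumeration is finite and effective. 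For each enumerated pair we decide by brute force whether $\RR\models\varphi$, and if so whether some solution is $p$-pure. Purity is checkable: compute all $p$-route patterns of tuples of length at most $\Size{\varphi}$ and compare solution types. We stop at the first $p$ for which every enumerated structure either violates $\varphi$ or has a $p$-pure solution, and call it $p^\ast$.

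Termination follows from \Cref{thm:existenceOfPureSolution}. Once $p\ge\max\{M,\Size{\varphi}h\}$, every enumerated $\RR$ with $\RR\models\varphi$ admits a $p$-pure solution of $\RR^p=\RR$. This holds since $(\RR^p)^p=\RR^p$, so applying the theorem to $\RR$ itself is legitimate. Thus the loop halts at some $p^\ast\le\max\{M,\Size{\varphi}h\}$, although we never know $M$ in advance.

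The main step is showing that $p^\ast$ really bounds the obstruction size. Let $H\in\JJ_{h,\varphi}$, so $H\not\models\varphi$ while every proper induced substructure of $H$ satisfies $\varphi$, and $H$ has a decomposition $T_H$ of height $h$. Form $H^{p^\ast}$ with respect to $T_H$. It is an induced substructure of $H$ and satisfies $H^{p^\ast}=(H^{p^\ast})^{p^\ast}$ for the induced decomposition; this idempotence needs a small check that the vertex $p^\ast$-types are unchanged by pruning. It therefore appears in the enumeration. If $H^{p^\ast}\models\varphi$, the stopping condition gives it a $p^\ast$-pure solution, and \Cref{thm:extension} (valid since $p^\ast\ge\Size{\varphi}h$) yields $H\models\varphi$, a contradiction. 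Hence $H^{p^\ast}\not\models\varphi$. By minimality of $H$ this forces $H^{p^\ast}=H$, so $\Size{V(H)}\le g(p^\ast,h,\Size{\varphi})$.

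We then output $\JJ_{h,\varphi}$ by enumerating all structures of treedepth at most $h$, under the paper's height convention, with at most $g(p^\ast,h,\Size{\varphi})$ elements. We keep those that violate $\varphi$ while all their one-element deletions satisfy it; by heredity this is exactly minimality, and we remove isomorphic duplicates. The running time is computable in $h,\Size{\varphi}$ because $p^\ast$ is reached after a number of rounds bounded by a function of $h$ and $\varphi$ alone, and each round is a finite computation. The bound $g$ in the second claim is $g(p^\ast,h,\Size{\varphi})$, which is computable since $p^\ast$ is computed. The main obstacle is the bookkeeping that pruning $H$ to $H^{p^\ast}$ yields a structure in the enumerated class, i.e.\ that it is a fixed point of the $p^\ast$-multiplicity construction. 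Failing that, I would instead enumerate all structures of size at most $g(p,h,\Size{\varphi})$ and test the condition on their $p$-multiplicity structures.
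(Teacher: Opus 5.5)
Your argument is the paper's own: the same search for a threshold $p$ at which every enumerated $(\mathcal{X},T_{\mathcal{X}})$ with $\mathcal{X}=\mathcal{X}^p$ either violates $\varphi$ or has a $p$-pure solution, termination via \Cref{thm:existenceOfPureSolution}, and \Cref{thm:extension} plus minimality to force $H^{p^\ast}=H$. The idempotence you flag as the main obstacle is immediate from \multprop{item:mult_routes}, together with preservation of vertex types: every element of $H^{p^\ast}$ keeps its $p^\ast$-route, so the deletion step removes nothing.

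\textbf{Height convention.} Your height bookkeeping is off by one, and this breaks the main step. In the paper a treedepth decomposition is a rooted \emph{tree} on the elements, so a structure of treedepth $h$ is only guaranteed one of height $h+1$. Obstructions can be disconnected. For example, for the universal sentence forbidding an induced $2K_2$, the structure with primal graph $2K_2$ lies in $\JJ_{2,\varphi}$. Yet no rooted tree of height $2$ on its four elements has both edges in its closure. For such $H$ three things fail:
\begin{itemize}
\item your claim that $T_H$ has height $h$ is false;
\item $H^{p^\ast}$ is not among your enumerated pairs;
\item \Cref{thm:extension} would need $p^\ast\ge\Size{\varphi}(h+1)$, which your loop does not guarantee.
\end{itemize}
So your size bound, and hence your final enumeration, may miss these obstructions. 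The fix is what the paper does: enumerate decompositions of height at most $h+1$, start at $p=\Size{\varphi}(h+1)$, and argue termination via the finiteness of $\JJ_{h+1,\varphi}$. Use $\JJ_{h',\varphi}\subseteq\JJ_{h+1,\varphi}$ for $h'\le h+1$, so that \Cref{thm:existenceOfPureSolution} applies to every enumerated pair whatever its exact height.

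\textbf{Dependence on $\Size{\varphi}$.} Your $p^\ast$, and with it your running time and the bound $g(p^\ast,h,\Size{\varphi})$, is a function of $h$ and $\varphi$, not of $h$ and $\Size{\varphi}$ as the statement requires. You need the paper's last step. The procedure halts for every formula, and up to renaming there are only finitely many, computably enumerable, formulas of a given length. So take the maximum over all $\varphi'$ with $\Size{\varphi'}=\Size{\varphi}$.

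\textbf{Total order.} $p$-routes depend on the total order of the domain, so enumerating pairs only up to isomorphism is not quite enough. Either enumerate labelled structures on subsets of $[N]$ with the natural order, as the paper does. Or observe that when $\mathcal{X}=\mathcal{X}^p$, any two sibling orders are related by a type-preserving automorphism of $(\mathcal{X},T_{\mathcal{X}})$, which transports $p$-pure solutions.

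Your final output step differs harmlessly from the paper's: you brute-force all structures up to the size bound and test only one-element deletions. This is correct by \Cref{obs:snp-satisfaction-substructure}.
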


\Cref{thm:obstruction-max-size-computable} bounds the iteration count of $\mathbb{A}$ and thus also the maximum value of $p$ by some computable function $g(\Size{\varphi}, \td(\RR))$.
Combining \Cref{thm:extension,thm:existenceOfPureSolution,thm:obstruction-max-size-computable} with the observation that every iteration of $\mathbb{A}$ and the preprocessing steps run in time $g'(\Size{\varphi}, \td(\RR))\cdot \Size{\RR}$ for some computable function $g'$, allows us to obtain the decision variant of \Cref{thm:treedepth_meta_theorem_simplified}.
For the constructive part, we can retain a $p$-pure solution $\mathcal{S}^p$ in $\mathbb{A}$.
We can use \Cref{thm:extension} to extend it to a solution $\mathcal{S}$ of $\RR$ by iteratively considering all $\arity(\varphi)$-tuples $\overline{u}$ over $V(\RR)$.
Since the preprocessing step from \Cref{thm:extension} can be done once at the beginning and, as discussed above, $p$ can be bounded by the computable function $g$, the second part of \Cref{thm:treedepth_meta_theorem_simplified} follows.

Finally, let us remark that we could, in principle, turn the non-uniform algorithm $\mathbb{U}$ sketched in \Cref{sec:overview-extend-non-uniform} into a uniform algorithm by first computing the obstruction set with \Cref{thm:obstruction-max-size-computable} and then checking whether $\RR$ contains an obstruction as an induced substructure.
Our \Cref{alg:main-algorithm} has the additional advantage that it also constructs a witness for positive instances.

\subsection{Applications}
Although we see the development of \Cref{thm:treedepth_meta_theorem_simplified} as our main result, we also want to highlight some of its applications.
In particular, with our metatheorem we can solve several open problems.

Let us sketch this for the problem of computing the stack number of a graph.
Given a graph~$G$, the stack number of $G$, $\sn(G)$, is the smallest integer $k$ such that there is a total order $<$ on the vertices of $G$, and an edge mapping, called page-assignment, $\pi: E(G) \to [k]$.
The mapping and ordering must be so that there are no four vertices $v_1,v_2,w_1,w_2$ with $v_1 < w_1 < v_2 < w_2$ with $v_1v_2 \in E(G)$, $w_1w_2 \in E(G)$, and $\pi(v_1v_2) = \pi(w_1w_2)$.
{Already deciding whether $\sn(G) \leq 2$ holds is \NP-complete~\cite{Wid.CHC.1982,BK.btg.1979} and the problem is known to be \FPT\ with respect to the vertex integrity of $G$~\cite{DepianFGS25} and the feedback edge number of $G$~\cite{GMO+.TST.2024}.
For the special case of $\sn(G) \leq 2$, the problem is known to be \FPT\ in the treewidth of $G$~\cite{bannisterCrossingMinimization1page2018,GMO+.TST.2024}.
Whether the general problem is \FPT\ with respect to less restrictive parameters, in particular treedepth or treewidth, has repeatedly been posed as an open question~\cite{GMNZ.PCG.2021,BSK.Bop.2025,GMO+.TST.2024,DepianFGS25}.
We answer this question in the affirmative with respect to treedepth by capturing the corresponding decision problem using \SNP{} sentences.
}

For a fixed integer $k$, there is a simple formula $\phi_k$ such that $G \models \phi_k$ if and only if the stack number of $G$ is at most $k$.
Indeed, the following formula suffices:
\begin{align*}
    \varphi_k \coloneq{}& \exists\Lo\,\exists\Pg_1\cdots\exists\Pg_k\;\forall u,v,x,y\;\bigl(
    \mathrm{TotalOrder}(\Lo,u,v,x) \\
    &\qquad\land \mathrm{Assigned}(\Pg_1,\dots,\Pg_k,u,v) \\
    &\qquad\land \bigwedge_{i\in[k]}\mathrm{NoCrossing}(\Lo,\Pg_i,u,v,x,y)\bigr),
\end{align*}
where $\Lo$ and $\Pg_1,\dots,\Pg_k$ are binary relations, with $\Lo(u,v)$ representing $u< v$ and $\Pg_i(u,v)$ indicating that the edge $uv$ belongs to page $i$.
The formula $\mathrm{TotalOrder}(L,u,v,x)$ ensures that $L$ is a strict total order, $\mathrm{Assigned}(P_1,\dots,P_k,u,v)$ ensures that each edge is assigned to some page, and $\mathrm{NoCrossing}(L,P_i,u,v,x,y)$ checks that the obtained assignment does not have any crossings on page $P_i$.
All of these properties are easy to guarantee with forbidden substructures on at most four elements.

Note that the length of the formula above depends on $k$.
However, it is known that a graph with treedepth at most $k$ has stack number at most $k$~\cite{DW.GTG.2007}, and hence we can use the sketched formula to compute the stack number of a graph in time $f(\td(G)) \cdot |V(G)|$ using our algorithm.

In \Cref{sec:applications}, we demonstrate that our metatheorem can also be used to obtain \FPT\ algorithms parameterized by treedepth for further problems where this remained open until now: deciding if the queue number of $G$ is at most $k$, deciding if the track number of $G$ is at most $k$, deciding if the twin-width of a graph $G$ is at most $k$, and deciding if the boxicity of a graph $G$ is at most $k$.
In particular the formula for twin-width is significantly more involved than the one above.

\section{Preliminaries}
\label{sec:prelims}

For an integer $p \geq 1$ we also denote the set $\{1,2,\dots,p\}$ as $[p]$, and we let $[0] = \emptyset$.
Similarly, for integers $i,j$ with $i \leq j$ we set $[i,j] = \{i,i+1,\dots,j\}$.
We use the convention $\max \emptyset = 0$, so that maxima over empty index sets are well defined.
We allow appending another object to a tuple: for a (possibly empty) tuple $(x_1, \dots, x_z)$ and an object $y$ we define $(x_1, \dots, x_z) \diamond y = (x_1, \ldots, x_z, y)$.

\paragraph{Graphs and Hypergraphs.}
We use standard graph and hypergraph terminology, but remind the reader of some notions here.
For some integer $d \geq 1$, a \emph{$d$-uniform hyperclique} is a hypergraph on a vertex set $V$ with edge set $\binom{V}{d}$.
That is, it contains all possible hyperedges of size exactly $d$, and no further edges.
Given a hypergraph $H = (V,E)$, a \emph{$d$-uniform hyperclique in $H$} is a hypergraph $(V',E')$ with $V' \subseteq V$ and $E' \subseteq E$, such that $E' = \binom{V'}{d}$.

\paragraph{Relational Structures.}

\begin{definition}[Relational structures and primal graph]
	\label{def:relational_structures_primal_graph}
	A \emph{relational structure} $\RR = (V, E_1,\dots,E_b)$ consists of a non-empty \emph{domain} $V$, which we also denote $V(\RR)$, followed by \emph{relations} $E_1,\dots,E_b$ over $V$.
	The \emph{arity of relation} $E_i$ is denoted as $\arity(E_i)$ and the arity of a relational structure is $\arity(\RR) \coloneqq \max_{i\in [b]} \arity(E_i)$. Throughout the paper we assume that all relations have positive arity.
	The \emph{size} of $\RR$ is $\Size{\RR} \coloneqq \Size{V} + \sum_{i = 1}^{b} \sum_{\overline{u} \in E_i} \Size{\overline{u}}$.

	The \emph{primal graph} \emph{of a relational structure} is the graph with vertex set $V$ that contains each edge $\{v,w\}\in \binom{V}{2}$ if and only if $v$ and $w$ appear together in at least one tuple of a relation in $\RR$, that is, if there is $i\in [b]$ and $\bar u \in E_i$ such that $v \in \bar u$ and $w \in \bar u$.
\end{definition}

The primal graph of a relational structure is also often called the Gaifman graph of the structure. Since its vertex set is exactly the set $V(\RR)$ of elements, we use the terms ``element'' and ``vertex'' to some degree interchangeably.
For convenience and without loss of generality, we assume that there is a fixed total order on $V(\RR)$, which lets us speak of a smaller or larger element, and of the smallest or largest element of a set.
Concretely, we let it be the order in which the elements first appear in the input encoding of $\RR$, so that this order is available after a single scan of the input.

Moreover, note that the assumption that the arities of all relations are positive is without loss of generality, in the sense that relations of arity $0$ can be encoded by unary relations, and that removing relations of arity $0$ and introducing unary relations does not affect the primal graph.

Two notions on relational structures that we use throughout are the following.

\begin{definition}[Signature and induced substructure]
	\label{def:signature_substructure}
	The \emph{signature} of a relational structure $\RR = (V, E_1,\dots,E_b)$ is the sequence $\bigl(\arity(E_1),\dots,\arity(E_b)\bigr)$ of the arities of its relations.
	For a subset $U \subseteq V$ of the domain, the \emph{substructure of $\RR$ induced by $U$} is the relational structure
	\begin{align*}
		\RR[U] = \bigl(U,\ E_1 \cap U^{\arity(E_1)},\ \dots,\ E_b \cap U^{\arity(E_b)}\bigr),
	\end{align*}
	that is, we keep exactly those tuples all of whose entries lie in $U$.
	A relational structure $\RR'$ is an \emph{induced substructure} of $\RR$ if $\RR' = \RR[U]$ for some $U \subseteq V$; note that $\RR'$ then has the same signature as $\RR$.
\end{definition}

With some abuse of notation, we use the same notation for relation symbols and their interpretations, and write $E(\bar u)$ interchangeably with $\bar u \in E$ when the intended interpretation is clear.
We also write $v \in \bar u$ when $v$ occurs as an entry of $\bar u$; tuples are ordered and may contain repeated entries.

To categorize tuples of elements that are indistinguishable by the relations of a structure, we use the standard model-theoretic notion of an atomic type (see, e.g.,~\cite[p.~226]{LibkinElementsOfFiniteModelTheory}), which we use in the following form.

\defAtomicType*{}

We now provide a basic bound on the number of atomic types.

\begin{lemma}[Number of atomic types]
	\label{lem:atomic_type_bound}
	Let $\RR = (V,E_1,\dots,E_b)$ be a relational structure and let $z \in \mathbb{N}$.
	The number of possible atomic types of tuples of length $z$ is bounded by $2^{z^2 + \sum_{i \in [b]} z^{\arity(E_i)}}$.
\end{lemma}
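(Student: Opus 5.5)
The plan is to bound the number of atomic types by the number of atomic formulas over $x_1,\dots,x_z$. By \Cref{def:atomic_type}, an atomic type of a tuple $(v_1,\dots,v_z)$ is a subset of the set $A_z$ of all atomic formulas over $x_1,\dots,x_z$. Hence the number of possible atomic types is at most $2^{\Size{A_z}}$. It suffices to show $\Size{A_z} \leq z^2 + \sum_{i\in[b]} z^{\arity(E_i)}$, since $x \mapsto 2^x$ is monotone.

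We count $A_z$ in two disjoint groups, following the two forms of atomic formulas in \Cref{def:atomic_type}.

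\begin{enumerate}
\item \emph{Equality atoms.} These are formulas $x_n = x_{n'}$ with $(n,n') \in [z]^2$. There are at most $z^2$ of them.
\item \emph{Relational atoms.} For each $i \in [b]$ with $a = \arity(E_i)$, these are formulas $E_i(x_{n_1},\dots,x_{n_a})$, one for each $(n_1,\dots,n_a) \in [z]^a$. This gives $z^{\arity(E_i)}$ formulas per relation symbol.
\end{enumerate}

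Summing over the relation symbols $E_1,\dots,E_b$ and adding the equality atoms gives the claimed bound on $\Size{A_z}$, and hence the lemma.

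There is no real obstacle. The only point needing care is that distinct index tuples may denote distinct syntactic formulas that are semantically redundant, for example $x_n = x_n$. This only makes the count an overestimate, which is harmless because we want an upper bound. The case $z = 0$ also works: $A_0$ is empty, so there is exactly one type, matching $2^0 = 1$.
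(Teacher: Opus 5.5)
Your proposal is correct and is essentially the paper's own argument. You note that an atomic type is a subset of the at most $z^2 + \sum_{i\in[b]} z^{\arity(E_i)}$ syntactic atomic formulas over $x_1,\dots,x_z$, so there are at most $2$ raised to that count possible types.
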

\begin{proof}
	An atomic type of a tuple of length $z$ is a set of atomic formulas over $x_1,\dots,x_z$.
	For each relation $E_i$ there are $z^{\arity(E_i)}$ such formulas $E_i(x_{n_1},\dots,x_{n_a})$, one per tuple of indices, and there are $z^2$ equality formulas $x_n = x_{n'}$.
	Hence there are $z^2 + \sum_{i \in [b]} z^{\arity(E_i)}$ atomic formulas in total, and an atomic type is determined by the subset of them that holds.
\end{proof}

\paragraph{\SNP.}
An \SNP{} sentence $\varphi$ on a relational structure $\RR = (V, E_1,\dots,E_b)$ is a second-order logic formula of the form $\exists R_1,\dots,R_r \,\forall x_1,\dots,x_k \colon \psi(R_1,\dots,R_r,x_1,\dots,x_k)$,
where $\psi$ is a quantifier-free formula over relations $R_1,\dots,R_r,E_1,\dots,E_b$ and elements $x_1,\dots,x_k$, built from the atoms $E_i(\bar u)$ for $i \in [b]$, $R_j(\bar u)$ for $j \in [r]$, and $x_n = x_{n'}$ for $n,n' \in [k]$.
In particular,
the domain of each variable~$R_i$ is the set of relations over $V$,
and the domain of each variable $x_i$ is $V$.

For $\bar v = (v_1,\dots,v_k)$, we write $\RR \models \psi(R'_1,\dots,R'_r,\bar v)$ to mean that $\psi$ holds in the expansion $(\RR,R'_1,\dots,R'_r)$, where each $R_i$ is interpreted as $R'_i$ and $x_j$ is assigned $v_j$.
The input relations are left implicit in the arguments of $\psi$, and we omit $\RR \models$ when the underlying structure is clear.

The quantifier order in $\varphi$ that is prescribed by \SNP\ allows for the following observation:
\begin{observation}
	\label{obs:snp-satisfaction-substructure}
	Let $\varphi$ be an \SNP\ sentence over a relational structure $\RR$ such that $\RR\models \varphi$.
	Moreover, let $\RR' \subseteq \RR$ be an induced substructure of $\RR$.
	Then, $\RR' \models \varphi$.
\end{observation}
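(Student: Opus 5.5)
The plan is to unfold the definitions and restrict a witness for $\RR$ to the smaller domain. Write $\varphi = \exists R_1\cdots\exists R_r\,\forall x_1\cdots\forall x_k\,\psi$ with $\psi$ quantifier-free. Since $\RR \models \varphi$, there are relations $R_i' \subseteq V(\RR)^{\arity(R_i)}$, $i \in [r]$, such that $\RR \models \psi(R_1',\dots,R_r',\bar v)$ for every $\bar v \in V(\RR)^k$. Let $\RR' = \RR[U]$ for some $U \subseteq V(\RR)$, as in \Cref{def:signature_substructure}. As candidate witness for $\RR'$ I would take the restrictions $R_i'' \coloneqq R_i' \cap U^{\arity(R_i)}$, mirroring how the input relations are restricted in $\RR[U]$.

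The key step is to show that for every tuple $\bar v \in U^k$, every atom of $\psi$ has the same truth value in the expansion $(\RR', R_1'',\dots,R_r'')$ as in $(\RR, R_1',\dots,R_r')$. I would check this atom by atom. Consider an input atom $E_j(x_{n_1},\dots,x_{n_a})$. The tuple $(v_{n_1},\dots,v_{n_a})$ lies in $U^a$, so it is in $E_j \cap U^a$ if and only if it is in $E_j$. The same argument applies to a solution atom $R_j(x_{n_1},\dots,x_{n_a})$, using $R_j''$. An equality atom $x_n = x_{n'}$ is evaluated on the same elements in both structures, so it is unchanged.

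By induction on the Boolean structure of $\psi$, it follows that $\RR' \models \psi(R_1'',\dots,R_r'',\bar v)$ holds if and only if $\RR \models \psi(R_1',\dots,R_r',\bar v)$. The right-hand side is true by assumption. Since $\bar v \in U^k$ was arbitrary, the universal part holds in $\RR'$, and hence $\RR' \models \varphi$.

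There is no real obstacle here. The only point needing care is that the universal quantifiers of $\varphi$ now range over $U$ rather than $V(\RR)$. This causes no trouble because $\psi$ is quantifier-free: every element it mentions is one of the assigned $v_n$, all of which lie in $U$. This is exactly where the \SNP{} quantifier prefix is used. With an inner existential first-order quantifier, as allowed in \ESO, the argument would fail.
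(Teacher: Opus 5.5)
Your proof is correct and follows the paper's own argument. The paper does not give a formal proof of this observation; the introduction justifies it by restricting the witness relations to the remaining domain and noting that every universal constraint is then preserved, and your atom-by-atom check makes that argument explicit.
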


We call $\arity(\varphi) = \max_{i\in [r]} \arity(R_i)$ the arity of $\varphi$, and assume without loss of generality that each $\arity(R_i)$ is positive.
The \emph{length} $\Size{\varphi}$ of a formula $\varphi$ is the sum, over all \emph{occurrences} of an atom in the inner formula $\psi$, of the number of (not necessarily distinct) element variables of that atom; that is, an occurrence of an atom $x_n = x_{n'}$ contributes length $2$, and occurrences of atoms $E_i(\bar u)$ and $R_j(\bar u)$ contribute length $\arity(E_i)$ and $\arity(R_j)$, respectively.
Without loss of generality, we assume that each variable $R_i$ and $x_i$ is mentioned in the inner formula $\psi$, and that negations in $\psi$ occur only directly in front of atoms.
The latter two conventions ensure that $\varphi$ has an encoding of size $\BigO{\Size{\varphi}\log\Size{\varphi}}$ and that, up to logical equivalence, there are only finitely many \SNP\ formulas of a given length.
Therefore,
\[
	\Size{\varphi} \geq \sum_{i \in [r]} \arity(R_i) \geq \max\{r,\arity(\varphi)\}
	\qquad\text{and}\qquad
	\Size{\varphi} \geq k .
\]

We now define the problem we study.

\defparproblem{\snpmodelchecking}{Relational structure $\mathcal{R}$, \SNP{} sentence $\varphi$}{$\td(\mathcal{R})+\Size{\varphi}$}{Does $\mathcal{R} \models \varphi$?}

Since the problem is trivial otherwise, we always assume that the signatures of $\mathcal{R}$ and $\varphi$ match.
We also assume that $\mathcal{R}$ does not use any relations that do not appear in the formula (clearly, one could otherwise remove them from $\mathcal{R}$), and thus $\Size{\varphi} \geq \sum_{i \in [b]} \arity(E_i) \geq \max\{b,\arity(\RR)\}$.
We also define the natural notion of a solution, and a related solution notion that only concerns some subsets of the guessed relations.

\begin{definition}[Solution and $Q$-solution]
	\label{def:sol_q_sol}
	Let $\varphi$ be an \SNP{} sentence with the relation variables $R_1,\dots,R_r$, and let $\mathcal{R}$ be a relational structure.
	A \emph{solution} is an instantiation of $R_1, \ldots, R_r$ under which $\mathcal{R} \models \varphi$.
	For $Q\subseteq [r]$, an instantiation of $(R_i)_{i\in Q}$ is a \emph{$Q$-solution} if there is a solution to the instance that agrees with $(R_i)_{i\in Q}$. Equivalently, a $Q$-solution is the projection to $(R_i)_{i\in Q}$ of a solution.
\end{definition}

\paragraph{Treedepth.}
A \emph{rooted forest} $\mathcal F$ is a disjoint union of rooted trees.
For a vertex~$x$ in a tree~$T$ of $\mathcal F$, the \emph{depth} of~$x$ in $\mathcal F$ is the number of vertices on the path from the root of~$T$ to~$x$.
The \emph{height of a rooted forest} is the maximum depth of a vertex of the forest.
Let $V(T)$ be the vertex set of any tree $T \in \mathcal{F}$.
Let the \emph{closure} of a rooted forest~$\mathcal{F}$ be the graph $\clos({\mathcal{F}})$ with the vertex set $V(\clos({\mathcal{F}}))=\bigcup_{T \in \mathcal{F}} V(T)$ and the edge set $E(\clos({\mathcal{F}}))=\{xy \mid \text{$x \neq y$ and $x$ is an ancestor of $y$ in some $T\in\mathcal{F}$}\}$.

The \emph{treedepth} of a graph~$G$ ($\td(G)$) is the minimum height of any rooted forest $\mathcal{F}$ such that $G \subseteq \clos(\mathcal{F})$.

A \emph{treedepth decomposition} of a graph $G$ is a rooted tree $T$ such that $G \subseteq \clos(T)$.\footnote{This definition is slightly non-standard and increases the depth by up to $1$ compared to elimination forests, but it is more convenient later on.}
We can use the algorithm of Reidl, Rossmanith, S{\'{a}}nchez Villaamil, and Sikdar~\cite{ReidlRVS14} to compute a treedepth decomposition of $G$ of height at most $\td(G) + 1$ in time $2^{\BigO{\td(G)^2}}\cdot \Size{V(G)}$.
We define the treedepth $\td(\RR)$ of a relational structure $\RR$ as the treedepth of its primal graph, and similarly define a treedepth decomposition of $\RR$ to be a treedepth decomposition of the primal graph of~$\RR$.
Without loss of generality we assume that a treedepth decomposition of a relational structure contains no vertices beyond the elements of that structure.

\section{Formally Proving the Metatheorem}
\label{sec:metatheorem}

We now proceed to the proof of our main theorem, that is, the full version of \cref{thm:treedepth_meta_theorem_simplified}.
Before we state it, let us quickly explain the purpose of $Q$-solutions, which give the more fine-grained guarantees mentioned in \cref{remark:fine_grained_guarantees}.
If $\RR \models \exists R_1,\dots,R_r \forall x_1,\dots,x_k \psi(R_1,\dots,R_r,x_1,\dots,x_k)$ our algorithm can output a solution.
Naturally, the size of such a solution depends on the maximum arity of the relations $R_1,\dots,R_r$.
However, for a concrete application, it is conceivable that some of the relations $R_1,\dots,R_r$ are not of interest, in so far as they could be auxiliary relations needed to encode the problem as an \SNP{} sentence, but do not correspond to the actual witnesses one is aiming to obtain.
Hence, it is preferable to have an algorithm that can output only those relations which are of actual concern, and this is exactly what we do with the $Q$-solutions.

\begin{restatable}{mtheorem}{mainTheorem}
	\label{main_thm:treedepth_meta_theorem}
	There is an \Cref{alg:main-algorithm} that takes an \SNP{} sentence $\varphi$ with relation variables $R_1,\dots,R_r$ and a relational structure~$\mathcal{R}$ as input, runs in time $f(\Size{\varphi}, \td(\mathcal{R})) \cdot {\Size{\mathcal{R}}}$ for some computable function $f$, and decides whether $\mathcal{R} \models \varphi$.
	Moreover, if $\mathcal{R} \models \varphi$, then for any given non-empty $Q\subseteq [r]$ the algorithm can output a $Q$-solution of $\mathcal{R} \models \varphi$ in time $f(\Size{\varphi}, \td(\mathcal{R})) \cdot {\Size{\mathcal{R}}}^{\max_{i\in Q} \arity(R_i)}$.
\end{restatable}

\subsection{Types, Patterns, and Basic Properties}

We begin by establishing the basic definitions and properties that will later allow us to prove the theorem. The key concept of our algorithm is that two tuples of elements which are equivalent with respect to the input relational structure are also equivalent with respect to a solution. To properly define this, we first establish equivalence classes of individual elements, named \emph{vertex types}, and then use them to define equivalence classes of tuples of elements, which will be known as \emph{route patterns}.

\paragraph*{Vertex Types.}
We characterize the type of an individual element by two components: (i) the way it and its ancestors in the treedepth decomposition $T$ interact with the input relations and (ii) the counts (up to some value $p$) of how many children of each type it has in $T$.

\defVertexType*{}

The value $\star$ of a subtree label marks exactly those child types that appear at least $p$ times.
Intuitively, while a solution may treat vertices of a type differently when there are only a few of them, we will find that it does not need to distinguish between them once there are too many (say, at least $p$).
These vertices thus play an important role in defining equivalence classes, and accordingly we give them a name of their own.

\begin{definition}[$p$-abundant type and $p$-abundant vertex]
	\label{def:p_abundant}
	Let $\RR$ be a relational structure with treedepth decomposition $T$, and let $p \in \mathbb{N}$.
	A vertex $p$-type $t$ is \emph{$p$-abundant in} a vertex $p$-type $t' = (\delta,\lambda)$ if $t$ lies in the domain of $\delta$ and $\delta(t) = \star$.
	A vertex $v \in V(T)$ with parent $w$ is \emph{$p$-abundant} if $\vtxptype(v)$ is $p$-abundant in $\vtxptype(w)$.
\end{definition}

Unfolding the definition of $\delta_w$, a vertex $v$ with parent $w$ is $p$-abundant precisely if $w$ has at least~$p$ children whose vertex $p$-type is $\vtxptype(v)$, that is, if $v$ has at least $p-1$ siblings of its own vertex $p$-type.

One reason why the vertex types are so useful is that the ancestor label they contain already pins down, for any tuple of elements and given their pairwise ancestor-descendant relations, in which relations of the input structure $\RR$ it occurs.
That is, it determines the atomic type of that tuple, even across distinct input structures.

\begin{lemma}[Ancestor labels determine the relations]
	\label{lem:labels_determine_atp}
	Let $z \in \mathbb{N}$, let $\RR$ and $\RR'$ be relational structures over the same signature with treedepth decompositions $T$ and $T'$, and let $\bar v = (v_1,\dots,v_z) \in V(\RR)^z$ and $\bar w = (w_1,\dots,w_z) \in V(\RR')^z$ be such that for all $m,m' \in [z]$
	\begin{bracketenumerate}
		\item\label{item:labels_depth} $v_m$ has the same depth in $T$ as $w_m$ has in $T'$,
		\item\label{item:labels_label} $v_m$ and $w_m$ have the same ancestor label, and
		\item\label{item:labels_ancestor} $v_m$ is an ancestor of $v_{m'}$ in $T$ if and only if $w_m$ is an ancestor of $w_{m'}$ in $T'$; here and below we call a vertex an ancestor of itself.
	\end{bracketenumerate}
	Then $\atp_{\RR}(\bar v) = \atp_{\RR'}(\bar w)$.
\end{lemma}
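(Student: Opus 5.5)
We compare the two atomic types one atomic formula at a time. There are two kinds of atomic formulas, equalities $x_n = x_{n'}$ and relation atoms $E_i(x_{n_1},\dots,x_{n_a})$, and for each kind we show that $\RR \models \alpha(\bar v)$ if and only if $\RR' \models \alpha(\bar w)$. The central observation concerns a vertex $v_m$ of depth $d$ with root-to-$v_m$ path $P_m = u_1,\dots,u_d$. Its ancestor label $\atp_{\RR}(u_1,\dots,u_d)$ records every atomic fact among the ancestors of $v_m$, including $v_m$ itself. By~(\ref{item:labels_depth}) and~(\ref{item:labels_label}), the corresponding path $P'_m = u'_1,\dots,u'_d$ in $T'$ ending at $w_m$ has the same length and the same atomic type.

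Equalities are handled first. Suppose $v_n = v_{n'}$. Then $v_n$ is an ancestor of $v_{n'}$ and vice versa, so by~(\ref{item:labels_ancestor}) $w_n$ and $w_{n'}$ are mutual ancestors, which forces $w_n = w_{n'}$. The converse direction is symmetric. Note that this step uses condition~(\ref{item:labels_ancestor}) with reflexive ancestry.

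Relation atoms are the main step. Consider $\alpha = E_i(x_{n_1},\dots,x_{n_a})$ and suppose $(v_{n_1},\dots,v_{n_a}) \in E_i$. Every pair of entries of this tuple is either equal or adjacent in the primal graph, so since $G \subseteq \clos(T)$ the entries are pairwise comparable in the ancestor order of $T$. They therefore all lie on a single root-to-leaf path, namely the path $P_m$ of the deepest entry $v_m$ with $m \in \{n_1,\dots,n_a\}$. Each entry $v_{n_j}$ is an ancestor of $v_m$, so it equals $u_{d_j}$, where $d_j$ is the depth of $v_{n_j}$. Hence $E_i(x_{d_1},\dots,x_{d_a})$ belongs to the ancestor label of $v_m$.

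We then transfer this to $T'$. By~(\ref{item:labels_ancestor}), each $w_{n_j}$ is an ancestor of $w_m$. By~(\ref{item:labels_depth}), $w_{n_j}$ has depth $d_j$, so $w_{n_j} = u'_{d_j}$ on the path $P'_m$. Since $w_m$ has the same ancestor label as $v_m$, we obtain $(w_{n_1},\dots,w_{n_a}) \in E_i$ in $\RR'$. The converse direction is again symmetric.

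The main obstacle is the step where pairwise comparability of the tuple entries is used to place them on one path, together with identifying each entry with the correct index $d_j$ of the path via its depth. This is exactly where conditions~(\ref{item:labels_depth}) and~(\ref{item:labels_ancestor}) are needed. Both directions rely on $T$ and $T'$ being treedepth decompositions.
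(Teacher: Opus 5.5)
Your proposal is correct and follows essentially the same argument as the paper. Equality atoms are handled via mutual ancestry under condition (3). Relation atoms are handled by placing the tuple on the root path of its deepest entry $v_m$, reading $E_i(x_{d_1},\dots,x_{d_a})$ off the ancestor label of $v_m$, and transferring it to $w_m$ using conditions (1)--(3), with the converse by symmetry.
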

\begin{proof}
	Since two vertices are equal exactly if each is an ancestor of the other, property~(\ref{item:labels_ancestor}) gives $v_m = v_{m'}$ if and only if $w_m = w_{m'}$, so $\bar v$ and $\bar w$ agree on the equality part of the atomic type of \Cref{def:atomic_type}.

	Now, fix some relation $E_i$ of arity $a$ and indices $n_1,\dots,n_a \in [z]$ such that $E_i(v_{n_1},\dots,v_{n_a})$ holds, and let $d_j$ be the depth of $v_{n_j}$.
	Then the distinct vertices among $v_{n_1},\dots,v_{n_a}$ form a clique in the primal graph of $\RR$, so, as $T$ is a treedepth decomposition, they lie on some path $(u_1,\dots,u_d)$ in $T$, where $u_1$ is the root and $u_d$ is the deepest among them.
	By the definition of the ancestor label $\lambda_{u_d} = \atp_{\RR}(u_1,\dots,u_d)$ in \Cref{def:vertex-type} we thus obtain $E_i(x_{d_1},\dots,x_{d_a}) \in \lambda_{u_d}$.

	By properties~(\ref{item:labels_depth}) and~(\ref{item:labels_ancestor}), the tuples $(v_{n_1},\dots,v_{n_a})$ and $(w_{n_1},\dots,w_{n_a})$ agree on the depths $d_1,\dots,d_a$ and on ancestor-descendant relationships, which shows that also $w_{n_1},\dots,w_{n_a}$ lie on a path from the root in $T'$, and that $w_{n_j}$ is the vertex of depth $d_j$ on it.
	Their deepest entries sit at the same position, say $\ell$; note that $v_{n_\ell} = u_d$, and that the ancestor labels of $v_{n_\ell}$ in $(\RR,T)$ and of~$w_{n_\ell}$ in $(\RR',T')$ are atomic types over the same signature, which coincide by property~(\ref{item:labels_label}).
	Hence $E_i(x_{d_1},\dots,x_{d_a})$ also lies in the ancestor label of $w_{n_\ell}$, that is, $\RR'$ satisfies $E_i$ on the vertices of depths $d_1,\dots,d_a$ of the root-to-$w_{n_\ell}$ path, so
	\[E_i(v_{n_1},\dots,v_{n_a}) \Longrightarrow E_i(w_{n_1},\dots,w_{n_a}).\]
	As the hypotheses are symmetric in $(\RR,T,\bar v)$ and $(\RR',T',\bar w)$, the tuples agree on all relations, concluding the proof.
\end{proof}

Coming up with this equivalence notion of vertex types is only useful if they reduce the number of distinct objects we need to speak about. This is captured by the following lemma.

\begin{lemma}[Enumerating vertex $p$-types]
	\label{lem:number_vertex_types}
	There is a computable function $g$ such that for any instance $(\RR,\varphi)$ of \snpmodelchecking{},  any treedepth decomposition $T$ of $\RR$ of height $h$, and any integer $p \geq 1$
	the number $\Size{\{\vtxptype(v)\mid v\in V(\RR)\}}$ of distinct vertex $p$-types is bounded by $g(h,p,\Size{\varphi})$.
\end{lemma}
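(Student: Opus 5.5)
We bound the number of vertex $p$-types by induction on the \emph{height of the subtree} below a vertex, or more precisely by a stratification according to depth. For $d\in[h]$ and $j\in[0,h-d]$, let $N_{d,j}$ denote the number of distinct vertex $p$-types of vertices $v$ at depth $d$ whose subtree in $T$ has height at most $j+1$. Every vertex falls into one of these classes, so it suffices to bound each $N_{d,j}$ computably in $(h,p,\Size{\varphi})$ and then sum over the at most $h^2$ pairs $(d,j)$.

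First we bound the ancestor labels. For a vertex $v$ at depth $d\le h$, the label $\lambda_v$ is an atomic type of a $d$-tuple over the signature of $\RR$. By \Cref{lem:atomic_type_bound}, there are at most $2^{d^2+\sum_{i\in[b]} d^{\arity(E_i)}}$ such types. Since $b\le\Size{\varphi}$ and $\arity(E_i)\le\Size{\varphi}$, this is at most $A(h,\Size{\varphi}) \coloneqq 2^{h^2+\Size{\varphi}h^{\Size{\varphi}}}$. The bound is computable and independent of $\RR$.

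Next we handle the subtree labels by induction on $j$.
\begin{itemize}
\item For $j=0$, the vertex is a leaf with $\delta_v=\varepsilon$, so $N_{d,0}\le A$.
\item For $j\ge 1$, the children of $v$ lie at depth $d+1$ and have subtrees of height at most $j$. Hence the set $\mathcal{T}$ of child types is a subset of a fixed set of size at most $N_{d+1,j-1}$.
\end{itemize}
The subtree label $\delta_v$ is determined by choosing its domain $\mathcal{T}$ and assigning a value in $[0,p-1]\cup\{\star\}$ to each element. This gives at most $(p+2)^{N_{d+1,j-1}}$ possibilities, where the extra symbol encodes ``not in the domain''. Therefore
\[
N_{d,j}\le A\cdot (p+2)^{N_{d+1,j-1}}.
\]
Unrolling this recursion at most $h$ times yields a tower of exponentials of height $\BigO{h}$ in $p$ and $A$, which is a computable $g(h,p,\Size{\varphi})$.

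The only subtle point is that vertex $p$-types are defined recursively and could in principle form an unbounded family. The recursion above handles this, because the depth increases along children and is capped at $h$, so the nesting of subtree labels has depth at most $h$.
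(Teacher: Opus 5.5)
Your proof is correct and is essentially the paper's argument. You bound the ancestor labels by $2^{h^2+\Size{\varphi}h^{\Size{\varphi}}}$ via \Cref{lem:atomic_type_bound}, bound the number of types at depth $d$ by that quantity times an exponential in the number of types at depth $d+1$, and sum over depths. Two minor differences are worth noting. First, your second index $j$ is redundant: a vertex at depth $d$ automatically has subtree height at most $h-d+1$, so taking $j=h-d$ recovers the paper's recursion $N_d = N_h\cdot(p+1)^{N_{d+1}}$. Second, your base $p+2$ is only slightly looser than the paper's $p+1$, which uses that $\delta_v(t)=0$ never occurs for $t$ in the domain of $\delta_v$.
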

\begin{proof}
	Consider some relational structure $\RR = (V,E_1,\dots,E_b)$, formula $\varphi$ of length at most $L$, and treedepth decomposition of $\RR$ of height $h$.
	Since we assume that each relation $E_i$ appears in $\varphi$, we have $b \leq L$ and $\max_{i \in [b]} \arity(E_i) \leq L$.

	By \Cref{lem:atomic_type_bound} the number of possible atomic types of a tuple of length $z$ is bounded by $2^{z^2 + \sum_{i \in [b]} z^{\arity(E_i)}}$.
	The vertex $p$-type only records the atomic types of vertices lying on a path in~$T$ starting at the root, and hence the length of a considered tuple is bounded by $h$.
	For a vertex $v$ of $T$, there are thus at most $2^{h^2 + L h^{L}}$ possibilities for the ancestor label of $v$.

	For each $i \in [h]$ we define a number $N_i$ that bounds the number of different possible vertex $p$-types of vertices at depth $i$ in $T$.

	Since a leaf of $T$ records no information about its children (since it has none), we can set $N_h = 2^{h^2 + L h^{L}}$.
	For $i \in [1,h-1]$ we set $N_{i} = N_h \cdot (p+1)^{N_{i+1}}$.
	This bound is also correct, since a non-leaf vertex again has at most $N_h$ possible ancestor labels, and it additionally stores the number of children of each type up to $p$.
	The final bound on the number of possible types is then given by $g(h,p,L) = \sum_{i \in [h]} N_i$, which is computable.
\end{proof}

\paragraph*{Routes and Route Patterns.}

Having established an equivalence notion on the vertex level, we proceed to define a more complex equivalence notion concerning tuples of vertices.
The motivation for this is evident, when considering \SNP{} sentences: such a sentence asks if there is some solution such that no forbidden pattern of size at most $k$ exists, where the inner formula describes this forbidden pattern.
It is therefore natural to consider equivalence classes of tuples of length at most~$k$. These will prove useful because we construct a solution in which a whole class behaves alike: if one tuple of a class does not witness the forbidden pattern, then no tuple of that class does.

Before formally describing this equivalence class, we define an alternative way to uniquely identify a vertex $v$ of $T$, the \emph{$p$-route} of $v$.
The idea is that we can find a vertex $v$ by describing which path from the root of $T$ one must follow to reach it, where the intermediate vertices of the path are identified by their vertex $p$-type, and their relative position among siblings with the same vertex $p$-type.
For the latter recall that we assume there is a fixed total order on $V(\RR)$.

\defRoutes*{}

Observe that a $p$-route identifies at most one vertex of $T$: two vertices with the same $p$-route have, by induction on the depth, the same parent, and then the same index among the children of that parent of their common vertex $p$-type, so they are equal.
The next lemma shows that we can efficiently determine the vertex $p$-type and $p$-route of all vertices.

\begin{lemma}[Computing vertex $p$-types and $p$-routes]
	\label{lem:computing_types_and_routes}
	There is a computable function $f$ such that for any instance $(\RR,\varphi)$ of \snpmodelchecking{}, any treedepth decomposition $T$ of $\RR$ of height $h$, and any integer $p \geq 1$, the vertex $p$-types and the $p$-routes of all elements of $\RR$ can be computed in time $f(h,p,\Size{\varphi}) \cdot \Size{\RR}$.
\end{lemma}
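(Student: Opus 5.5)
We compute the ancestor labels by one top-down traversal of $T$, the subtree labels by one bottom-up traversal, and the $p$-routes by a second top-down traversal. Throughout, we represent a vertex $p$-type as a canonical bounded-size object. By \Cref{lem:number_vertex_types} there are at most $g(h,p,\Size{\varphi})$ types, and each has an encoding whose size is bounded by a computable function of $h,p,\Size{\varphi}$. Hence comparing two types, or looking up a type in a table, costs time depending only on these parameters.

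\emph{Ancestor labels.} First we compute, for every vertex $v$, its parent, its depth, and its root-to-$v$ path $w_1,\dots,w_d$ with $d\le h$. For every tuple $\bar u\in E_i$, the distinct entries of $\bar u$ form a clique in the primal graph. They therefore lie on one root-to-leaf path, so we can determine the deepest entry $u^\ast$ of $\bar u$, using depths, in time $\BigO{h\cdot\Size{\bar u}}$. We attach to $u^\ast$ the atom $E_i(x_{d_1},\dots,x_{d_a})$, where $d_j$ is the depth of the $j$-th entry; this is a single pass over $\RR$. The atomic type $\atp_\RR(w_1,\dots,w_d)$ then consists of the equality atoms $x_n=x_n$ together with the relation atoms attached to the vertices $w_1,\dots,w_d$. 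The reason is that every true atom on the path has a deepest entry on the path, and that entry is the vertex to which the atom was attached. So a top-down traversal computes $\lambda_v$ from $\lambda_{\text{parent}(v)}$ plus the atoms attached at $v$. Each such step costs $f(h,\Size{\varphi})$, since there are at most $h^2+\Size{\varphi}h^{\Size{\varphi}}$ candidate atoms. The total cost is $f\cdot\Size{\RR}$.

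\emph{Subtree labels.} We process the vertices in order of decreasing depth, for instance after bucketing them by depth. Leaves receive $\varepsilon$. For an inner vertex $v$, all of its children already have their types computed. We scan the children, count how many have each type using a table indexed by the bounded set of types, and cap the counts at $\star$ once they reach $p$. This yields $\delta_v$ in time $f\cdot(1+\Size{\mathcal{C}_v})$. Summed over all vertices this is $f\cdot\Size{V}$.

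\emph{Routes.} For each vertex $w$ we group its children by type and sort each group by the fixed total order on $V$. To keep the running time linear, we avoid comparison sorting: we perform one global pass over $V$ in the fixed order and append each vertex to the list of its parent and type. This produces every group already sorted and gives each child its index $i$. We then set $\rt(v)=\rt(w)\diamond(\vtxptype(v),i)$ top-down, so that each route has length at most $h$.

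One step needs care. Indices $i$ may be large integers, but they are at most $\Size{V}$ and fit in a machine word, so writing down a route costs $\BigO{h}$ words plus the type encodings. The main obstacle is bookkeeping rather than mathematics. One point is to justify that a tuple can be charged to its deepest element in linear time. The other is to ensure that type identifiers are canonical: we achieve this by hashing types into a dictionary of bounded size, or by assigning them integer IDs level by level, so that children grouping never requires comparisons costing more than $f(h,p,\Size{\varphi})$.
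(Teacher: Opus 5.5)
Your proposal is correct and follows essentially the same route as the paper's proof: attach each relation tuple's atom to its deepest entry and build ancestor labels top-down; compute subtree labels bottom-up by counting children per type and capping at $\star$; obtain route indices from one scan of $V$ in the fixed order, then a top-down pass. Your remarks on canonical type identifiers and word-sized indices are bookkeeping details that the paper leaves implicit.
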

\begin{proof}
	Traversing $T$ once, we compute the parent and the depth of every element; this takes time $\BigO{\Size{\RR}}$.
	Recall that a vertex $p$-type is a pair $(\delta_v,\lambda_v)$ of a subtree label and an ancestor label, and that $\lambda_v$ depends only on the ancestors of $v$ while $\delta_v$ depends only on its descendants.
	We may therefore compute all ancestor labels in a top-down pass and then all subtree labels in a bottom-up pass as follows.

	\emph{Ancestor labels.}
	We store an atomic type as the set of the atomic formulas it contains, representing $E_i(x_{n_1},\dots,x_{n_a})$ by the pair $\bigl(i,(n_1,\dots,n_a)\bigr)$ and $x_n = x_{n'}$ by the pair $(n,n')$.
	Every tuple of every relation of $\RR$ consists of elements that are pairwise in ancestor-descendant relation in $T$, so it has a unique deepest entry, which we find in time linear in its arity.
	For each relation $E_i$ and each tuple $\bar u \in E_i$ we record the atomic formula $E_i(x_{d_1},\dots,x_{d_{\arity(E_i)}})$ at that deepest entry, where $d_j$ is the depth of the $j$-th entry of $\bar u$.
	Over all tuples this takes time $\BigO{\Size{\RR}}$.
	Now a tuple of elements of the root-to-$v$ path either consists of elements of the root-to-$w$ path, where $w$ is the parent of $v$, or contains $v$ and hence has $v$ as its deepest entry.
	The vertices $w_1,\dots,w_d$ of the root-to-$v$ path are pairwise distinct and $w_n$ has depth $n$, so by \Cref{def:atomic_type} the equality part of $\lambda_v = \atp_\RR(w_1,\dots,w_d)$ is exactly $\{x_n = x_n \mid n \in [d]\}$.
	By the definition of the ancestor label in \Cref{def:vertex-type} we thus obtain~$\lambda_v$ from $\lambda_w$ by adding the atomic formulas recorded at $v$ together with the equality atom $x_d = x_d$, where $d$ is the depth of $v$; if $v$ is the root, $\lambda_v$ consists of the atomic formulas recorded at $v$ together with the equality atom $x_1 = x_1$.
	Each ancestor label is a set of at most $\sum_{i} h^{\arity(E_i)} + h \leq \Size{\varphi} \cdot h^{\Size{\varphi}} + h$ atomic formulas, counting also the at most $h$ formulas of the equality part of the atomic type, so each of these steps takes time $f_1(h,\Size{\varphi})$ for a computable function $f_1$, and the pass takes time $f_1(h,\Size{\varphi}) \cdot \Size{\RR}$.

	\emph{Subtree labels.}
	The subtree label of a leaf is $\varepsilon$.
	For a non-leaf $v$ whose children all have their vertex $p$-types computed, we group the children by their vertex $p$-type and let $\delta_v$ map each occurring vertex $p$-type to the number of children carrying it, or to $\star$ if that number exceeds $p-1$.
	A vertex $p$-type has an encoding of size at most $f_2(h,p,\Size{\varphi})$ for a computable function $f_2$, so the grouping costs $\BigO{f_2(h,p,\Size{\varphi})}$ per child, and summing over all vertices gives $f_2(h,p,\Size{\varphi}) \cdot \BigO{\Size{\RR}}$ for the whole pass.

	\emph{$p$-routes.}
	Recall that the $p$-route of a non-root element $v$ with parent $w$ is $\rt(w) \diamond (\vtxptype(v),i)$, such that $v$ is the $i$-th smallest element among the children of $w$ of vertex $p$-type $\vtxptype(v)$ in the total order on $V(\RR)$.
	For the indices, we scan the elements in increasing order and maintain a counter for every pair consisting of an element and a vertex $p$-type; upon reaching $v$ we increment the counter of $(w,\vtxptype(v))$ and take its new value as the index $i$ of $v$.
	As we scan in increasing order, we meet the children of $w$ of any fixed vertex $p$-type in increasing order, so this is the correct index.
	For the routes, we traverse $T$ from the root downwards, starting with $\rt(u) = ((\vtxptype(u),1))$ at the root $u$ and setting $\rt(v) = \rt(w) \diamond (\vtxptype(v),i)$ at every other element $v$; as a parent is processed before its children, $\rt(w)$ is available when $v$ is reached.
	Each route consists of at most $h$ entries of size at most $f_2(h,p,\Size{\varphi})$.
	Both passes therefore take time $f_3(h,p,\Size{\varphi}) \cdot \Size{\RR}$ for a computable function $f_3$, which concludes the proof.
\end{proof}

For the equivalence classes,
the idea now is to normalize these $p$-routes so that they still record the vertex $p$-types, the depths and the relative order of the entries, while the value of $i$ in every entry $(t,i)$ becomes bounded.
With the normalized $p$-route (which exists when $p$ is chosen sufficiently large), one can no longer uniquely identify vertices of $T$; however, when following the normalized $p$-route of a vertex $v$, one reaches some natural representative $v'$ that behaves similarly to $v$.
This notion is extended in the natural way to tuples of vertices.

In a nutshell, for a vertex tuple $v_1,\dots,v_z$, we first consider the $p$-routes of all its entries.
Then, instead of storing the exact indices in the $p$-route entries, we only store their relative order among all entries of the routes considered.

\defPatterns*{}

Observe that this type of normalization is a purely syntactic operation on the $p$-routes themselves: besides the $p$-routes, the only information it uses is which of their entries are abundant, which is also (implicitly) recorded in them.
All the combinatorial properties of $p$-route patterns that we use are thus properties of the normalization operator, which we collect here for later reference.
The first two speak about properties preserved by the normalization, the third and fourth about consecutive application of normalization, and the last about the bounds on indices after normalization.

\begin{lemma}[Properties of normalization]
	\label{lem:normalization_properties}
	Let $S$ be a finite set of $p$-routes and let $S' \subseteq S$.
	The following properties all hold:
	\begin{bracketenumerate}
		\item\label{item:norm_structure}
		      $\nu_S$ preserves the length of every $p$-route, the vertex $p$-type in every entry, and hence the abundance of every entry.
		\item\label{item:norm_prefix}
		      For all $\varrho,\varrho' \in S$ and every $\ell \in [0,\min\{|\varrho|,|\varrho'|\}]$, the prefixes of length $\ell$ of $\varrho$ and $\varrho'$ coincide if and only if those of $\nu_S(\varrho)$ and $\nu_S(\varrho')$ do.
		\item\label{item:norm_composition}
		      $\nu_{S'} = \nu_{\nu_S(S')} \circ \nu_S$ on $S'$, that is,
		      normalizing $\varrho\in S'$ for $S'$ yields the same as first normalizing $\varrho$ for $S$ and then normalizing for the set of normalized routes $\nu_S(S')$ of $S'$.
		\item\label{item:norm_idempotent}
		      $\nu_{\nu_S(S)}$ is the identity on $\nu_S(S)$, that is, normalizing a normalized set of routes gives the same set.
		\item\label{item:norm_indices}
		      For every $\varrho \in S$, each abundant entry of $\nu_S(\varrho)$ has index at most $\Size{I(S)}$, which is at most $\Size{S}\,(h-1)$ if every route in $S$ has length at most $h$; each non-abundant entry of $\nu_S(\varrho)$ is equal to the corresponding entry of $\varrho$.
	\end{bracketenumerate}
\end{lemma}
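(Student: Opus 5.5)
All five properties follow by unfolding \Cref{def:p_route_pattern}. The key observation, used throughout, is that $\nu_S$ only rewrites indices of abundant entries. It does so through the map $\pi_S$, which is strictly increasing on $I(S)$ and hence injective and order-preserving. Whether an entry $e_\ell=(t_\ell,i_\ell)$ is abundant depends only on $t_\ell$, $t_{\ell-1}$ and $\ell$, and never on any index. I would prove the items in the order (1), (5), (2), (4), (3), because the later ones reuse the earlier ones.

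For (\ref{item:norm_structure}), $\hat e_\ell$ keeps $t_\ell$ and the position $\ell$, so the length is unchanged. The type sequence is unchanged as well, so abundance, which is determined by $(t_{\ell-1},t_\ell)$ and $\ell\ge 2$, is preserved. For (\ref{item:norm_indices}), non-abundant entries are left untouched by definition. An abundant entry receives the index $\pi_S(i)\le\Size{I(S)}$. Since only entries with $\ell\ge2$ can be abundant, each route of length at most $h$ contributes at most $h-1$ indices, which gives $\Size{I(S)}\le\Size{S}(h-1)$.

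For (\ref{item:norm_prefix}), I would compare prefixes entry by entry. If the prefixes of $\varrho,\varrho'$ of length $\ell$ coincide, they agree on every entry, and hence also agree on abundance. Each entry is therefore mapped by the same rule, so the normalized prefixes coincide. Conversely, suppose the normalized prefixes coincide. By (\ref{item:norm_structure}) the type sequences and the abundance pattern agree. At a non-abundant position the normalized entries equal the original ones. At an abundant position we have $\pi_S(i)=\pi_S(i')$, and injectivity of $\pi_S$ gives $i=i'$.

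For (\ref{item:norm_idempotent}), by (\ref{item:norm_structure}) the abundant entries of $\nu_S(S)$ sit at the same positions as in $S$. Their index set is therefore $\pi_S(I(S))=[\Size{I(S)}]$, whose rank map is the identity. For (\ref{item:norm_composition}), abundance is again preserved, so $I(\nu_S(S'))=\pi_S(I(S'))$. Since $\pi_S$ is order-preserving, for $i\in I(S')$ the rank of $\pi_S(i)$ in $\pi_S(I(S'))$ equals the rank of $i$ in $I(S')$, that is, $\pi_{\nu_S(S')}\circ\pi_S=\pi_{S'}$ on $I(S')$. Applying this entrywise gives the claim, and non-abundant entries are fixed by both sides.

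The only delicate point is the one stressed in the example: an index may be abundant in one route and non-abundant in another. I expect this to be the main obstacle. It is handled by the fact that the case distinction in $\hat e_\ell$ is decided per entry, from types alone. In particular, in (\ref{item:norm_prefix}) two equal prefixes have equal abundance patterns, so the same rewriting rule is applied to both. I expect everything else to be routine.
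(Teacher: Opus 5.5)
Your proof is correct and follows essentially the same route as the paper's. The rank-preservation identity $\pi_{\nu_S(S')}\circ\pi_S=\pi_{S'}$ on $I(S')$ drives (3), injectivity of $\pi_S$ together with type-determined abundance drives (2), and the bound of at most $h-1$ abundant entries per route gives (5); the only cosmetic differences are that you argue (2) entrywise instead of by induction on $\ell$, and you prove (4) directly from $\pi_S(I(S))=[\Size{I(S)}]$ rather than deducing it from (3) with $S'=S$.
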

\begin{proof}
	\textbf{(\ref{item:norm_structure}).}
	The operator $\nu_S$ replaces index components only, so lengths and the vertex $p$-types in all entries are preserved, and the abundance of an entry depends on nothing else.

	\medskip\noindent\textbf{(\ref{item:norm_prefix}).}
	We proceed by induction on $\ell$, the case $\ell = 0$ being trivial.
	Let $\varrho = (e_1,\dots,e_d)$ and $\varrho' = (e_1',\dots,e_{d'}')$ be routes in $S$ of length at least $\ell \ge 1$, and write $e_\ell = (t_\ell,i_\ell)$ and $e_\ell' = (t_\ell',i_\ell')$.

	Assume that $\varrho$ and $\varrho'$ share their prefix of length $\ell$.
	Then they also share their prefix of length $\ell-1$, so $\nu_S(\varrho)$ and $\nu_S(\varrho')$ do as well by the induction hypothesis.
	The $\ell$-th entry in $\nu_S(\varrho)$ and $\nu_S(\varrho')$ depends only on $e_\ell$ and $e_\ell'$, respectively, and whether they are abundant.
	Since $e_\ell = e_\ell'$ by assumption and abundance depends only on $t_\ell = t_\ell'$ and the vertex $p$-types in their (identical) preceding entry, this gives $\hat e_\ell = \hat e_\ell'$; for $\ell = 1$ there is no preceding entry, but then neither entry is abundant by \Cref{def:p_route} and both are left unchanged.

	Conversely, assume that $\nu_S(\varrho)$ and $\nu_S(\varrho')$ share their prefix of length $\ell$.
	By the induction hypothesis, $\varrho$ and $\varrho'$ share their prefix of length $\ell-1$, and by property~(\ref{item:norm_structure}) their $\ell$-th entries carry the same vertex $p$-type and are either both abundant or neither of them is.
	If they are not abundant, they are left unchanged, so $e_\ell = \hat e_\ell = \hat e_\ell' = e_\ell'$.
	If they are abundant, then by assumption $\pi_S(i_\ell) = \pi_S(i_\ell')$, and $\pi_S$ is injective, so $i_\ell = i_\ell'$.

	\medskip\noindent\textbf{(\ref{item:norm_composition}).}
	By property~(\ref{item:norm_structure}), normalization preserves abundance, so $I(\nu_S(S')) = \pi_S(I(S'))$.
	Since $I(S') \subseteq I(S)$ and $\pi_S$ is strictly increasing, each $i \in I(S')$ has the same rank in $I(S')$ as $\pi_S(i)$ has in $\pi_S(I(S'))$.
	Thus,
	\begin{align*}
		\pi_{\nu_S(S')}\bigl(\pi_S(i)\bigr) = \pi_{S'}(i) \qquad \text{for every } i \in I(S').
	\end{align*}
	Both normalization procedures therefore agree on abundant entries and leave all other entries unchanged, proving the identity on $S'$.

	\medskip\noindent\textbf{(\ref{item:norm_idempotent}).}
	Property~(\ref{item:norm_composition}) with $S' = S$ gives $\nu_S = \nu_{\nu_S(S)} \circ \nu_S$ on $S$. Thus, $\nu_{\nu_S(S)}$ is the identity on $\nu_S(S)$.
	Intuitively, the first normalization assigns each index in an abundant entry its rank within the set~$I(S)$. Ranking an already ranked set just returns the same ranks.

	\medskip\noindent\textbf{(\ref{item:norm_indices}).}
	An abundant entry receives an index in $[\Size{I(S)}]$ by \Cref{def:p_route_pattern}.
	Each element of $I(S)$ is the index of an abundant entry of some route of $S$, and a route of length at most $h$ has at most $h-1$ abundant entries, since its first entry is never abundant; hence $\Size{I(S)} \leq \Size{S}\,(h-1)$.
	Non-abundant entries are left unchanged by definition.
\end{proof}

Since $p$-route patterns are obtained via normalization of element tuples, the above lemma helps us in establishing some useful properties of any two element tuples sharing a $p$-route pattern: (1) they occupy corresponding positions in $T$, (2) they interact with the relations in $\RR$ in the same way, and (3) any deletion, copy, or reordering, as long as applied to both original tuples, yields two tuples which again share a $p$-route pattern.

\begin{lemma}[Properties of $p$-route patterns]
	\label{lem:pattern_properties}
	Let $p,z \in \mathbb{N}$, let $\RR$ be a relational structure with treedepth decomposition $T$, and let $\bar v = (v_1,\dots,v_z)$ and $\bar w = (w_1,\dots,w_z)$ in $V(\RR)^z$ have the same $p$-route pattern.
	The following properties all hold:
	\begin{bracketenumerate}
		\item\label{item:pattern_structure}
		      For all $m,m' \in [z]$, the vertices $v_m$ and $w_m$ have the same depth in $T$ and the same vertex $p$-type, and $v_m$ is an ancestor of $v_{m'}$ if and only if $w_m$ is an ancestor of $w_{m'}$ (even when counting a vertex as an ancestor of itself).
		\item\label{item:pattern_atp}
		      The two tuples have the same atomic type, that is, $\atp_{\RR}(\bar v) = \atp_{\RR}(\bar w)$.
		\item\label{item:pattern_subtuple}
		      For every $a \in \mathbb{N}$ and all $n_1,\dots,n_a \in [z]$, which need be neither distinct nor increasing, the tuples $(v_{n_1},\dots,v_{n_a})$ and $(w_{n_1},\dots,w_{n_a})$ have the same $p$-route pattern.
	\end{bracketenumerate}
\end{lemma}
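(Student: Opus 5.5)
The plan is to read everything off the normalization properties in \Cref{lem:normalization_properties}. Set $S_v = \{\rt(v_1),\dots,\rt(v_z)\}$ and $S_w = \{\rt(w_1),\dots,\rt(w_z)\}$. The hypothesis says $\nu_{S_v}(\rt(v_m)) = \nu_{S_w}(\rt(w_m))$ for every $m \in [z]$. I prove the three properties in order, since (\ref{item:pattern_atp}) uses (\ref{item:pattern_structure}).

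\textbf{(\ref{item:pattern_structure}).} Two facts about $p$-routes do the work.
\begin{itemize}
\item The depth of a vertex equals the length of its route. Its vertex $p$-type is the type component of the last entry.
\item $u$ is an ancestor of $u'$ (counting $u$ as its own ancestor) if and only if $\rt(u)$ is a prefix of $\rt(u')$. This holds because routes are built by appending entries along the root path, and a route identifies at most one vertex.
\end{itemize}
By \normprop{item:norm_structure}, normalization preserves lengths and types, so $v_m$ and $w_m$ have the same depth and the same vertex $p$-type. For ancestry, take $\ell = |\rt(v_m)|$, which equals $|\rt(w_m)|$. By \normprop{item:norm_prefix} applied within $S_v$, $\rt(v_m)$ and $\rt(v_{m'})$ agree on their first $\ell$ entries if and only if their normalizations do. The normalizations are exactly the common pattern entries, so the same equivalence holds in $S_w$. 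Combining the two gives the ancestry equivalence. If $\ell > |\rt(v_{m'})|$, then $v_m$ is not an ancestor in either tuple, since the lengths also agree.

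\textbf{(\ref{item:pattern_atp}).} This is immediate from \Cref{lem:labels_determine_atp} with $\RR' = \RR$ and $T' = T$. Its hypotheses are depth equality and ancestry equivalence, both from (\ref{item:pattern_structure}), and equality of ancestor labels, which holds because vertex $p$-types agree and the ancestor label is a component of the type.

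\textbf{(\ref{item:pattern_subtuple}).} Let $S'_v = \{\rt(v_{n_1}),\dots,\rt(v_{n_a})\} \subseteq S_v$, and define $S'_w \subseteq S_w$ analogously. The $p$-route pattern of the subtuple is $(\nu_{S'_v}(\rt(v_{n_j})))_j$. By \normprop{item:norm_composition}, this equals $\nu_{\nu_{S_v}(S'_v)}(\nu_{S_v}(\rt(v_{n_j})))$. The set $\nu_{S_v}(S'_v)$ is $\{\mu_{n_1},\dots,\mu_{n_a}\}$, where $\mu$ is the common pattern. It coincides with $\nu_{S_w}(S'_w)$, and the inner arguments agree entrywise. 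So both subtuple patterns equal $(\nu_{\{\mu_{n_j}\}}(\mu_{n_j}))_j$.

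The one delicate point is that \Cref{def:p_route_pattern} applies normalization to a set, so repeated entries collapse. One must check that $\nu_S$ is well defined as a function on routes regardless of multiplicity, which it is since it depends only on the set $I(S)$. Beyond that, the main subtlety is the prefix characterization of ancestry, which must be stated carefully for routes of differing lengths.
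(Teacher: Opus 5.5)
Your proof is correct and follows the paper's argument essentially step for step. Property~(\ref{item:pattern_structure}) comes from route length, the last-entry type and the prefix characterization of ancestry together with \normprop{item:norm_structure} and \normprop{item:norm_prefix}; property~(\ref{item:pattern_atp}) from \Cref{lem:labels_determine_atp} with $\RR'=\RR$; and property~(\ref{item:pattern_subtuple}) from \normprop{item:norm_composition} on both sides through the common image set $\nu_{S_v}(S_v')=\nu_{S_w}(S_w')$. Your explicit treatment of the case $|\rt(v_m)|>|\rt(v_{m'})|$ in the ancestry step, which the paper leaves implicit, is a welcome bit of extra care.
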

\begin{proof}
	Let $S_v = \{\rt(v_1),\dots,\rt(v_z)\}$ and $S_w = \{\rt(w_1),\dots,\rt(w_z)\}$ be the sets of the $p$-routes of $\bar v$ and of $\bar w$.
	That the two tuples have the same $p$-route pattern means, by \Cref{def:p_route_pattern}, that
	\begin{align*}
		\nu_{S_v}(\rt(v_m)) = \nu_{S_w}(\rt(w_m)) \quad \text{for every } m \in [z].
	\end{align*}

	\textbf{(\ref{item:pattern_structure}).}
	By \Cref{def:p_route} the length of the $p$-route of a vertex is its depth in $T$, the last entry of the $p$-route carries its vertex $p$-type, and a vertex $x$ is an ancestor of a vertex $y$ if and only if $\rt(x)$ is a prefix of $\rt(y)$.
	It remains to observe that $\nu_{S_v}$ and $\nu_{S_w}$ both preserve lengths and the vertex $p$-type in every entry by \normprop{item:norm_structure}, and both preserve the prefix relation by \normprop{item:norm_prefix}; so these three properties may be read off the common normalized routes.

	\medskip\noindent\textbf{(\ref{item:pattern_atp}).}
	The ancestor label is part of the vertex $p$-type by \Cref{def:vertex-type}, so property~(\ref{item:pattern_structure}) provides the three hypotheses of \Cref{lem:labels_determine_atp} for $\RR' = \RR$ and $T' = T$, which gives the claim.

	\medskip\noindent\textbf{(\ref{item:pattern_subtuple}).}
	Let $S_v' = \{\rt(v_{n_1}),\dots,\rt(v_{n_a})\} \subseteq S_v$ and $S_w' = \{\rt(w_{n_1}),\dots,\rt(w_{n_a})\} \subseteq S_w$ be the sets of the $p$-routes of the two subtuples.
	Taking the images of the positions $n_1,\dots,n_a$ on both sides of the equality above yields $S^\circ \coloneqq \nu_{S_v}(S_v') = \nu_{S_w}(S_w')$.
	Hence, for every $m \in [a]$, we have
	\[
	\begin{aligned}
		\nu_{S_v'}(\rt(v_{n_m}))
		&= \nu_{S^\circ}\bigl(\nu_{S_v}(\rt(v_{n_m}))\bigr)
		&& \text{by \normprop{item:norm_composition} for $S_v' \subseteq S_v$,} \\
		&= \nu_{S^\circ}\bigl(\nu_{S_w}(\rt(w_{n_m}))\bigr)
		&& \text{by the common $p$-route pattern of $\bar v$ and $\bar w$,} \\
		&= \nu_{S_w'}(\rt(w_{n_m})),
		&& \text{by \normprop{item:norm_composition} for $S_w' \subseteq S_w$,}
	\end{aligned}
	\]
	which by \Cref{def:p_route_pattern} precisely gives that $(v_{n_1},\dots,v_{n_a})$ and $(w_{n_1},\dots,w_{n_a})$ have the same $p$-route pattern.
	Note that the indices $n_1,\dots,n_a$ need be neither distinct nor increasing, since $\nu_{S_v'}$ and $\nu_{S_w'}$ are applied to each position separately.
\end{proof}

Moreover, we record that the number of possible $p$-route patterns, and thereby the number of distinct equivalence classes to consider, is bounded by a function in our parameters.

\begin{lemma}[Enumerating $p$-route patterns]
	\label{lem:number_route_patterns}
	There is a computable function $f$ such that for any instance $(\RR,\varphi)$ of \snpmodelchecking{}, any treedepth decomposition $T$ of $\RR$ of height $h$, and any integers $p \geq 1$ and $z \in \mathbb{N}$
	there are at most $f(h,p,\Size{\varphi},z)$ distinct $p$-route patterns of $z$-tuples over $V(\RR)$.
	Moreover, given the set of vertex $p$-types occurring in $\RR$, one can compute in time $f(h,p,\Size{\varphi},z)$ a list of at most $f(h,p,\Size{\varphi},z)$ candidates that contains the $p$-route pattern of every $z$-tuple over $V(\RR)$.
\end{lemma}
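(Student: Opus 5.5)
The plan is to bound the syntactic objects that can occur as $p$-route patterns and then enumerate them. A $p$-route pattern of a $z$-tuple is a $z$-tuple of normalized $p$-routes. Each normalized route $\nu_S(\rt(v_m))$ has length equal to the depth of $v_m$, hence at most $h$, by \normprop{item:norm_structure}. Each entry $(t,i)$ carries a vertex $p$-type $t$ from the set of types occurring in $\RR$. By \Cref{lem:number_vertex_types} this set has size at most $g(h,p,\Size{\varphi})$.

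The index $i$ is bounded as follows. For abundant entries, \normprop{item:norm_indices} gives $i \le \Size{I(S)} \le z(h-1)$, since $\Size{S}\le z$. For non-abundant entries the index is unchanged, and I claim it is at most $p-1$. Such an entry $(t_\ell,i_\ell)$ with $\ell\ge 2$ means that the vertex $w_\ell$ on the route has a parent of type $t_{\ell-1}$ with $\delta(t_\ell)\neq\star$. So the parent has at most $p-1$ children of type $t_\ell$, and the rank $i_\ell$ among them is at most $p-1$. For $\ell=1$ the index is $1$.

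Thus every entry lies in a set of size at most $g(h,p,\Size{\varphi})\cdot\max\{p-1,\,z(h-1),\,1\}$. The number of routes of length at most $h$ is at most $h$ times this quantity to the power $h$. The number of $z$-tuples of such routes is that bound raised to the $z$. Defining $f(h,p,\Size{\varphi},z)$ as this explicit expression gives a computable bound.

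For the algorithmic part, we are given the set $\mathcal{T}$ of occurring vertex $p$-types. We enumerate all $z$-tuples of sequences of length between $1$ and $h$ whose entries are pairs $(t,i)$ with $t\in\mathcal{T}$. The index ranges are $i\in[\max\{p-1,z(h-1),1\}]$, with $i=1$ in the first entry. Whether an entry is abundant is determined syntactically from $t_{\ell-1}$ and $t_\ell$ via the subtree label of $t_{\ell-1}$. This lets us optionally prune candidates whose non-abundant indices exceed the count $\delta_{t_{\ell-1}}(t_\ell)$, or whose $t_\ell$ is not in the domain of that label. Pruning is not required, since a superset of candidates suffices. The enumeration takes time polynomial in the list size times the encoding size of types, and is therefore bounded by a computable function of the parameters. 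Enlarging $f$ absorbs both the time and the count.

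The only step needing care is the bound on non-abundant indices. It relies on the fact that the index in $\rt(v)$ is the rank among siblings of the same type, and that non-abundance caps that sibling count at $p-1$. Everything else is direct counting.
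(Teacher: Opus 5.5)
Your proposal is correct and follows essentially the same route as the paper. Routes have at most $h$ entries, types are bounded by the vertex-type counting lemma, and indices are bounded by $\max\{z(h-1),p-1,1\}$: abundant entries via the normalization index bound, non-abundant ones via the subtree label of the parent's type, and the root by $1$. The candidate list is then obtained by enumerating all $z$-tuples of such sequences over the given types, and your explicit argument for the $p-1$ bound spells out what the paper leaves as ``by the definition of an abundant type''.
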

\begin{proof}
	By definition, the $p$-route of an arbitrary element in $V(\RR)$ contains at most $h$ entries.
	By \normprop{item:norm_structure}, the same holds for the normalized routes in the $p$-route pattern of any $z$-tuple.
	Each entry consists of a vertex $p$-type and an index.
	\Cref{lem:number_vertex_types} bounds the number of possible vertex $p$-types by a function $g$ of $h,p$, and $\Size{\varphi}$.
	Moreover, the index $i$ in any entry $(t,i)$ in the $p$-route pattern of a $z$-tuple lies between $1$ and $M \coloneqq \max(z(h-1),p-1,1)$.
	Indeed, the index $i$ for abundant types is at most $z(h-1)$
	by \normprop{item:norm_indices}.
	Moreover, the index $i$ for non-abundant, non-root entries is at most $p-1$ by the definition of an abundant type.
	The index of the root entry is always $1$.

	There are thus at most $g(h,p,\Size{\varphi}) \cdot M$ possible entries, hence at most $h \cdot \bigl(g(h,p,\Size{\varphi}) \cdot M\bigr)^{h}$ possible normalized routes, as such a route is a sequence of at most $h$ entries.
	A $p$-route pattern of a $z$-tuple is a $z$-tuple of such routes, so there are at most
	\begin{align*}
		f(h,p,\Size{\varphi},z) = \Bigl(h \cdot \bigl(g(h,p,\Size{\varphi}) \cdot M\bigr)^{h}\Bigr)^{z}
	\end{align*}
	of them, and $f$ is computable since $g$ is.

	For the second statement, observe that the argument above does not merely count these patterns but describes them: by \normprop{item:norm_structure} the $p$-route pattern of a $z$-tuple over $V(\RR)$ is a $z$-tuple of sequences of at most $h$ entries $(t,i)$ in which $t$ is a vertex $p$-type occurring in $\RR$ and $i \in [M]$.
	Enumerating all objects of this shape from the given set of vertex $p$-types yields the desired list, and as there are at most $f(h,p,\Size{\varphi},z)$ of them, each with an encoding of size bounded in $h,p,\Size{\varphi}$ and $z$, writing them all down takes time bounded in the same parameters.
\end{proof}

As mentioned before, the $p$-route of a vertex $v$ uniquely identifies $v$.
For sufficiently large $p$, the routes in a $p$-route pattern identify canonical representative vertices.
We call these the realizing vertices of the $p$-route pattern.

\begin{definition}[Realizing vertices]
	\label{def:realizing_vertices}
	Let $\RR$ be a relational structure with treedepth decomposition $T$ of height $h$, and $\mu = (\varrho_1,\ldots,\varrho_z)$ be the $p$-route pattern of $z$ elements of $\RR$ for some $p \geq z \cdot h$.
	The \emph{realizing vertices} of $\mu$ are $(v_1,\ldots,v_z)$, where $v_i$ is the vertex identified by the $p$-route $\varrho_i$, for all $i\in[z]$.
\end{definition}

We now show that these realizing vertices indeed exist, and that their pattern is again $\mu$.

\begin{lemma}[Unique realizing vertices exist]
	\label{lem:realizing_vertices}
	Let $\RR$ be a relational structure with treedepth decomposition $T$ of height $h$, $p$ a positive integer, and $\mu = (\varrho_1,\ldots,\varrho_z)$ be the $p$-route pattern of $z$ elements of $\RR$ with $p \geq z \cdot h$.
	Then the realizing vertices of $\mu$ exist and are unique, and the $p$-route pattern of the realizing vertices of $\mu$ is again $\mu$.
	Moreover, every entry $(t,i)$ in the $p$-route of a realizing vertex satisfies $i \leq p$.
\end{lemma}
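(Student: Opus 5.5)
The plan is to fix a tuple $\bar u = (u_1,\dots,u_z)$ of elements of $\RR$ whose $p$-route pattern is $\mu$, and let $S = \{\rt(u_1),\dots,\rt(u_z)\}$, so that $\varrho_m = \nu_S(\rt(u_m))$ for every $m\in[z]$. Uniqueness needs no work: we observed after \Cref{def:p_route} that a $p$-route identifies at most one vertex of $T$. The main work is existence. For this I would prove, by induction on $\ell$, that for every $m$ and every $\ell \le |\varrho_m|$ the prefix of length $\ell$ of $\varrho_m$ is the $p$-route of some vertex $x_\ell$ of $T$. I would also carry through the induction the claim that the last entry of that prefix records $\vtxptype(x_\ell)$, which holds because the last entry of any route is the type of the vertex it identifies.

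For the base case $\ell=1$, the first entry is never abundant, so it is unchanged by \normprop{item:norm_indices}. It therefore equals $((\vtxptype(\text{root}),1))$, which is the route of the root. For the inductive step, let $x$ be the vertex identified by the prefix of length $\ell-1$, and let $y$ be the ancestor of $u_m$ at depth $\ell-1$. By \normprop{item:norm_structure}, both $x$ and $y$ have the vertex $p$-type $t_{\ell-1}$ recorded in entry $\ell-1$. They thus have the same subtree label $\delta$, and hence the same number of children of type $t_\ell$, counted up to $p-1$ and otherwise recorded as $\star$.

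I then split on whether entry $\ell$ is abundant.
\begin{itemize}
\item If it is not abundant, the entry $(t_\ell,i)$ is unchanged by \normprop{item:norm_indices}. Since the original route of $u_m$ has this entry, $y$ has at least $i$ children of type $t_\ell$. So $\delta(t_\ell)\ge i$ (or $\delta(t_\ell)=\star$), and therefore $x$ also has an $i$-th child of type $t_\ell$. Moreover $i\le p-1$.
\item If it is abundant, then $\delta(t_\ell)=\star$, so $x$ has at least $p$ children of type $t_\ell$. The normalized index satisfies $i\le |I(S)|\le z(h-1) < zh \le p$ by \normprop{item:norm_indices}, so the $i$-th such child exists.
\end{itemize}
In both cases the child identified by entry $\ell$ exists, and its route is exactly the prefix of length $\ell$ of $\varrho_m$ by \Cref{def:p_route}, which completes the induction. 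The same case distinction also gives the final claim of the lemma: every index is $1$ (root entry), at most $p-1$ (non-abundant entry), or at most $z(h-1)\le p$ (abundant entry). This existence step is the only delicate point, specifically the use of equal subtree labels of $x$ and $y$ to transfer the existence of children.

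Finally, let $v_m$ be the realizing vertices. Then $\rt(v_m)=\varrho_m$ exactly, so their set of routes is $S'=\nu_S(S)$. Abundance depends only on the types stored in the routes, so it is the same for routes of $\RR$ regardless of which vertex they come from. The $p$-route pattern of $(v_1,\dots,v_z)$ is therefore $(\nu_{S'}(\varrho_1),\dots,\nu_{S'}(\varrho_z))$. By \normprop{item:norm_idempotent} this equals $(\varrho_1,\dots,\varrho_z)=\mu$.
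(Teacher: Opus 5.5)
Your proof is correct and follows the paper's argument essentially step by step: uniqueness from distinctness of routes, induction on prefix length with the same abundant/non-abundant case split via the shared subtree label $\delta$, the index bound read off from that split, and idempotence of normalization (\normprop{item:norm_idempotent}) for the claim that the pattern is again $\mu$. One parenthetical is slightly off: in the non-abundant case $\delta(t_\ell)=\star$ cannot occur, because $y$ has a child of type $t_\ell$, so non-abundance forces $\delta(t_\ell)\le p-1$, and this is exactly what justifies your bound $i\le p-1$.
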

\begin{proof}
	Uniqueness is immediate, since distinct vertices have distinct $p$-routes, as explained right after \Cref{def:p_route}.
	Choose a tuple $(v_1,\dots,v_z)$ with $p$-route pattern $\mu = (\varrho_1,\dots,\varrho_z)$.
	Let $S = \{\rt(v_1),\dots,\rt(v_z)\}$, so that $\Size{S} \leq z$ and every route in $S$ has length at most $h$.

	For existence, fix $m \in [z]$, write $\varrho_m = (\hat e_1,\dots,\hat e_d)$ with $\hat e_\ell = (t_\ell,j_\ell)$, and let $u_1,\dots,u_d$ with $v_m = u_d$ be the vertices on the root-to-$v_m$ path in $T$, so that $\vtxptype(u_\ell) = t_\ell$ by \normprop{item:norm_structure}.
	We show by induction on $\ell$ that $(\hat e_1,\dots,\hat e_\ell)$ is the $p$-route of some vertex $y_\ell$ of $T$; note that then $\vtxptype(y_\ell) = t_\ell$, since the last entry of the $p$-route of a vertex carries its vertex $p$-type.
	For $\ell = 1$ we have $\hat e_1 = (\vtxptype(u_1),1)$ by \Cref{def:p_route_pattern}, which is the $p$-route of the root $u_1$ of $T$.
	For the step, let $y_{\ell-1}$ be given; it suffices to show that $y_{\ell-1}$ has at least $j_\ell$ children of vertex $p$-type $t_\ell$, as the $j_\ell$-th smallest of them is then the desired $y_\ell$.

	Suppose first that $\hat e_\ell$ is not abundant, so that $j_\ell$ is the index of $u_\ell$ among the children of $u_{\ell-1}$ of vertex $p$-type $t_\ell$ by \normprop{item:norm_indices}.
	Write $t_{\ell-1} = (\delta,\lambda)$; note that $\vtxptype(y_{\ell-1}) = t_{\ell-1} = \vtxptype(u_{\ell-1})$, so $\delta$ is the subtree label of both $y_{\ell-1}$ and $u_{\ell-1}$.
	As $u_\ell$ is a child of $u_{\ell-1}$ of vertex $p$-type $t_\ell$, the type~$t_\ell$ lies in the domain of $\delta$, and $\delta(t_\ell) \neq \star$ because $\hat e_\ell$ is not abundant.
	Hence both $u_{\ell-1}$ and $y_{\ell-1}$ have exactly $\delta(t_\ell) \leq p-1$ children of vertex $p$-type $t_\ell$, and $j_\ell \leq \delta(t_\ell)$ as $u_\ell$ is one of those children of $u_{\ell-1}$.

	If, on the other hand, $\hat e_\ell$ is abundant, then $t_\ell$ is $p$-abundant in $t_{\ell-1} = \vtxptype(y_{\ell-1})$, so $y_{\ell-1}$ has at least $p$ children of vertex $p$-type $t_\ell$, while $j_\ell\leq\Size{I(S)}\leq z(h-1)\leq p$ by \normprop{item:norm_indices} and the hypothesis.

	This argument also gives the claimed bound on the indices: the root entry has index $j_1 = 1$, a non-abundant entry has index $j_\ell \leq \delta(t_\ell) \leq p-1$, and an abundant entry has index $j_\ell \leq z(h-1) \leq p$.

	It remains to show that the $p$-route pattern of the realizing vertices $w_1,\dots,w_z$ of $\mu$ is again~$\mu$.
	By \Cref{def:p_route_pattern} we have $\varrho_m = \nu_S(\rt(v_m))$ for every $m \in [z]$, and $\varrho_m$ is the $p$-route of $w_m$ by \Cref{def:realizing_vertices}.
	So the set of the $p$-routes of $w_1,\dots,w_z$ is $\{\varrho_1,\dots,\varrho_z\} = \nu_S(S)$, and applying \Cref{def:p_route_pattern} to the tuple $(w_1,\dots,w_z)$ shows that its $p$-route pattern is $\bigl(\nu_{\nu_S(S)}(\varrho_1),\dots,\nu_{\nu_S(S)}(\varrho_z)\bigr)$.
	By \normprop{item:norm_idempotent}, this is $(\varrho_1,\dots,\varrho_z) = \mu$.
\end{proof}

\paragraph*{Multiplicity Structures.}
The purpose of $p$-route patterns is that we want to treat tuples of elements that share a pattern in an interchangeable way.
In this sense, we define a notion of pruning, in which each vertex keeps at most $p$ children of each vertex $p$-type.
We will show that this preserves the existence of all possible $p$-route patterns, since the realizing vertices of each pattern are untouched.

Indeed, the following defines pruning in a slightly more general way: we consider some $s\ge p$ and keep $s$ children of every abundant type, and even add copies of subtrees to make that exactly~$s$.
That is, if there are at least $p$ siblings in $T$ with the same vertex $p$-type, then the resulting structure has exactly $s$ siblings of this vertex $p$-type instead.

\begin{definition}[$s$-multiplicity structure]
	\label{def:s_multiplicity_structure}
	Let $s,p \in \mathbb{N}$ with $s\ge p \geq 1$, and let $\RR$ be a relational structure with treedepth decomposition $T$.

	We define the \emph{$s$-multiplicity structure} $\RR^s$ for $(\RR, T, p)$ together with the accompanying rooted tree $T^s$ on its elements as follows.
	First, let $\RR'$ be the substructure of $\RR$ induced by the set of elements for which every entry $(t,i)$ in their $p$-route satisfies $i \le s$ for every vertex $p$-type $t$.\footnote{Since the $p$-route of a vertex contains the $p$-routes to all its ancestors, this set of elements is closed under taking ancestors; equivalently, whenever an element is not taken, no element of the subtree rooted at it is.}
	Let~$T'$ be the rooted tree obtained from $T$ by removing all vertices which are not elements of $\RR'$; this is again a tree, as the removed vertices are removed together with their subtrees.

	If $s=p$, let $(\RR^p, T^p) = (\RR', T')$.
	Otherwise $s > p$, and we obtain $(\RR^s, T^s)$ from $(\RR', T')$ by processing the vertices of $T'$ bottom-up, each after all of its descendants, maintaining a current structure together with a current tree on its elements.
	When processing $w$, consider every vertex $p$-type $t$ that is $p$-abundant in $\vtxptype(w)$ and of which $w$ has $c_t < s$ children in the current tree; here, and throughout this definition, all vertex $p$-types are the ones in $(\RR, T)$.
	Let $u$ be the smallest child of $w$ of vertex $p$-type $t$, and add another $s-c_t$ disjoint copies of the subtree rooted at $u$ in the current tree.
	Here a copy is given by a bijection $\iota$ from the elements of that subtree onto fresh elements; we let $\iota(x)$ be a child of $\iota(y)$ whenever $x$ is a child of $y$ in that subtree, we make $\iota(u)$ a further child of $w$, and for every tuple of every relation of the current structure we add the tuple with each entry in the subtree replaced by its $\iota$-image.
	In particular, $V(T^s) = V(\RR^s)$.

	Regarding the total order on the elements of $\RR^s$, each fresh element of $\RR^s$ is sorted \emph{after} all elements of $\RR$, for an arbitrary order among the fresh elements.
\end{definition}

Note that since the subtree of $u$ is already processed when $w$ is, its copies need no further processing.
We now define the natural notion of \emph{retained elements}.

\begin{definition}[Retained elements]
	\label{def:retained_elements}
	Let $s,p \in \mathbb{N}$ with $s\ge p \geq 1$, and let $\RR$ be a relational structure with treedepth decomposition $T$.
	An element of $\RR^s$ is \emph{retained} if it is an element of $\RR$.
	Write $\orig(y)$ for the retained element that an element $y$ of $\RR^s$ is a copy of, obtained by following the copying bijections backwards.
	In particular, $\orig(y) = y$ for retained $y$.
\end{definition}

The following lemma collects the properties of the $s$-multiplicity structure that we use.

\begin{lemma}[Properties of the $s$-multiplicity structure]
	\label{lem:multiplicity_properties}
	Let $s,p \in \mathbb{N}$ with $s \geq p \geq 1$, and let $\RR$ be a relational structure with treedepth decomposition $T$ of height $h$.
	\begin{bracketenumerate}
		\item\label{item:mult_decomposition}
		      $T^s$ is a treedepth decomposition of $\RR^s$ of height $h$.
		\item\label{item:mult_relations}
		      A tuple of retained elements lies in a relation of $\RR^s$ if and only if it lies in that relation in $\RR$.
		\item\label{item:mult_types}
		      Every element $y$ of $\RR^s$ has in $(\RR^s,T^s)$ the vertex $p$-type that $\orig(y)$ has in $(\RR,T)$; in particular $y$ is $p$-abundant in $(\RR^s,T^s)$ if and only if $\orig(y)$ is $p$-abundant in $(\RR,T)$.
		\item\label{item:mult_counts}
		      In $T^s$, every vertex $y$ has exactly $s$ children of each vertex $p$-type that is $p$-abundant in $\vtxptype(y)$, and as many children of every other vertex $p$-type as $\orig(y)$ has in $T$.
		\item\label{item:mult_routes}
		      Every element of $\RR^p$ is retained and has the same $p$-route in $T^p$ as in $T$; in particular every tuple of elements of $\RR^p$ has the same $p$-route pattern in $T^p$ as in $T$.
		\item\label{item:mult_realizing}
		      Let $z \in \mathbb{N}$ with $p \geq z \cdot h$.
		      Then for every $z$-tuple of elements of $\RR$, the realizing vertices of its $p$-route pattern in $T$ are retained in $\RR^p$.
		\item\label{item:mult_patterns}
		      Let $z \in \mathbb{N}$ with $p \geq z \cdot h$.
		      Then a $p$-route pattern of a $z$-tuple occurs in $(\RR,T)$ if and only if it occurs in $(\RR^p,T^p)$.
		\item\label{item:mult_size}
		      If $(\RR,\varphi)$ is an \snpmodelchecking{} instance with $p \geq \Size{\varphi}$, then $\Size{\RR^s} \leq f(h,p,s)$ for a computable function $f$.
	\end{bracketenumerate}
\end{lemma}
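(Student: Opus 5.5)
I will prove the eight items essentially in the order stated, since the later ones build on the earlier ones. The central device is an induction along the bottom-up construction of $(\RR^s,T^s)$ from $(\RR',T')$. I maintain the invariant that every element $y$ of the current structure satisfies three conditions. First, $y$ has the same depth and the same ancestor label as $\orig(y)$. Second, every tuple of the current structure lies on a root-to-leaf path of the current tree. Third, $\orig$ maps the root-to-$y$ path bijectively onto the root-to-$\orig(y)$ path in $T$, preserving depths.

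For (\ref{item:mult_decomposition}) and (\ref{item:mult_relations}), I first note that $\RR'$ is an induced substructure of $\RR$ and $T'$ is ancestor-closed in $T$. The copy step adds the tuple $\iota(\bar u)$ only for tuples $\bar u$ meeting the copied subtree. Such a tuple lies on a path through $w$, so its image again lies on a root-to-leaf path, and the height is unchanged. Every added tuple contains a fresh element, so tuples of retained elements never change; this gives (\ref{item:mult_relations}). Combining with the invariant, the ancestor label of a copy $\iota(x)$ equals that of $x$: the atoms on the root-to-$\iota(x)$ path are exactly the $\iota$-images of those on the root-to-$x$ path, since the ancestors above $w$ are shared.

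For (\ref{item:mult_types}) and (\ref{item:mult_counts}), I use a bottom-up induction on depth in $T^s$. Ancestor labels agree by the invariant, so it suffices to treat subtree labels. Take $y$ with $\orig(y)=x$, and assume all children of $y$ already have the types of their originals. For a non-abundant type $t$, every child of $x$ of type $t$ has index at most $p-1\le s$, so none is removed and none is added; the count is therefore preserved exactly. For an abundant type, $x$ has at least $p$ such children. After truncation at index $s$, at least $p$ remain, and the padding brings the count to exactly $s$. Copies of $y$ replicate the already-processed subtree, so the same counts hold for them. Hence $\delta_y=\delta_x$, since $\star$ records exactly ``at least $p$''. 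The one subtle point is that vertex $p$-types here refer to $(\RR,T)$ and are then recomputed in $(\RR^s,T^s)$. The induction must show that these coincide, which requires care at the leaves and at vertices whose children were deleted. No child of a given type is ever reduced to zero, so the domain of $\delta$ is unchanged.

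Item (\ref{item:mult_routes}) follows because for $s=p$ no copies are made, so every element is retained. Types are unchanged by (\ref{item:mult_types}). Among children of a fixed type, deletion removes only those with index greater than $p$, so the ranks of the survivors are unchanged; routes are therefore equal, and so are patterns. For (\ref{item:mult_realizing}), I apply \Cref{lem:realizing_vertices}: every entry in the route of a realizing vertex has index at most $p$, so the vertex is kept by definition of $\RR'$. For (\ref{item:mult_patterns}), the backward direction is (\ref{item:mult_routes}), since a tuple in $\RR^p$ is a tuple of $\RR$ with the same pattern. For the forward direction, the realizing vertices of the pattern of a tuple of $\RR$ have that same pattern by \Cref{lem:realizing_vertices}. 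They lie in $\RR^p$ by (\ref{item:mult_realizing}), and by (\ref{item:mult_routes}) they have that pattern there as well.

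Finally, for (\ref{item:mult_size}), every vertex of $T^s$ has, by (\ref{item:mult_counts}), at most $\max(s,p-1)=s$ children per type. By \Cref{lem:number_vertex_types} the number of types is at most $g(h,p,\Size{\varphi})$, and here $\Size{\varphi}$ is at most $p$ by assumption. So the tree has at most $(s\,g)^h$ vertices in total. Each tuple has at most $h$ distinct entries on a path, and its arity is at most $\Size{\varphi}\le p$, with at most $p$ relations. Therefore the number of tuples, and hence $\Size{\RR^s}$, is bounded by a computable function of $h,p,s$. I expect the main obstacle to be the type-preservation induction in (\ref{item:mult_types}) together with the invariant about ancestor labels under copying; everything else is bookkeeping.
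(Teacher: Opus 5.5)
Your proposal is correct and follows essentially the paper's route: chain preservation plus the observation that every added tuple contains a fresh element for (1)--(2), a bottom-up induction on depth for (3)--(4) in which copies inherit the child counts of the subtree they replicate, and the same derivations of (5)--(8) from \Cref{lem:realizing_vertices} and the child-count bound. The one step you leave unjustified is that the truncation to $T'$ does not lower the height below $h$, since you only argue that copying leaves it unchanged; the paper settles this by following index-$1$ children along the route of a depth-$h$ vertex, and it also follows from your item (3), because the root keeps its vertex $p$-type and that type determines the height of its subtree.
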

\begin{proof}
	\textbf{(\ref{item:mult_decomposition}).}
	We show that throughout the construction in \Cref{def:s_multiplicity_structure}, every tuple of every relation of the current structure is a chain in the current tree, that is, its entries are pairwise in ancestor-descendant relation; applied to the end of the construction, this gives the claim.
	This holds initially as $T$ is a treedepth decomposition of $\RR$, and it is maintained by the passage to $(\RR',T')$, which leaves the ancestor-descendant relationship among the retained elements and their membership in relation tuples unchanged.
	In a copy step, the new tuples arise from tuples $\bar x$ of the current structure by replacing every entry in the subtree of $u$ by its $\iota$-image.
	If $\bar x$ meets that subtree then, being a chain, each of its remaining entries lies in the subtree as well or is a proper ancestor of $u$ and hence also an ancestor of $\iota(u)$, which is a child of $w$; the entries in the subtree keep their ancestor relation, as $\iota$ is an isomorphism onto the attached copy.
	So the image is again a chain, and $T^s$ is a treedepth decomposition of $\RR^s$.
	Copies sit at the depth of their originals and the deletion only removes vertices, so $T^s$ has height at most $h$.
	Conversely, let $v$ be a vertex of depth $h$ in $T$ and write $\rt(v) = ((t_1,i_1),\dots,(t_h,i_h))$.
	Let $u_1$ be the root of $T$ and, for $\ell \geq 2$, let $u_\ell$ be the child of $u_{\ell-1}$ of vertex $p$-type $t_\ell$ and index $1$.
	This is well defined since by construction $u_{\ell-1}$ has the same vertex $p$-type $t_{\ell-1}$ as the ancestor of $v$ at depth $\ell-1$ and thus as that ancestor has a child of vertex $p$-type $t_\ell$, so does $u_{\ell-1}$.
	Now every entry of $\rt(u_h) = ((t_1,1),\dots,(t_h,1))$ has index $1 \leq s$, so $u_h$ is retained, and $T^s$ has height exactly $h$.

	\medskip\noindent\textbf{(\ref{item:mult_relations}).}
	Passing to $\RR'$ keeps exactly those tuples all of whose entries are retained.
	A copy step adds the image of a tuple of the current structure under a copying bijection $\iota$; such an image differs from the tuple itself only if the tuple has an entry in the subtree of $u$, in which case the image contains a fresh element and is thus not a tuple of retained elements.
	So neither phase changes which tuples of retained elements lie in a relation.

	\medskip\noindent\textbf{(\ref{item:mult_types}) and (\ref{item:mult_counts}).}
	Write $\vtxptype(y) = (\delta_y,\lambda_y)$ as in \Cref{def:vertex-type}, that is, $\delta_y$ is the subtree label of $y$ and $\lambda_y$ is the ancestor label of $y$; we write $\delta^s_y$ and $\lambda^s_y$ for the respective labels of $y$ in $(\RR^s,T^s)$.
	For a vertex $y$, we use $c(y)$ to denote the copy-depth of $y$: the number of times one must follow the copying bijection backwards until a retained vertex is obtained.

	\emph{Ancestor labels.}
	We show $\lambda^s_y = \lambda_{\orig(y)}$ by induction on $c(y)$.
	If $y$ is retained, its root-to-$y$ path is the same in $T^s$ as in $T$, since the set of elements inducing $\RR'$ is closed under taking ancestors, and by property~(\ref{item:mult_relations}) the vertices on it lie in the same relation tuples in both structures; hence the two paths have the same atomic type.
	Otherwise $y = \iota(y_0)$ for a copy step that attached a copy of the subtree rooted at $u$ as a further child of the parent $w$ of $u$, and we have $c(y_0) < c(y)$ and $\orig(y) = \orig(y_0)$.
	Extending $\iota$ by the identity maps the root-to-$y_0$ path onto the root-to-$y$ path: it maps the part inside the subtree of $u$ isomorphically onto the attached copy, whose root $\iota(u)$ is again a child of $w$, and it fixes the ancestors of $u$.
	By the copying rule this bijection preserves membership in every relation, and later copy steps do not change relation membership among already existing elements, so the two paths have the same atomic type and $\lambda^s_y = \lambda^s_{y_0} = \lambda_{\orig(y)}$.

	\emph{Subtree labels.}
	We show $\delta^s_y = \delta_{\orig(y)}$ together with property~(\ref{item:mult_counts}) for $y$, by induction bottom-up on the depth of $\orig(y)$ in $T$ and, at each depth, on $c(y)$; with the ancestor labels this also gives property~(\ref{item:mult_types}).
	If $\orig(y)$ is a leaf of $T$, then $y$ has no children either and both labels are $\varepsilon$.

	Let $y$ be retained and let $n_t$ be its number of children of vertex $p$-type $t$ in $T$; by induction these children keep their vertex $p$-types, and once $y$ has been processed neither they nor the copies attached to $y$ change again.
	Passing to $\RR'$ keeps exactly the $\min(n_t,s)$ of them of smallest index.
	If $n_t < p$, then $t$ is not $p$-abundant in $\vtxptype(y)$, all $n_t$ children survive and none of them is copied, so $\delta^s_y(t) = \delta_y(t) = n_t$; in particular, a vertex $p$-type absent in $T$ remains absent in $T^s$.
	If $n_t \geq p$, then $t$ is $p$-abundant in $\vtxptype(y)$ and the copy step raises the number of children of vertex $p$-type~$t$ from $\min(n_t,s)$ to exactly $s$, which it already is for $s = p$; as $s \geq p$, this gives $\delta^s_y(t) = \delta_y(t) = \star$.

	If $y$ is not retained, it is the copy $\iota(y_0)$ of some $y_0$ with $c(y_0) < c(y)$, and the copy step attaches an isomorphic copy of an already processed subtree containing $y_0$, which is not modified afterwards.
	Corresponding children of $y$ and $y_0$ have the same $\orig$-images and hence, by the induction on the depth, the same vertex $p$-types, so $y$ and $y_0$ have the same child-type counts, $\delta^s_y = \delta^s_{y_0} = \delta_{\orig(y)}$, and property~(\ref{item:mult_counts}) carries over from $y_0$ to $y$.

	Finally, for a non-root vertex $y$ of $T^s$ with parent $w$, the parent of $\orig(y)$ is $\orig(w)$, so the preservation of vertex $p$-types also yields the abundance part of property~(\ref{item:mult_types}).

	\medskip\noindent\textbf{(\ref{item:mult_routes}).}
	Since $s = p$, we have $(\RR^p, T^p) = (\RR', T')$, so no copies are added and every element of $\RR^p$ is retained, and $T^p$ arises from $T$ by removing exactly those elements that have an entry $(t,i)$ with $i > p$ in their $p$-route, together with their subtrees.
	Hence every element of $\RR^p$ keeps its parent and all of its ancestors, and its vertex $p$-type is the same in $\RR^p$ as in $\RR$ by property~(\ref{item:mult_types}).
	The route of the root is $((t,1))$ for its vertex $p$-type $t$, in both $T$ and $T^p$.
	For a non-root element, the surviving children of its parent $w$ of a vertex $p$-type $t$ are precisely the (at most $p$) smallest elements among all children of $w$ of that vertex $p$-type, so such an element is preceded by the same number of children of $w$ of its own vertex $p$-type in $T$ as in $T^p$, and thus receives the same index.
	By induction on the depth, every entry of the $p$-route of an element of $\RR^p$ is therefore unchanged.

	\medskip\noindent\textbf{(\ref{item:mult_realizing}).}
	Let $\mu$ be the $p$-route pattern in $T$ of a $z$-tuple of elements of $\RR$.
	By \Cref{lem:realizing_vertices}, which applies as $p \geq z \cdot h$, the realizing vertices of $\mu$ exist in $T$, and every entry of their $p$-routes has index at most $p$.
	Since $s = p$, an element is deleted only if its $p$-route contains an entry $(t,i)$ with $i > p$, so every realizing vertex is retained.

	\medskip\noindent\textbf{(\ref{item:mult_patterns}).}
	Every element of $\RR^p$ is retained by property~(\ref{item:mult_routes}), so every $z$-tuple of $\RR^p$ has the same $p$-route pattern in $T^p$ as in $T$ and its pattern occurs in $(\RR,T)$.
	Conversely, let $\mu$ be the $p$-route pattern in~$T$ of a $z$-tuple of elements of $\RR$.
	Its realizing vertices are retained in $\RR^p$ by property~(\ref{item:mult_realizing}), they have $p$-route pattern $\mu$ in $T$ by \Cref{lem:realizing_vertices}, and by property~(\ref{item:mult_routes}) their $p$-route pattern in $T^p$ is again $\mu$.

	\medskip\noindent\textbf{(\ref{item:mult_size}).}
	By property~(\ref{item:mult_types}), every vertex $p$-type occurring in $(\RR^s,T^s)$ also occurs in $(\RR,T)$, so by \Cref{lem:number_vertex_types} their number is bounded by $g'(h,p,\Size{\varphi})$ for a computable function $g'$; as $\Size{\varphi} \leq p$, it is bounded by the computable function $g(h,p) \coloneqq \max_{q \leq p} g'(h,p,q)$.
	By property~(\ref{item:mult_counts}), every vertex~$y$ of $T^s$ has exactly $s$ children of each vertex $p$-type that is $p$-abundant in $\vtxptype(y)$, and as many children of every other vertex $p$-type $t$ as $\orig(y)$ has in $T$, which is $\delta_{\orig(y)}(t) \leq p-1 \leq s$ by \Cref{def:vertex-type}.
	Hence $y$ has at most $B \coloneqq g(h,p)\cdot s$ children in total.
	Since $T^s$ has height $h$ by property~(\ref{item:mult_decomposition}), it has at most $\sum_{d = 0}^{h-1} B^{d} \leq (B+1)^{h}$ vertices.
	Finally, $\RR^s$ has the same signature as~$\RR$, as passing to $\RR'$ takes an induced substructure and the copy steps add no relations, and every relation of $\RR$ appears in $\varphi$ by our standing assumption that $\RR$ uses no relations outside $\varphi$.
	Hence $\RR^s$ has at most $\Size{\varphi} \leq p$ relations, each of arity at most $\Size{\varphi} \leq p$, and therefore at most $p \cdot (B+1)^{hp}$ tuples in total.
	Altogether $\Size{\RR^s}$ is bounded by a computable function of $h,p$ and $s$.
\end{proof}

\paragraph*{Pure Solutions.}
All definitions so far concerned only the input relational structure $\RR$; the formula $\varphi$ entered only through bounds on the number and the arity of the relations considered.
Now, we come to the central concept that makes the main result work, the notion of a \emph{$p$-pure solution}. Intuitively, $p$-purity says that if two element tuples behave the same with respect to the input relational structure (i.e., they share a $p$-route pattern), then they should behave the same with respect to the solution.
As we show later, this property of purity is extremely useful for our purposes; it is the main ingredient in the algorithm we provide.

For this, we first define the \emph{solution type}, which is just the atomic type of an element tuple with respect to a solution.

\begin{definition}[Solution type]
	\label{def:solution_type}
	Let $(\RR, \varphi)$ be an \snpmodelchecking{} instance with a solution $\mathcal{S} = R_1',\dots,R_r'$.
	For any $q \in \mathbb{N}$, the \emph{solution type} of a tuple $\bar v\in V(\RR)^q$ in $\mathcal{S}$ is the atomic type of $\bar v$ in the relational structure of the solution, $\soltype_{\mathcal{S}}(\bar v) = \atp_{(V(\RR), R_1', \ldots, R_r')}(\bar{v})$.
\end{definition}

Now $p$-purity simply says that the same $p$-route pattern implies the same solution type.

\defPPureSolution*

\subsection{Extending Pure Solutions}
The crucial feature of this notion of purity is that, for sufficiently large $p$, a $p$-pure solution for $\RR^p$ yields a solution for $\RR$.
This is captured in \Cref{thm:extension}.

\thmExtension*{}
\begin{proof}
	Let $\mathcal{S}^p = R_1^p,\dots,R_r^p$ be a $p$-pure solution for $\RR^p$.

	We first record that \multprop{item:mult_realizing} applies to $z$-tuples of elements for every $z \leq \Size{\varphi}$, since then $zh \leq \Size{\varphi}\cdot h \leq p$.
	As $k \leq \Size{\varphi}$ and $\arity(R_j) \leq \Size{\varphi}$ for every $j \in [r]$, the realizing vertices of the $p$-route pattern of any tuple of at most $\Size{\varphi}$ elements are retained in $\RR^p$; this covers every $k$-tuple and every $\arity(R_j)$-tuple of elements.

	We define relations $R_1',\dots,R_r'$, which will be a solution for $\RR$, as follows.
	Consider any $i\in [r]$ and $\bar v \in V(\RR)^{\arity(R_i)}$.
	Let $\mu$ be the $p$-route pattern of $\bar v$ in $T$ and let $\bar w$ be the realizing vertices of $\mu$, which exist in $\RR^p$ by the arguments above.
	We let $R_i'(\bar v)$ hold if and only if $R_i^p(\bar w)$ holds.

	For the sake of contradiction, assume that $R_1',\dots,R_r'$ do not form a solution for $\RR$.
	Then, there exists a $k$-tuple $\bar v \in V(\RR)^k$  such that $\RR \not \models \psi(R_1',\dots,R_r',\bar v)$.
	Let $\mu$ be the $p$-route pattern of $\bar v$ in $T$ for $\RR$, and $\bar w$ be the realizing vertices of $\mu$ in $T$.
	We show that $\RR^p \not \models \psi(R_1^p,\dots,R_r^p,\bar w)$, contradicting $\mathcal{S}^p$ being a solution for $\RR^p$.

	To this end, since $\psi$ is a quantifier-free boolean formula, it suffices to show that each atom of~$\psi$ evaluates to the same truth value under the assignment $\bar v$ in $\RR$ with $R_1',\dots,R_r'$ as under the assignment $\bar w$ in $\RR^p$ with $R_1^p,\dots,R_r^p$.
	Any atom in the formula $\psi$ asks, for a tuple of variables over $\{x_1, \ldots, x_k\}$, whether the elements assigned to them lie in some relation of $\RR$ or of the solution, or whether two of them are equal.
	Formally, every atom has one of the following three shapes:
	\begin{enumerate}
		\item $E_i(\bar u)$ for some relation $E_i$ of $\RR$, where $\bar u = (x_{n_1}, \ldots, x_{n_a})$ with $a = \arity(E_i)$ and $n_m \in [k]$ for all $m\in [a]$, or
		\item $R_j(\bar u)$ for some $j \in [r]$, where $\bar u = (x_{n_1}, \ldots, x_{n_a})$ with $a = \arity(R_j)$ and $n_m \in [k]$ for all $m\in [a]$, or
		\item  $x_n = x_{n'}$ with $n,n' \in [k]$.
	\end{enumerate}
	Under the assignment $\bar v$ such an atom evaluates at $(v_{n_1},\dots,v_{n_a})$, and under $\bar w$ at $(w_{n_1},\dots,w_{n_a})$.

	Consider the first case, an atom of the form $E_i(\bar u)$.
	Recall from \Cref{lem:realizing_vertices} that $\mu$ is also the $p$-route pattern of $\bar w$. Using \patprop{item:pattern_atp}, we obtain that the tuples $\bar v$ and $\bar w$ have the same atomic type in $\RR$.
	By the definition of the atomic type, we have that $E_i(v_{n_1},\dots,v_{n_a}) \Leftrightarrow E_i(w_{n_1},\dots,w_{n_a})$, for every relation $E_i$ of $\RR$ and all indices $n_1,\dots,n_a \in [k]$, where $a = \arity(E_i)$.
	By \multprop{item:mult_relations}, and since all elements of $\bar w$ persist in $\RR^p$, $E_i(w_{n_1},\dots,w_{n_a})$ holds in $\RR^p$ if and only if it holds in $\RR$.
	Thus $E_i(\bar u)$ holds under $\bar v$ in $\RR$ if and only if it holds under $\bar w$ in $\RR^p$.

	Now consider the second case, $R_j(\bar u)$, for any $j\in [r]$.
	Let $\mu'$ be the $p$-route pattern of $(v_{n_1},\dots,v_{n_a})$ in $T$ and let
	$(y_1,\dots,y_a)$ be the realizing vertices of $\mu'$ in $T$; note that these need not be the elements $(w_{n_1},\dots,w_{n_a})$, which is precisely why the $p$-purity of $\mathcal{S}^p$ is needed below.
	By the above definition of $R_j'$, we have that $R_j'(v_{n_1},\dots,v_{n_a})$ holds if and only if $R_j^p(y_1,\dots,y_a)$ holds.
	By \patprop{item:pattern_subtuple}, and since $\bar v$ and $\bar w$ have the same $p$-route pattern $\mu$, we have that $\mu'$ is also the $p$-route pattern of $(w_{n_1},\dots,w_{n_a})$.

	By the construction of $R'_j$, this gives that
	\[R_j'(v_{n_1},\dots,v_{n_a}) \Leftrightarrow R'_j(w_{n_1},\dots,w_{n_a}) \Leftrightarrow R^p_j(y_1,\dots,y_a).\]
	However, this on its own does not suffice to show our claim since without further reasoning we cannot assume that $R_j'(w_{n_1},\dots,w_{n_a})$ if and only if $R_j^p(w_{n_1},\dots,w_{n_a})$, for that step we require the $p$-purity of $\mathcal{S}^p$.

	By the arguments above, the elements of $(y_1,\dots,y_a)$ are retained in $\RR^p$, and so are those of $(w_{n_1},\dots,w_{n_a})$, as the latter occur in $\bar w$.
	Moreover, by \multprop{item:mult_routes}, elements that are retained in $\RR^p$ have identical $p$-routes in $T$ and $T^p$, and tuples of such elements have the same $p$-route pattern in $T$ and in $T^p$.
	Thus, $(y_1,\dots,y_a)$ and $(w_{n_1},\dots,w_{n_a})$ share a $p$-route pattern in $T^p$. By the $p$-purity of $\mathcal{S}^p$, these two tuples of size $a = \arity(R_j) \le \Size{\varphi}$ have the same solution type in $\mathcal{S}^p$.
	In particular, we have $ R^p_j(y_1,\dots,y_a) \Leftrightarrow R^p_j(w_{n_1},\dots,w_{n_a})$. In combination with the above equivalence, this establishes
	$R_j'(v_{n_1},\dots,v_{n_a}) \Leftrightarrow R^p_j(w_{n_1},\dots,w_{n_a})$ as desired, concluding the second case.

	Finally, consider the third case, an atom $x_n = x_{n'}$ with $n,n' \in [k]$.
	It holds under the assignment~$\bar v$ if and only if $v_n = v_{n'}$, and under $\bar w$ if and only if $w_n = w_{n'}$.
	As noted in the first case, $\bar v$ and $\bar w$ have the same atomic type in $\RR$ by \patprop{item:pattern_atp}, so its equality part in \Cref{def:atomic_type} gives $v_n = v_{n'}$ if and only if $w_n = w_{n'}$.
	Moreover, the elements of $\bar w$ are retained in $\RR^p$ and hence elements of $\RR$ by \Cref{def:retained_elements}, so equality among them means the same in $\RR$ and in $\RR^p$.
	Thus $x_n = x_{n'}$ holds under $\bar v$ in $\RR$ if and only if it holds under $\bar w$ in $\RR^p$, concluding the third case.

	Hence, every atom of $\psi$ evaluates the same under the assignment~$\bar v$ in $\RR$ as under the assignment~$\bar w$ in $\RR^p$. Thus, if $\RR \not \models \psi(R'_1,\dots,R'_r,\bar v)$ then $\RR^p \not \models \psi(R^p_1,\dots,R^p_r,\bar w)$, a contradiction to $\mathcal{S}^p$ being a solution for $\RR^p$.

	\medskip
	It remains to prove the second claim.
	By \Cref{lem:computing_types_and_routes} we can precompute the vertex $p$-types and the $p$-routes of all elements of $\RR$ in the allowed time; in particular we obtain the set of vertex $p$-types occurring in $\RR$.
	As there are boundedly many of them by \Cref{lem:number_vertex_types}, we may at the same time give each an integer identifier, store all entries of all routes as pairs of such an identifier and an index, and tabulate for which pairs $(t,t')$ of vertex $p$-types $t'$ is $p$-abundant in $t$.
	The size of each entry and its lookup time in the table is then bounded by a computable function in $h, p$, and $\Size{\varphi}$, that is, in a function of $p$ by $p\ge \Size{\varphi} \cdot h$.
	Applying \Cref{lem:number_route_patterns} to it for every $z \leq \Size{\varphi}$, we obtain in time $f(p)$ a list of at most $f(p)$ candidates that contains the $p$-route pattern of every tuple of at most $\Size{\varphi}$ elements of $\RR$.

	Consider one candidate $\mu$ and follow each of its $p$-routes $\varrho = (e_1,\dots,e_d)$ in $T^p$ as follows.
	By \Cref{def:p_route} the first entry $e_1 = (t_1,i_1)$ denotes the root of $T^p$ itself and is never abundant, so we check that $t_1$ is the vertex $p$-type of that root and that $i_1 = 1$; we then descend, for $\ell = 2,\dots,d$ in turn, at the entry $e_\ell = (t_\ell,i_\ell)$ to the $i_\ell$-th child of the current vertex of vertex $p$-type $t_\ell$.
	If some route fails this check or gets stuck, we discard $\mu$; this loses nothing, because no tuple of elements of $\RR$ has $p$-route pattern~$\mu$ in that case.
	Indeed, if some $z$-tuple with $z \leq \Size{\varphi}$ had $p$-route pattern $\mu$, then by \multprop{item:mult_patterns} some tuple of $\RR^p$ has $p$-route pattern $\mu$ in $T^p$, so the routes of $\mu$ identify vertices of $T^p$ and both the check and the descent would have succeeded.
	Moreover, a traversal which passes the check and does not get stuck ends in the realizing vertices of $\mu$, as these are by \Cref{def:realizing_vertices} precisely the elements identified by the $p$-routes of $\mu$, which by \multprop{item:mult_routes} are the same in $T^p$ as in $T$.
	For each candidate $\mu$ that survives, say a pattern of $z$ routes, and each $i\in [r]$ with $\arity(R_i) = z$, we record whether $R_i^p$ holds at its tuple of realizing vertices.
	For $\arity(R_i) \neq z$ no entry is stored, as $R_i^p$ contains no $z$-tuples.
	By \multprop{item:mult_size}, which applies as $p \geq \Size{\varphi}\cdot h \geq \Size{\varphi}$, the size of $\RR^p$ and hence that of $\mathcal{S}^p$ are bounded by a computable function of $h$ and $p$, so this preprocessing step is feasible in the stated time.

	Now consider any $i\in [r]$ and any element tuple $\bar u$ of arity $a = \arity(R_i)$.
	Using the pre-computed $p$-routes of its elements, we can determine the $p$-route pattern of $\bar u$ in $\BigO{ah \log (ah)}$ time, by identifying the at most $ah$ indices to be ranked and sorting them.
	Looking that pattern up in the table computed above, say in a trie over its entries, takes another $\BigO{ah}$ time and tells us whether $R_i^p$ holds at the realizing vertices of the pattern, which by the construction of $R_i'$ is the case if and only if $R_i'(\bar u)$ holds.
\end{proof}

Though we do not use this property in what follows, we remark that the constructed solution is itself $p$-pure, since whether a tuple is in some solution relation depends only on the $p$-route pattern of that tuple.

\subsection{Pure Solutions Exist}
\label{sec:completeness_ramsey_argument}

Now, we proceed to the key property that makes our algorithm terminate.
Concretely, we show that, for large enough $p$ (depending only on the treedepth and the formula), we can guarantee that if $\mathcal{R} \models \varphi$, then $(\mathcal{R}^p,\varphi)$ has a $p$-pure solution.
Therefore, not only do $p$-pure solutions certify that we have a yes-instance by \cref{thm:extension}, but also such a certificate is guaranteed to exist.

First, we establish what we mean by ``sufficiently large $p$''.
The concrete value depends on the size of forbidden induced substructures, and relies on well-quasi-orderings.
For this, we define the appropriate notion of obstruction sets.

\begin{definition}[Obstruction set]
	\label{def:obstruction_set}
	Let $h \geq 0$ be an integer and $\varphi$ an \SNP{} sentence.
	Let $\mathcal{S}_h$ be the set of all relational structures (with a signature that matches $\varphi$) with treedepth at most $h$.

	Set $\JJ_{h,\varphi}^0 = \{\mathcal{R}^* \in \mathcal{S}_h \mid \mathcal{R}^* \not \models \varphi\}$, that is, the set of relational structures that have treedepth at most $h$ and are not models of $\varphi$.
	Create $\JJ_{h,\varphi}$ by taking one representative of each isomorphism class of the members of $\JJ_{h,\varphi}^0$ that are minimal under taking induced substructures.
	Then, $\JJ_{h,\varphi}$ is the \emph{obstruction set} of $h, \varphi$.
\end{definition}

By~\cite[Fact 3.3.9]{lopezFirstOrderPreservation2023}, which records a consequence of the work of Ding~\cite{dingSubgraphsWellquasiordering1992}, we have that any class of finite relational structures over a fixed finite signature that has bounded treedepth is well-quasi-ordered by the induced-substructure relation (see also Lopez~\cite{lopezWhenLocalityMeets2022}).
As a consequence, the set $\JJ_{h,\varphi}$ is finite.
\begin{lemma}
	\label{thm:obstruction_bound_function}
	For any integer $h \geq 0$ and any \SNP{} sentence $\varphi$ the obstruction set $\JJ_{h,\varphi}$ is finite.
\end{lemma}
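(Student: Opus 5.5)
The plan is to derive finiteness directly from the well-quasi-ordering recalled just before the statement, using the standard fact that in a well-quasi-order every antichain is finite. I would fix $h$ and $\varphi$ and work in the class $\mathcal{S}_h$ of relational structures of treedepth at most $h$ with the signature of $\varphi$. This is a class of finite relational structures over a fixed finite signature: the signature has at most $\Size{\varphi}$ relations, each of arity at most $\Size{\varphi}$. By \cite[Fact 3.3.9]{lopezFirstOrderPreservation2023}, $\mathcal{S}_h$ is therefore well-quasi-ordered by the induced-substructure relation, where we identify isomorphic structures, or equivalently work with the quasi-order ``embeds as an induced substructure''. The degenerate case $h=0$ is trivial. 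There, $\mathcal{S}_h$ contains no structures with non-empty domain, so $\JJ_{h,\varphi}$ is empty.

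The key step is to show that the members of $\JJ_{h,\varphi}$ are pairwise incomparable. By \Cref{def:obstruction_set}, every $H \in \JJ_{h,\varphi}$ is a minimal element of $\JJ^0_{h,\varphi}$ under taking induced substructures. Suppose $H, H' \in \JJ_{h,\varphi}$ are distinct and $H$ is isomorphic to an induced substructure of $H'$. Because $H \in \JJ^0_{h,\varphi}$, we have $H \not\models \varphi$, and this property is invariant under isomorphism. Minimality of $H'$ then forces this induced substructure to be all of $H'$, which makes $H$ and $H'$ isomorphic. That contradicts the choice of one representative per isomorphism class. Hence $\JJ_{h,\varphi}$ is an antichain in a well-quasi-order.

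It remains to recall why every such antichain is finite. If $\JJ_{h,\varphi}$ were infinite, we could list distinct members $H_1, H_2, \dots$. The well-quasi-ordering gives indices $i<j$ with $H_i$ embedding into $H_j$, which contradicts the incomparability just shown. Note that \Cref{obs:snp-satisfaction-substructure} is not needed for finiteness itself, although it explains why $\JJ_{h,\varphi}$ characterizes $\varphi$ on $\mathcal{S}_h$.

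The only step that needs care is bookkeeping rather than mathematics. We must check that the cited well-quasi-ordering result applies to the class as defined here: a fixed finite signature, treedepth measured on the primal graph, and ordering by induced substructure up to isomorphism. We must also make sure minimality is taken with respect to that same quasi-order. Everything else is immediate.
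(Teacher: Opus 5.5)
Your proposal is correct and follows the paper's proof. Minimality and the choice of one representative per isomorphism class make $\JJ_{h,\varphi}$ an antichain under the induced-substructure relation, and an antichain in the well-quasi-order of bounded-treedepth structures over a fixed finite signature is finite; your explicit incomparability argument and the separate $h=0$ case are valid but only spell out what the paper states briefly.
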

\begin{proof}
	By their minimality in \Cref{def:obstruction_set}, the members of $\JJ_{h,\varphi}$ are pairwise non-isomorphic and none of them is an induced substructure of another, so they form an antichain under the induced-substructure relation.
	They all have treedepth at most $h$ and a signature matching $\varphi$, and a well-quasi-order admits no infinite antichain, so $\JJ_{h,\varphi}$ is finite.
\end{proof}

Now, we show that for large enough $p$, if $\mathcal{R} \models \varphi$, then for all $s \geq p$ we have that $\mathcal{R}^s \models \varphi$.
This means that we can ``blow up'' the structure $\mathcal{R}$ by adding unboundedly many copies of vertices with an abundant vertex $p$-type, and the resulting structure still satisfies $\varphi$.
We give a formal proof of this fact, but note that this is easy to see intuitively because if $p$ is larger than all obstructions, then adding further copies of such abundant vertices cannot make a new obstruction appear.

Towards proving this, we show that the set of small induced substructures of $\mathcal{R}$ and $\mathcal{R}^s$ is identical.

\begin{lemma}
	\label{thm:blowup_structure_same_substructures}
	Let $\mathcal{R}$ be a relational structure with treedepth decomposition $T$, and $s \geq p \geq 1$ integers.
	Then, for any relational structure $W$ on at most $p$ elements, we have that $\mathcal{R}$ contains an induced substructure isomorphic to $W$ if and only if $\mathcal{R}^s$ contains an induced substructure isomorphic to $W$.
\end{lemma}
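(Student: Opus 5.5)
The plan is to reduce both directions to the $p$-multiplicity structure $\RR^p$. First, I would check that $\RR^p$ is an induced substructure of both $\RR$ and $\RR^s$.
\begin{itemize}
	\item For $\RR$: every element of $\RR^p$ is retained by \multprop{item:mult_routes}, and by \multprop{item:mult_relations} relation membership among retained elements agrees with $\RR$.
	\item For $\RR^s$: an element whose $p$-route has all indices at most $p$ also has all indices at most $s$, so the elements of $\RR^p$ are retained in $\RR^s$. Again by \multprop{item:mult_relations}, applied to $\RR^s$, relation membership among them agrees with $\RR$, and hence with $\RR^p$.
\end{itemize}
Consequently, any $W$ occurring in $\RR^p$ occurs in both $\RR$ and $\RR^s$. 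It then suffices to show that every induced substructure on at most $p$ elements of $\RR$, respectively of $\RR^s$, also occurs in $\RR^p$.

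I would prove this with a single embedding argument. Let $(\mathcal{X},T_{\mathcal{X}})$ be either $(\RR,T)$ or $(\RR^s,T^s)$, and let $U\subseteq V(\mathcal{X})$ with $\Size{U}\le p$. Let $C$ be the closure of $U$ under taking ancestors in $T_{\mathcal{X}}$; it is a rooted subtree containing the root. Every vertex of $C$ has at most $\Size{U}\le p$ children in $C$, since distinct children lead to disjoint sets of descendants in $U$.

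I would build an injective map $\sigma\colon C\to V(T^p)$ top-down with three invariants: it preserves the parent relation and depth, and $\sigma(x)$ has in $(\RR^p,T^p)$ the same vertex $p$-type as $x$ has in $(\mathcal{X},T_{\mathcal{X}})$. By \multprop{item:mult_types}, both of these types equal the corresponding type in $(\RR,T)$ of $\orig$ of the vertex, so all types can be compared in $(\RR,T)$. The root goes to the root. Suppose $x\in C$ has been mapped to $x'$, and let $t$ be a vertex $p$-type. Let $c_t\le p$ be the number of children of $x$ in $C$ of type $t$. The number of children of $x'$ of type $t$ in $T^p$ is read off from \multprop{item:mult_counts} with $s=p$, and from the same property for $\mathcal{X}=\RR^s$:
\begin{itemize}
	\item If $t$ is $p$-abundant in the common type, then $x'$ has exactly $p\ge c_t$ children of type $t$.
	\item Otherwise, $x'$ has exactly as many children of type $t$ as $x$ has in $T_{\mathcal{X}}$, namely $\delta(t)\ge c_t$.
\end{itemize}
Either way, $x'$ has enough children of type $t$, and we map the $c_t$ children of $x$ injectively onto distinct ones. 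Because $\sigma$ preserves the parent relation, it is injective on all of $C$ and preserves the ancestor relation in both directions.

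Finally, list $U$ as a tuple $\bar u$ and compare it with $\sigma(\bar u)$ using \Cref{lem:labels_determine_atp} for the pair of structures $(\mathcal{X},T_{\mathcal{X}})$ and $(\RR^p,T^p)$. Depths agree and ancestor relations agree by construction. Ancestor labels agree because they are components of equal vertex $p$-types. The lemma then gives $\atp_{\mathcal{X}}(\bar u)=\atp_{\RR^p}(\sigma(\bar u))$, so $\sigma$ restricted to $U$ is an isomorphism $\mathcal{X}[U]\cong\RR^p[\sigma(U)]$.

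The main obstacle is the bookkeeping in the child-count step for $\mathcal{X}=\RR^s$. There the subtree labels must be read through $\orig$ via \multprop{item:mult_types} and \multprop{item:mult_counts}. One must also make sure that the notion of abundance used for $x$ and for $x'$ is the same one, namely abundance in the common $(\RR,T)$-type.
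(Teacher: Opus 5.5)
Your proof is correct. Its core is the same embedding technique the paper uses: take the ancestor closure of $U$, note that each vertex has at most $p$ children in it, build a top-down type-preserving injective map, and finish with \Cref{lem:labels_determine_atp}. What differs is how you organize the two directions.

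The paper proves the directions separately, by two direct embeddings. The map $\mathcal{R}\to\mathcal{R}^s$ uses that abundant child classes in $T^s$ have $s\ge p$ members. The map $\mathcal{R}^s\to\mathcal{R}$ uses that abundant classes in $T$ have at least $p$ members, and this second one is only sketched as analogous.

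You instead factor both directions through $\mathcal{R}^p$. First you observe that $\mathcal{R}^p$ is an induced substructure of both $\mathcal{R}$ and $\mathcal{R}^s$. The second containment holds because the $p$-pruned element set lies inside the $s$-pruned one, and by \multprop{item:mult_relations} the copy steps never change relation membership among retained elements. You then run a single embedding into $\mathcal{R}^p$, uniformly for $\mathcal{X}\in\{\mathcal{R},\mathcal{R}^s\}$.

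The cost is this extra nesting observation, which the paper never needs. In return you get one argument whose target has exactly known child counts, read off from \multprop{item:mult_counts} with $s=p$: $p$ in abundant classes and $\delta(t)$ otherwise. Both directions are therefore verified identically, rather than one being written out and the other left as analogous. Your route also makes explicit that $\mathcal{R}$, $\mathcal{R}^p$ and $\mathcal{R}^s$ share all induced substructures on at most $p$ elements, with $\mathcal{R}^p$ as the common core.
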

\begin{proof}
	We prove both directions of the lemma in two separate claims.
	\begin{claim}
		\label{claim:induced_1}
		Let $W$ be an induced substructure of $\mathcal{R}$ on at most $p$ elements.
		Then there is an induced substructure in $\mathcal{R}^{s}$ that is isomorphic to $W$.
	\end{claim}
	\begin{claimproof}
		For empty $W$ the claim is trivial, so we assume $W$ contains at least one element.
		Let $\hat W$ be the substructure of $\mathcal{R}$ induced by the ancestor-closure of the elements of $W$ in the tree~$T$.
		We show that $\mathcal{R}^{s}$ contains an induced substructure isomorphic to $\hat W$ too, which yields the claim since $W \subseteq \hat W$.

		This is proven by an explicit mapping $\pi$ of elements of $\hat W$ to elements of $\mathcal{R}^{s}$.
		The root of $\mathcal{R}$, which is part of $\hat W$, is mapped to the root of $\mathcal{R}^{s}$ by $\pi$.
		By \multprop{item:mult_types} the vertex $p$-type of the root of $\RR^s$ and $\RR$ is the same.

		The mapping of non-root elements of $\hat W$ is obtained recursively as follows.
		Let $w$ be an element of $\hat W$ with vertex $p$-type $t_w$ already mapped to some element $w'$ of $\RR^s$ with the same vertex $p$-type~$t_{w}$.
		Then, let $\mathcal{T}$ be the set of vertex $p$-types of all children of $w$ in $\hat W$.
		For each $t \in \mathcal{T}$, let $\mathcal{C}_t$ be the set of children of $w$ in $\hat W$ with type $t$, and let $\mathcal{C}_t^s$ be the set of children of $w'$ in $T^s$ with type $t$.
		We map the elements of $\mathcal{C}_t$ to $\mathcal{C}_t^s$ injectively, which concludes the description of the mapping $\pi$.

		It remains to argue that there is an injective mapping from $\mathcal{C}_t$ to $\mathcal{C}_t^s$.
		Because $w$ and $w'$ have vertex $p$-type $t_w$, we know that $w'$ has children of type $t$ too.
		If $t$ is not abundant in $t_w$, then, by \multprop{item:mult_counts}, $w$ and $w'$ have the same number of children of vertex $p$-type $t$, and therefore the injective mapping exists.
		Otherwise, $w'$ has exactly $s \geq p \geq 1$ children of vertex $p$-type $t$.
		Moreover, since $W$ has at most $p$ elements, we have that $\hat W$ contains at most $p$ siblings: each child of $w$ in~$\hat W$ has a descendant in $W$ (potentially itself).
		Since distinct children have distinct descendants, $w$ can have at most $p$ children.
		Hence, the injective mapping exists.

		Now, we show that the mapping is an isomorphism from $\hat W$ to the substructure $\hat W^s$ of $\RR^s$ induced by the image of $\pi$.
		It is clear from the description that $\pi$ is a bijection onto its image, and that it preserves depths and vertex $p$-types.
		Moreover, $\pi$ maps the children of a vertex injectively to children of its image, so vertices in distinct child subtrees have their images in distinct child subtrees; hence $v$ is an ancestor of $v'$ in $T$ if and only if $\pi(v)$ is an ancestor of $\pi(v')$ in $T^s$.
		Since the ancestor label is part of the vertex $p$-type, \Cref{lem:labels_determine_atp} applies to every tuple of elements of $\hat W$ together with its image and shows that the two have the same atomic type.
		As $\hat W$ and $\hat W^s$ are induced substructures of $\mathcal{R}$ and $\mathcal{R}^s$, respectively, this means that $\pi$ is an isomorphism between them, concluding the proof.
	\end{claimproof}

	A similar proof yields the other direction.

	\begin{claim}
		\label{claim:induced_2}
		Let $W$ be an induced substructure of $\mathcal{R}^{s}$ on at most $p$ elements.
		Then there is an induced substructure in $\mathcal{R}$ that is isomorphic to $W$.
	\end{claim}
	\begin{claimproof}
		The proof is largely analogous to the previous case.
		Use the same construction with the roles of $\RR$ and $\RR^s$ flipped.
		Suppose that $w$ of $\RR^s$ is already mapped to $w'$ of $\RR$ with the same vertex $p$-type $t_w$.
		Then, for each vertex $p$-type $t$ that children of $w$ have, note that $w'$ has the same number of children of vertex $p$-type $t$ if $t$ is not abundant in $t_w$.
		Otherwise, from the definition of abundant $p$-types, we obtain that $w'$ has at least $p$ children of vertex $p$-type $t$, which again suffices.
	\end{claimproof}
	Combining both claims completes the proof.
\end{proof}

Now, we can use this lemma to show that $\mathcal{R} \models \phi$ if and only if $\mathcal{R}^s \models \phi$, for appropriate values of $p$ and $s$.

\thmBlowupFeasibility*{}
\begin{proof}
	First, consider that $\mathcal{R} \not \models \varphi$.
	Then, $\mathcal{R}$ is isomorphic to some structure of $\JJ_{h,\varphi}^0$.
	By the definition of $\JJ_{h,\varphi}$,
	there is some structure $\mathcal{R}^\star \in \JJ_{h, \varphi}$ such that $\mathcal{R}^\star \not \models \varphi$, and $\mathcal{R}^\star$ has an induced copy in $\mathcal{R}$.
	Because $\Size{V(\mathcal R^\star)}\le M\le p$, \cref{thm:blowup_structure_same_substructures} gives an induced copy of $\mathcal R^\star$ in $\mathcal R^{s}$.
	Since \SNP{} satisfaction is closed under taking induced substructures (recall \Cref{obs:snp-satisfaction-substructure}), we obtain $\mathcal{R}^{s} \not \models \varphi$.

	Now, assume that $\mathcal{R}^{s} \not \models \varphi$.
	We have $\td(\mathcal{R}^s) \leq h$ by \multprop{item:mult_decomposition}.
	Therefore, there is some~$\mathcal{R}^\star \in \JJ_{h,\varphi}$ such that $\mathcal{R}^\star$ has an induced copy in $\mathcal{R}^{s}$.
	Since $\Size{V(\mathcal R^\star)}\le M\le p$, \cref{thm:blowup_structure_same_substructures} gives an induced copy in $\mathcal R$, proving $\mathcal R\not\models\varphi$.
\end{proof}

We use this lemma in our upcoming proof about the existence of $p$-pure solutions.
Before providing this proof, we need to define the notion of \emph{realizations of $p$-route patterns using integer tuples}, as they play a crucial role in the proof.
Intuitively, such a realization using an integer tuple just takes a $p$-route pattern $\mu$, and replaces the indices of abundant entries with those provided by the integer tuple.
For technical reasons, we define this notion for the $s$-multiplicity structure.

\begin{definition}[Realization of a $p$-route pattern using an integer tuple]
	\label{def:pattern_realization}
	Let $\mathcal{R}$ be a relational structure with treedepth decomposition $T$, and let $s \geq p \geq 1$ be integers.
	Consider the $s$-multiplicity structure $\mathcal R^s$ for $(\mathcal{R}, T, p)$ with its accompanying tree $T^s$.
	Let $\mu=(\varrho_1,\dots,\varrho_z)$ be a $p$-route pattern occurring in $(\mathcal R,T)$ or in $(\mathcal R^s,T^s)$, with $\beta$ distinct indices in its $p$-abundant entries.
	By normalization, these indices are precisely $1,\dots,\beta$.

	For a strictly increasing tuple $\bar n=(n_1,\dots,n_\beta)$ of integers in $[s]$, the \emph{realization of $\mu$ using $\bar n$} is the element tuple in $\mathcal R^s$ whose $p$-routes in $T^s$ are obtained from $\mu$ by replacing every $p$-abundant entry $(t,j)$ by $(t,n_j)$, leaving all other entries unchanged.
	When $\beta=0$, we use the empty integer tuple and make no replacements.
\end{definition}

Now, similarly to \cref{lem:realizing_vertices}, we show that such a realization exists.

\begin{lemma}
	\label{lem:pattern_realization}
	Let $\mathcal{R}$ be a relational structure with treedepth decomposition $T$, and let $s \geq p \geq 1$ be integers.
	Let $\mu=(\varrho_1,\dots,\varrho_z)$ be a $p$-route pattern appearing in $(\mathcal R,T)$ or in $(\mathcal R^s,T^s)$ with $\beta \leq s$ distinct indices in its $p$-abundant entries.
	Moreover, let $\bar n \in [s]^\beta$ be a vector with strictly increasing entries.
	The realization of $\mu$ using $\bar n$ exists and is unique, and its $p$-route pattern is $\mu$.
	Conversely, every element tuple in $\mathcal R^s$ with $p$-route pattern $\mu$ is the realization of $\mu$ using a unique strictly increasing $\beta$-tuple in $[s]$.
\end{lemma}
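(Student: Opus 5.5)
Here $\mu$ is a $p$-route pattern that occurs in $(\RR,T)$ or in $(\RR^s,T^s)$, and $\bar n\in[s]^\beta$ is strictly increasing. The plan mirrors the proof of \Cref{lem:realizing_vertices}, with $(\RR^s,T^s)$ as the ambient structure.

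First, we can assume that $\mu$ occurs in $(\RR^s,T^s)$. By \multprop{item:mult_types} and \multprop{item:mult_counts}, vertex $p$-types and abundance are the same in $(\RR^s,T^s)$ as in $(\RR,T)$. So a tuple of $\RR$ with pattern $\mu$ yields root-to-vertex type sequences that also exist in $T^s$. The descent below uses only the vertex $p$-types and the non-abundant indices of $\mu$, so it works from either starting point.

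\textbf{Existence and uniqueness.} Fix a route $\varrho_m=(\hat e_1,\dots,\hat e_d)$ with $\hat e_\ell=(t_\ell,j_\ell)$, and let $\varrho_m'$ be the modified route with $j_\ell$ replaced by $n_{j_\ell}$ at abundant entries. We show by induction on $\ell$ that the prefix of length $\ell$ of $\varrho'_m$ is the $p$-route in $T^s$ of some vertex $y_\ell$ with $\vtxptype(y_\ell)=t_\ell$.
\begin{enumerate}
\item Base case: the first entry is the root entry $(t_1,1)$. It is non-abundant and unchanged, and $t_1$ is the root type by \multprop{item:mult_types}.
\item Non-abundant step: the index $j_\ell$ is the original index, which is at most $\delta(t_\ell)$. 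By \multprop{item:mult_counts}, $y_{\ell-1}$ has exactly $\delta(t_\ell)$ children of type $t_\ell$, so the $j_\ell$-th one exists. As in \Cref{lem:realizing_vertices}, this uses that the original tuple's vertex at depth $\ell-1$ has the same type as $y_{\ell-1}$.
\item Abundant step: $y_{\ell-1}$ has exactly $s$ children of type $t_\ell$ by \multprop{item:mult_counts}, and $n_{j_\ell}\le s$.
\end{enumerate}
Uniqueness holds because a $p$-route identifies at most one vertex, as noted after \Cref{def:p_route}.

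\textbf{The pattern of the realization is $\mu$.} Let $S'$ be the set of routes $\varrho'_1,\dots,\varrho'_z$. By \normprop{item:norm_structure}, an entry of $\varrho'_m$ is abundant exactly when the corresponding entry of $\mu$ is, since the types are unchanged. Hence $I(S')=\{n_1,\dots,n_\beta\}$; every $j\in[\beta]$ occurs because the abundant indices of the normalized $\mu$ are exactly $[\beta]$. Since $\bar n$ is strictly increasing, $\pi_{S'}(n_j)=j$. Therefore $\nu_{S'}(\varrho'_m)=\varrho_m$, with non-abundant entries untouched.

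\textbf{Converse.} Let $\bar v$ be a tuple in $\RR^s$ with pattern $\mu$, and let $S$ be the set of its routes. Take $\bar n$ to be $I(S)$ listed in increasing order; it has length $\beta$ because $\pi_S$ is a bijection onto $[\Size{I(S)}]$ and the image is $[\beta]$. All these indices lie in $[s]$ by \multprop{item:mult_counts}. Inverting the normalization shows that replacing each abundant index $j$ of $\mu$ by $n_j=\pi_S^{-1}(j)$ recovers $\rt(v_m)$. So $\bar v$ is the realization of $\mu$ using $\bar n$.

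Uniqueness of $\bar n$ follows from the previous step: any $\bar n'$ realizing $\bar v$ must satisfy $I(S)=\{n'_1,\dots,n'_\beta\}$, and strict monotonicity then fixes the order.

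\textbf{Main obstacle.} The delicate point is the bookkeeping that an abundant index $j$ may appear in several routes and at different depths. One must check that $I(S')$ is exactly $\{n_j\}$ and that every $j\in[\beta]$ is actually used. Both follow from $\mu$ being normalized: by \normprop{item:norm_idempotent}, its abundant indices form exactly $[\beta]$.
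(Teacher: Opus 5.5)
Your proof is correct and follows essentially the same route as the paper. You descend through $T^s$ using \multprop{item:mult_types} and \multprop{item:mult_counts}, recover $\mu$ by renormalizing (since $\bar n$ is strictly increasing), and get the converse from the increasing list of distinct abundant indices. You are simply more explicit about the bookkeeping (that every $j\in[\beta]$ occurs in $\mu$, and that $\bar n$ is unique), and your opening reduction ``we may assume $\mu$ occurs in $(\RR^s,T^s)$'' is unnecessary, since, as you note yourself, the descent works from either starting point.
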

\begin{proof}
	For the first assertion, replace each abundant index $j$ in $\mu$ by $n_j$ and follow the resulting routes in $T^s$.
	The root entry remains unchanged and identifies the root of $T^s$, whose vertex $p$-type agrees with that of the root of $T$ by \multprop{item:mult_types}.
	At each subsequent step, the current parent vertex has the vertex $p$-type specified by the preceding entry.
	If the current entry has an abundant vertex $p$-type, the parent vertex has exactly $s$ children of the required vertex $p$-type, and the substituted index is at most $s$.
	For a non-abundant vertex $p$-type, the parent vertex $p$-type records the exact number of children of the required type, and the unchanged index remains valid.
	This constructs a unique vertex for each route.
	Therefore, the realization is unique.

	Normalizing the resulting tuple again replaces $n_j$ by $j$ in every abundant entry, since $\bar n$ is strictly increasing, and hence recovers $\mu$.
	Thus the realizing vertex tuple has $p$-route pattern $\mu$.

	Conversely, the increasing list of distinct abundant indices in any element tuple with pattern $\mu$ has length $\beta$ and is its unique substitution tuple.
	Each index of an abundant entry is at most $s$ by the definition of $s$-multiplicity structures.
\end{proof}

Since later arguments build on it, we now define the standard notion of the finite hypergraph Ramsey number.

\begin{definition}[Hypergraph ramsey number]
	\label{def:hypergraph_ramsey}
	For any integers $x,y,c \geq 1$, with $x \geq y$, let $\ramsey_y(x,c)$ be the smallest integer $N$ such that, for any $y$-uniform hyperclique $H$ on at least $N$ vertices, where the edges are colored using at most $c$ colors, there exists a $y$-uniform hyperclique in~$H$ on $x$ vertices whose edges all have the same color.
\end{definition}

The Ramsey theorem for finite hypergraphs guarantees that this number $\ramsey_y(x,c)$ indeed exists~\cite[Theorem B]{ramseyProblemFormalLogic1930}.
Finally, we have all ingredients to prove the existence of a $p$-pure solution.

\thmExistenceOfPureSolution*{}
\begin{proof}
	We begin by sketching the idea of the proof.

	\subparagraph*{Proof idea.}
	By assumption, we have that $\mathcal{R}$ has a solution.
	Then, from \cref{thm:blowup_preserves_feasibility} we obtain that $\RR^p$ has a solution too.
	Our aim is showing that indeed $\RR^p$ has a $p$-pure solution.
	Recall that the main property of such a solution is that, for $p$-route patterns of bounded length, all element tuples with that $p$-route pattern have the same solution type.

	For a single $p$-route pattern, say $\mu$, we seek a set of indices on which all realizations of $\mu$ have the same solution type.
	Keeping only these indices in every abundant child-type class gives a structure $\widetilde{\mathcal R}$.
	Then, we could project our solution down to $\widetilde{\mathcal{R}}$, and would find that $p$-route pattern~$\mu$ is always mapped to the same solution type, as desired.
	Of course, $\widetilde{\mathcal{R}}$ might not contain the same $p$-route patterns as $\mathcal{R}$: taking substructures can change which vertices are $p$-abundant, completely changing the picture.
	However, if every abundant child-type class still has at least $p$ children, the vertex $p$-types remain unchanged.
	Ramsey theory will provide such a common set of indices while retaining sufficiently many children in every abundant class.
	We use nested Ramsey arguments on the structure $\mathcal{R}^{p_\alpha}$ for a large $p_\alpha$ to ensure this works for many patterns.

	\subparagraph*{The blowup structure.}
	First, we determine the value of integer $p_\alpha$.
	Let $\mu_1,\dots,\mu_\alpha$ be the list of all $p$-route patterns of size up to $\Size{\varphi}$ appearing in $\mathcal{R}$, in an arbitrary order.
	Moreover, let $\phi = \exists R_1,\dots,R_r \,\forall x_1,\dots,x_k \colon \psi$.

	A $p$-route pattern of length $z\le\Size{\varphi}$ uses at most $z(h-1)\le p$ distinct indices in abundant entries.
	By definition, every index of an abundant entry in a $p$-route pattern is therefore at most $p$.
	Moreover, by \multprop{item:mult_patterns}, which applies as $p \geq \Size{\varphi} \cdot h \geq zh$ for every $z \leq \Size{\varphi}$, the sets of $p$-route patterns of tuples of length at most $\Size{\varphi}$ appearing in $(\mathcal R,T)$ and in $(\mathcal R^p,T^p)$ are identical.

	For each $\mu_i$, let $\beta_i$ be the number of distinct indices of abundant entries of $\mu_i$ (such that a realization of $\mu_i$ requires an integer tuple of length $\beta_i$).
	Moreover, set
    $
        D=2^{\Size{\varphi}^2+\sum_{j=1}^r \Size{\varphi}^{\arity(R_j)}}.
    $
    By \cref{lem:atomic_type_bound} and \cref{def:solution_type} the number of possible solution types for tuples realizing $\mu_i$ is bounded by $D$ in any possible solution.

	Define $p_0 = p$.
	Inductively define
	\begin{align*}
		p_i = \begin{cases}
			\ramsey_{\beta_i}(p_{i-1},D) & \text{if $\beta_i > 0$},\\
			p_{i-1} & \text{otherwise,}
		\end{cases}
	\end{align*}
	for each $i \in [\alpha]$, where $\ramsey_{\beta_i}(p_{i-1},D)$ denotes the finite hypergraph Ramsey number of \cref{def:hypergraph_ramsey}.
	Since $\beta_i\le p\le p_{i-1}$, the recurrence is well-defined and $p_i\ge p_{i-1}\ge p$.

	\subparagraph*{Obtaining the $\boldsymbol{p}$-pure solution.}
	Since $\beta_i\le p\le p_\alpha$ for every $i\in[\alpha]$, \Cref{lem:pattern_realization} gives that each $p$-route pattern $\mu_i$ also occurs in $(\mathcal R^{p_\alpha},T^{p_\alpha})$.
	Fix a solution $\mathcal S$ for the structure $\mathcal R^{p_\alpha}$, which exists by \cref{thm:blowup_preserves_feasibility}.
	We prove by induction that for each $i \in [0,\alpha]$ there exists a set $N_i\subseteq[p_\alpha]$ of size $p_i$, such that all realizations of each of the $p$-route patterns $\mu_\alpha,\dots,\mu_{i+1}$ using integers in $N_i$ have the same solution type in $\mathcal S$.

	In the base case $i = \alpha$, the conditions are vacuous, and we set $N_\alpha=[p_\alpha]$.

	For the induction step, we assume the statement holds for index $i \geq 1$, and we show it holds for $i-1$ too.
	Let $N_i$ be the set provided by the induction hypothesis.

	If $\beta_i = 0$, then there is only a single element tuple with $p$-route pattern $\mu_i$ in $\mathcal{R}^{p_\alpha}$, and therefore $\mu_i$ is vacuously $p$-pure in $\mathcal S$.
	In this case, we additionally have $p_i = p_{i-1}$, and setting $N_{i-1}=N_i$ suffices.

	Otherwise, we have $\beta_i > 0$.
	We now build an edge-colored $\beta_i$-uniform hyperclique $H_i$ for our application of Ramsey theory.
	The vertex set of the hypergraph is the integer set $N_i$.
	For each tuple $\bar n = n_1,\dots,n_{\beta_i}$ of $\beta_i$ distinct vertices of the graph, ordered increasingly, we color the corresponding hyperedge (that is, the hyperedge whose elements are those of $\bar n$) as follows.
	Let $\bar v$ be the realization of $\mu_i$ using $\bar n$ in $\mathcal{R}^{p_\alpha}$, as given by \Cref{lem:pattern_realization} with $s=p_\alpha$.
	Then, the tuple $\bar v$ has some solution type~$t$ in $\mathcal S$.
	We color the hyperedge on $\bar n$ with color $t$.

	Hypergraph Ramsey yields that there is a $\beta_i$-uniform hyperclique on exactly $p_{i-1}$ vertices in~$H_i$ whose edges all have the same color.
	Let $N_{i-1}$ be the vertex set of this $\beta_i$-uniform hyperclique in $H_i$.
	We claim that all realizations of each of $\mu_\alpha,\dots,\mu_i$ using integers in $N_{i-1}$ have the same solution type.

	That this holds for $\mu_\alpha,\ldots,\mu_{i+1}$ follows from the induction hypothesis, since $N_{i-1}\subseteq N_i$.
	For $\mu_i$, note the following: by \Cref{lem:pattern_realization}, each considered vertex tuple with $p$-route pattern $\mu_i$ is uniquely identified by the increasing list $\bar n$ of its distinct abundant indices.
	This list is a $\beta_i$-tuple of $N_{i-1}$.
	Therefore, from the fact that we have a monochromatic hyperclique on $N_{i-1}$, we obtain that these all have the same solution type.

	At $i=0$, we obtain a set $N_0$ of size $p$; write $N_0=\{n_1<\cdots<n_p\}$.
	Recall that in $\mathcal{R}^{p_\alpha}$, for each vertex $p$-type $t$ that is abundant in the vertex $p$-type of its parent, the parent vertex has exactly $p_\alpha$ children of type $t$.
	In $\mathcal{R}^p$, every such class has exactly $p$ children; non-abundant child-type classes at parents of the same vertex $p$-type have the same size in both structures.

	We now define an induced embedding $\pi$ from $\RR^p$ to $\RR^{p_\alpha}$.
	Concretely, $\pi$ maps the root of $\RR^p$ to the root of $\RR^{p_\alpha}$, and for each vertex $w$ already mapped to $\pi(w)$, we map its children as follows.
	If $t$ is abundant in the vertex $p$-type of $w$, we map the $j$th smallest child of $w$ with vertex $p$-type $t$ to the $n_j$th smallest child of $\pi(w)$ with vertex $p$-type $t$.
	Otherwise, we map the $j$th smallest child of~$w$ with vertex $p$-type $t$ to the $j$th smallest child of $\pi(w)$ with vertex $p$-type $t$.
	The roots have the same vertex $p$-type by \multprop{item:mult_types}, and the child counts above make $\pi$ a well-defined injection preserving depths, vertex $p$-types, and the ancestor relation in both directions.
	Since the ancestor label is part of the vertex $p$-type, \Cref{lem:labels_determine_atp} then gives that every tuple of elements of $\RR^p$ has the same atomic type as its image, that is, $\pi$ is an induced embedding.

	Write $\mathcal S=R_1',\dots,R_r'$.
	Create $\mathcal{S}^p=R_1^p,\dots,R_r^p$ by setting
	\[
		R_j^p(v_1,\dots,v_a) \quad\Longleftrightarrow\quad R_j'(\pi(v_1),\dots,\pi(v_a))
	\]
	for each $j\in[r]$, with $a=\arity(R_j)$, and each tuple $(v_1,\dots,v_a)\in V(\RR^p)^a$.

	Consider any assignment of vertices $v_1,\dots,v_k$ of $\RR^p$ to the variables $x_1,\dots,x_k$, and compare it with the assignment $\pi(v_1),\dots,\pi(v_k)$ in $\RR^{p_\alpha}$.
	For any relation $E_i$ of the input relational structures, we have that $E_i(v_1,\dots,v_z)$ holds in $\mathcal{R}^{p}$ if and only if $E_i(\pi(v_1),\dots,\pi(v_z))$ holds in $\RR^{p_\alpha}$, since $\pi$ is an induced embedding.
	The same holds for the equality relation: $v_1$ and $v_2$ are identical if and only if $\pi(v_1)$ and $\pi(v_2)$ are.
	Finally, for any relation $R_j^p$ of the solution, by construction $R_j^p(v_1,\dots,v_a) \Leftrightarrow R_j'(\pi(v_1),\dots,\pi(v_a))$.
	Therefore, the assignment $v_1,\dots,v_k$ fulfills $\psi$ if and only if the assignment $\pi(v_1),\dots,\pi(v_k)$ does.
	Hence, we have that $\mathcal S^p$ is a solution for $\RR^p$.

	We claim that $\mathcal{S}^p$ is a $p$-pure solution for $\RR^p$.
	Consider any element $v$ of $\RR^p$.
	By construction, the $p$-route of $\pi(v)$ is obtained from that of $v$ by replacing each abundant entry $(t,j)$ by $(t,n_j)$, leaving all other entries unchanged.

	Now, consider any element tuple $v_1,\dots,v_z$ in $\RR^p$ of length at most $\Size{\varphi}$.
	By the argument above, $\pi(v_1),\dots,\pi(v_z)$ have the same atomic type as $v_1,\dots,v_z$.
	Since mapping $j$ to $n_j$ is a strictly increasing mapping, and this mapping is used in every abundant entry, the $p$-route pattern of $v_1,\dots,v_z$ is the same as the $p$-route pattern of $\pi(v_1),\dots,\pi(v_z)$.
	Moreover, all abundant indices of the $p$-routes of the image tuple lie in $N_0$.

	For any $\mu_i$, by the Ramsey argument above, all element tuples of $\mathcal{R}^{p_\alpha}$ with $p$-route pattern $\mu_i$ where all abundant indices lie in $N_0$ have the same solution type in $\mathcal{S}$.
	By our arguments earlier, all element tuples $v_1,\dots,v_z$ with $p$-route pattern $\mu_i$ in $\mathcal{R}^p$ have image tuples $\pi(v_1),\dots,\pi(v_z)$ with the same $p$-route pattern, such that all abundant indices of their $p$-routes lie in $N_0$.
	Hence, all these image tuples have the same solution type in $\mathcal{S}$.
	Now, observe that the solution type of $v_1,\dots,v_z$ in $\mathcal{S}^p$ and of $\pi(v_1),\dots,\pi(v_z)$ in $\mathcal{S}$ is also the same by construction of $\mathcal S^p$ and injectivity of $\pi$.
	Therefore, we have that $\mathcal{S}^p$ is $p$-pure.
\end{proof}

\subsection{Combining Everything: The Algorithm}
\label{sec:algorithm}
With \Cref{thm:extension,thm:existenceOfPureSolution} at hand, we now have two main ingredients for establishing our main theorem.

\mainTheorem*

We proceed by describing (the decision variant of) the algorithm and establishing its properties.
These properties then culminate in the proof of the main theorem.

\begin{algorithm}
	\label{alg:main-algorithm}
	Let $(\mathcal{R},\varphi = \exists R_1,\dots,R_r \,\forall x_1,\dots,x_k \colon \psi)$ be an instance of \snpmodelchecking{}.
	Compute the primal graph of \RR\ and a treedepth decomposition $T$ of it of height $h \leq \td(\RR) + 1$ using the algorithm by Reidl, Rossmanith, S{\'{a}}nchez Villaamil and Sikdar~\cite{ReidlRVS14}.
	Consider iteratively $j= 0,1,2, \ldots$ and let $p = j + \Size{\varphi} \cdot h$.
	In each iteration, compute the $p$-multiplicity structure $\RR^p$ for $(\RR, T, p)$ as well as the set $\mathcal{L}^p$ of all solutions of $\RR^p$.
	Consider the following cases in order:
	\begin{enumerate}
		\item if $\mathcal{L}^p = \emptyset$, we output $\RR \not\models \varphi$, i.e., we face a negative instance,\label{alg:main-algorithm-no-case}
		\item if $\mathcal{L}^p$ contains a $p$-pure solution $\mathcal{S}^p$ for $\RR^p$ and $T^p$, we retain $\mathcal{S}^p$ and output $\RR \models \varphi$, i.e., we face a positive instance,\label{alg:main-algorithm-yes-case}
		\item otherwise, proceed to the next iteration.\label{alg:main-algorithm-otherwise-case}\noindent\hfill{\small\textcolor{black}{\textbf{End of Algorithm.}}}\par
	\end{enumerate}
\end{algorithm}

While \Cref{alg:main-algorithm} is easy to state, it is a priori not clear that it even terminates.
Thus, we first show in \Cref{lem:algorithm-termination-correctness} that it does so and establish its correctness.

\begin{lemma}
	\label{lem:algorithm-termination-correctness}
	Let $(\mathcal{R},\varphi = \exists R_1,\dots,R_r \,\forall x_1,\dots,x_k \colon \psi)$ be an instance of \snpmodelchecking{}.
	Moreover, let $M = \max\bigl(\{1\}\cup\{\Size{V(H)}\colon H\in\JJ_{\td(\RR)+1,\varphi}\}\bigr)$, where $\JJ_{\td(\RR)+1,\varphi}$ is the obstruction set from \Cref{def:obstruction_set}.
	Then, \Cref{alg:main-algorithm} has the following properties:
	\begin{enumerate}
		\item If it enters Case~\ref{alg:main-algorithm-no-case}, then $\RR \not\models \varphi$.
		\item If it enters Case~\ref{alg:main-algorithm-yes-case}, then $\RR \models \varphi$.
		\item In the iteration for $p = \max\{M,\Size{\varphi} (\td(\RR)+1)\}$ (if reached), it does not enter Case~\ref{alg:main-algorithm-otherwise-case}.
	\end{enumerate}
	In particular, \Cref{alg:main-algorithm} terminates in an iteration with $p \leq \max\{M,\Size{\varphi} (\td(\RR)+1)\}$ and correctly determines whether $\mathcal{R} \models \varphi$.
\end{lemma}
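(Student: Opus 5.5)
The plan is to check the three properties one at a time and then obtain termination from the third, using that every iteration has $p \geq \Size{\varphi}\cdot h$.

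\emph{Case~\ref{alg:main-algorithm-no-case}.} Suppose $\mathcal{L}^p = \emptyset$, so $\RR^p \not\models \varphi$. By \multprop{item:mult_routes}, every element of $\RR^p$ is retained. By \multprop{item:mult_relations}, $\RR^p$ is then exactly the induced substructure of $\RR$ on the retained elements. If $\RR \models \varphi$, \Cref{obs:snp-satisfaction-substructure} would give $\RR^p \models \varphi$, a contradiction. Hence $\RR \not\models \varphi$.

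\emph{Case~\ref{alg:main-algorithm-yes-case}.} Here $\RR^p$ has a $p$-pure solution and $p = j + \Size{\varphi}\cdot h \geq \Size{\varphi}\cdot h$. So \Cref{thm:extension}, applied to the decomposition $T$ of height $h$, yields a solution of $\RR$, and thus $\RR \models \varphi$.

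\emph{The third property.} Let $p^\ast = \max\{M,\Size{\varphi}(\td(\RR)+1)\}$ and suppose the iteration with $p = p^\ast$ is reached. This requires $p^\ast \geq \Size{\varphi}\cdot h$, which holds because $h \leq \td(\RR)+1$. I split on whether $\RR \models \varphi$.
\begin{itemize}
\item If $\RR \models \varphi$: \Cref{thm:existenceOfPureSolution} needs $p \geq \max\{M', \Size{\varphi}h\}$, where $M'$ is the obstruction bound for $\JJ_{h,\varphi}$. We have $p^\ast \geq \Size{\varphi}h$. For $M' \leq M$, note that $h \leq \td(\RR)+1$, so every structure of treedepth at most $h$ also has treedepth at most $\td(\RR)+1$. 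Thus every member of $\JJ_{h,\varphi}$ is a non-model in $\mathcal{S}_{\td(\RR)+1}$. It is minimal there as well, since its induced substructures still have treedepth at most $h$ and are models by minimality in $\mathcal{S}_h$. Hence it is isomorphic to a member of $\JJ_{\td(\RR)+1,\varphi}$, giving $M' \leq M \leq p^\ast$. The theorem then gives a $p$-pure solution in $\mathcal{L}^p$, so Case~\ref{alg:main-algorithm-yes-case} applies.
\item If $\RR \not\models \varphi$: the structure $\RR$ has treedepth at most $h$, so by the same monotonicity of obstruction sets it contains an induced copy of some $H \in \JJ_{h,\varphi}$ with $\Size{V(H)} \leq M' \leq p^\ast$. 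Then \Cref{thm:blowup_structure_same_substructures} with $s = p$ gives an induced copy of $H$ in $\RR^p$. By \Cref{obs:snp-satisfaction-substructure}, $\RR^p \not\models \varphi$, so $\mathcal{L}^p = \emptyset$ and Case~\ref{alg:main-algorithm-no-case} applies. (Alternatively, \Cref{thm:blowup_preserves_feasibility} with $s = p$ gives this directly.)
\end{itemize}

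\emph{Termination.} The iterations run over all $p \geq \Size{\varphi}h$, and $p^\ast \geq \Size{\varphi}h$, so either the algorithm stops earlier or it reaches $p^\ast$ and stops there by the third property. Correctness follows from the first two properties.

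The main obstacle is the mismatch between the height $h$ and $\td(\RR)+1$. The earlier results are phrased with $M$ defined from $\JJ_{h,\varphi}$, whereas this lemma uses $\JJ_{\td(\RR)+1,\varphi}$. The monotonicity argument above, that obstructions for the smaller treedepth bound are also obstructions for the larger one, is the step that needs care.
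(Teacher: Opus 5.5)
Your proof is correct and follows the same route as the paper. The no-case uses that $\RR^p$ is an induced substructure of $\RR$ together with \Cref{obs:snp-satisfaction-substructure}, the yes-case uses \Cref{thm:extension}, and the third property uses \Cref{thm:blowup_preserves_feasibility} and \Cref{thm:existenceOfPureSolution}. Your monotonicity argument (every member of $\JJ_{h,\varphi}$ is, up to isomorphism, a member of $\JJ_{\td(\RR)+1,\varphi}$, so the obstruction bound for height $h$ is at most $M$) makes explicit a step the paper uses silently, since those two results are stated with the bound from $\JJ_{h,\varphi}$.
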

\begin{proof}
	Let $T$ be the treedepth decomposition of $\RR$ of height $h \leq \td(\RR) + 1$ computed in \Cref{alg:main-algorithm}.
	We show the three properties separately.

	First, assume that \Cref{alg:main-algorithm} enters Case~\ref{alg:main-algorithm-no-case}.
	This means that for the $p$ that we consider in that iteration, the $p$-multiplicity structure $\RR^p$ for $(\RR, T, p)$ has no solution.
	To see that this implies $\RR \not\models \varphi$, assume otherwise.
	Since $\RR^p$ is an induced substructure of $\RR$ and \SNP-satisfaction is closed under taking induced substructures (recall \Cref{obs:snp-satisfaction-substructure}), we get $\RR^p \models \varphi$.
	However, this contradicts the precondition of Case~\ref{alg:main-algorithm-no-case}, i.e., that the set of solutions $\mathcal{L}^p$ is empty.
	Hence, $\RR \models \varphi$ cannot hold.

	Second, assume that \Cref{alg:main-algorithm} enters Case~\ref{alg:main-algorithm-yes-case}.
	This means that the solution set $\mathcal{L}^p$ contains a $p$-pure solution $\mathcal{S}^p$ for $\RR^p$ and $T^p$.
	\Cref{thm:extension} together with $p \geq \Size{\varphi}\cdot h$ guarantees that we can extend the $p$-pure solution $\mathcal{S}^p$ to a solution for $\RR$, which implies $\RR\models\varphi$.

	Third, consider the iteration for $p = \max\{M,\Size{\varphi}(\td(\RR) + 1)\}$.
	We now argue that \Cref{alg:main-algorithm} either enters Case~\ref{alg:main-algorithm-no-case} or Case~\ref{alg:main-algorithm-yes-case}.
	If $\RR \not\models \varphi$, by \Cref{thm:blowup_preserves_feasibility}, we have that $\RR^p\not\models\varphi$ since $p \geq M$.
	Hence, $\mathcal{L}^p = \emptyset$ and \Cref{alg:main-algorithm} enters Case~\ref{alg:main-algorithm-no-case}.
	If $\RR \models \varphi$, since we have $p = \max\{M,\Size{\varphi}(\td(\RR) + 1)\} \geq \max\{M,\Size{\varphi}h\}$, \Cref{thm:existenceOfPureSolution} ensures that $\RR^p$ has a $p$-pure solution $\mathcal{S}^p$ and \Cref{alg:main-algorithm} enters Case~\ref{alg:main-algorithm-yes-case}.
	In both cases \Cref{alg:main-algorithm} does not enter Case~\ref{alg:main-algorithm-otherwise-case}.

	To combine everything, observe that \Cref{alg:main-algorithm} iteratively considers every $p \geq \Size{\varphi} \cdot (\td(\RR) + 1)$.
	Since $\max\{M,\Size{\varphi}(\td(\RR) + 1)\} \geq \Size{\varphi} \cdot (\td(\RR) + 1)$, the algorithm terminates at the latest for $p = \max\{M,\Size{\varphi}(\td(\RR) + 1)\}$ by entering either Case~\ref{alg:main-algorithm-no-case} or Case~\ref{alg:main-algorithm-yes-case}.
	Moreover, $\RR^p$ has bounded size by \Cref{lem:multiplicity_properties} and thus we can brute-force candidate solutions.
	By the first two properties, we obtain correctness of \Cref{alg:main-algorithm}.
\end{proof}

While \Cref{lem:algorithm-termination-correctness} guarantees correctness \emph{and} termination of \Cref{alg:main-algorithm}, it alone does not yet show that the running time of our algorithm can be bounded by a computable function in the parameters $\td(\RR)$ and $\Size{\varphi}$.
Indeed, the termination property in \Cref{lem:algorithm-termination-correctness} relies on the size $M$ of the largest element in the obstruction set $\JJ_{\td(\RR) + 1,\varphi}$.
We only know that $\JJ_{\td(\RR) + 1,\varphi}$ is finite; recall \Cref{thm:obstruction_bound_function}.
Moreover, the well-quasi-ordering argument only implies that $M$ exists but provides no bound on $M$ in terms of $\td(\RR)$ and $\Size{\varphi}$.
We now show that we can actually bound $M$ by a computable function in the parameters.

\theoremObstructionSetComputable*
\begin{proof}
	We present a procedure $P_{\varphi}$ to compute $\JJ_{h,\varphi}$.
	On a high level, $P_{\varphi}$ emulates the behavior of \Cref{alg:main-algorithm} when given a relational structure $\RR$ with $\td(\RR) \leq h$ and $\varphi$.
	We then aim to find a value $p_{\varphi}$ for $p$ such that we enter Case~\ref{alg:main-algorithm-no-case} or Case~\ref{alg:main-algorithm-yes-case} for \emph{every} structure that can arise in the corresponding iteration of \Cref{alg:main-algorithm}.
	The negative instances witnessed at $p_{\varphi}$ can then be used to construct $\JJ_{h,\varphi}$.

	We now describe the procedure $P_{\varphi}$.
	We start with $p = \Size{\varphi} \cdot (h+1)$ and enumerate every pair $(\mathcal{X}, T_{\mathcal{X}})$ of a relational structure $\mathcal{X}$ with an accompanying treedepth decomposition $T_{\mathcal{X}}$ where $\mathcal{X}$ matches the signature of $\varphi$ such that $\mathcal{X}^p = \mathcal{X}$, $\td(\mathcal{X}) \leq h$, and $T_{\mathcal{X}}$ has height at most $h+1$.
	Recall that the number of distinct vertex $p$-types can be bounded by some computable function in $h$, $p$, and $\Size{\varphi}$ (\Cref{lem:number_vertex_types}).
	Since $\mathcal{X}^p = \mathcal{X}$, every vertex in $T_{\mathcal{X}}$ has at most $p$ children of every possible vertex $p$-type, which allows us to bound the degree of every vertex in $T_{\mathcal{X}}$ by some computable function in these values.
	Since the height of $T_{\mathcal{X}}$ is at most $h+1$, this ensures that $\Size{T_{\mathcal{X}}}$ can be bounded by a computable function in $h$, $p$, and $\Size{\varphi}$.
	Requiring $\mathcal{X}^p = \mathcal{X}$ bounds not only its size by a computable function $g'(h, p)$ (recall \Cref{lem:multiplicity_properties}) but also the number of candidate relational structures $\mathcal{X}$ by a computable function in $h$, $p$, $\Size{\varphi}$.
	Thus, we can enumerate all pairs (where the domain of $\mathcal{X}$ is a subset of $[g'(h, p)]$).
	Consider one of these pairs $(\mathcal{X}, T_{\mathcal{X}})$.
	Since $\Size{\mathcal{X}^p} = \Size{\mathcal{X}} \leq g'(h,p)$, we can brute force all solutions $\mathcal{L}_{\mathcal{X}}$ of $\mathcal{X}$.
	If $\mathcal{L}_{\mathcal{X}} \neq \emptyset$ but contains no $p$-pure solution, we increase $p$ by one and restart the enumeration with the larger $p$.
	Otherwise, if for \emph{every} pair $(\mathcal{X}, T_{\mathcal{X}})$ we have that either $\mathcal{L}_{\mathcal{X}} = \emptyset$ or $\mathcal{L}_{\mathcal{X}}$ contains a $p$-pure solution, we terminate.

	We now argue that $P_{\varphi}$ terminates.
	To this end, observe that the obstruction set $\JJ_{h+1,\varphi}$ only depends on $h$ and $\varphi$ but not on $\mathcal{X}$; recall \Cref{def:obstruction_set}.
	Moreover, $\JJ_{h+1,\varphi}$ is finite by \Cref{thm:obstruction_bound_function}.
	Furthermore, note that $\JJ_{h', \varphi} \subseteq \JJ_{h+1, \varphi}$ holds for all $h' \leq h+1$ (up to isomorphism) by the definition of the obstruction set.
	Hence, there exists some (unknown) value for $p$ such that every relational structure $\mathcal{X} = \mathcal{X}^p$ with $\mathcal{X} \models \varphi$ that we consider in this iteration has a $p$-pure solution (for any $p$ that satisfies the bound from \Cref{thm:existenceOfPureSolution} each considered pair $(\mathcal{X}, T_{\mathcal{X}})$ with $\mathcal{X} \models \varphi$ has a $p$-pure solution).
	All the other relational structures are no-instances, i.e., $\mathcal{L}_{\mathcal{X}} = \emptyset$.
	Consequently, $P_{\varphi}$ terminates eventually.
	We let $p_{\varphi}$ denote the value of $p$ when $P_{\varphi}$ terminates.

	We now use $p_{\varphi}$ to construct $\JJ_{h,\varphi}$.
	To this end, we consider every pair $(\mathcal{X}, T_{\mathcal{X}})$ with $\td(\mathcal{X}) \leq h$, where $T_{\mathcal{X}}$  has height at most $h+1$ such that $\mathcal{X}^{p_{\varphi}} = \mathcal{X}$ and $\mathcal{X} \not\models \varphi$.
	Let $\mathcal{Q}'$ be the set of these structures.
	Consider only the members of $\mathcal{Q}'$ for which every \emph{proper} induced substructure is a model of $\varphi$ (which can be checked since elements of $\mathcal{Q}'$ have bounded size), choose one representative from each such isomorphism class and call the set of these representatives $\mathcal{Q}$.
	We claim that $\mathcal{Q} = \JJ_{h, \varphi}$ (up to isomorphism).

	Note that $\mathcal{Q} \subseteq \JJ_{h, \varphi}$ holds by the definition of $\mathcal{Q}$.
	To see that also $\JJ_{h, \varphi} \subseteq \mathcal{Q}$ holds, take an arbitrary obstruction $\mathcal{Y} \in \JJ_{h, \varphi}$ and fix a corresponding treedepth decomposition $T_{\mathcal{Y}}$ of height at most $h+1$.
	By \Cref{def:obstruction_set} we have $\td(\mathcal{Y}) \leq h$ and $\mathcal{Y} \not\models \varphi$, and $\mathcal{Y}$ is minimal under taking induced substructures; as induced substructures do not increase the treedepth, the latter means that every \emph{proper} induced substructure of $\mathcal{Y}$ is a model of $\varphi$.
	Assume $\mathcal{Y}^{p_{\varphi}} \neq \mathcal{Y}$.
	Thus, $\mathcal{Y}^{p_{\varphi}}$ is a proper induced substructure and by the minimality-definition of $\JJ_{h, \varphi}$, we have $\mathcal{Y}^{p_{\varphi}} \models \varphi$.
	Since $P_{\varphi}$ enumerates every pair $(\mathcal{X}, T_{\mathcal{X}})$ as described above, it also enumerates a pair containing the structure $\mathcal{Y}^{p_{\varphi}}$.
	Indeed, $\mathcal{Y}^{p_{\varphi}}$ matches the signature of $\varphi$ and satisfies $\td(\mathcal{Y}^{p_{\varphi}}) \leq h$, both because it is an induced substructure of $\mathcal{Y}$; its decomposition $T_{\mathcal{Y}}^{p_{\varphi}}$ has height at most $h+1$ by \multprop{item:mult_decomposition}; and $(\mathcal{Y}^{p_{\varphi}})^{p_{\varphi}} = \mathcal{Y}^{p_{\varphi}}$, as for $s = p$ the construction in \Cref{def:s_multiplicity_structure} adds no copies and, by \multprop{item:mult_routes}, its deletion step removes no element of $\mathcal{Y}^{p_{\varphi}}$.
	As $\mathcal{Y}^{p_{\varphi}} \models \varphi$, it must have a $p_{\varphi}$-pure solution (since $P_{\varphi}$ terminated) whose existence is preserved under (order- and decomposition-preserving) isomorphisms.
	By \Cref{thm:extension}, we can use the  $p_{\varphi}$-pure solution of $\mathcal{Y}^{p_{\varphi}}$ together with $p_{\varphi} \geq \Size{\varphi}(h+1)$ to derive $\mathcal{Y} \models \varphi$, a contradiction to $\mathcal{Y} \in \JJ_{h, \varphi}$.
	Thus, $\mathcal{Y}^{p_{\varphi}} = \mathcal{Y}$ and hence $\mathcal{Y} \in \mathcal{Q}$ up to isomorphism, since $\mathcal{Y}$ is a minimal obstruction.
	Both directions allow us to conclude $\mathcal{Q} = \JJ_{h, \varphi}$.
	Since $P_{\varphi}$ terminates, we can bound its number of iterations, and thus also running time, by a computable function $t'_{\varphi}(h)$.
	We can run the above procedure for every possible formula $\varphi'$ of length $\Size{\varphi'} = \Size{\varphi}$.
	Up to logical equivalence and renaming of variables and relation symbols, the number of such formulas can be bounded by a computable function $q(\Size{\varphi})$ and all of them can be enumerated.
	Thus, the function $f(h, \Size{\varphi}) = \max_{\varphi',\ \Size{\varphi'} = \Size{\varphi}} t'_{\varphi'}(h)$ is computable and an upper bound on the running time of $P_{\varphi}$ in terms of $h$ and $\Size{\varphi}$.
	Since $\Size{\mathcal{X}^{p_{\varphi}}} \leq g'(h, p_{\varphi})$ for some computable $g'$ (recall \Cref{lem:multiplicity_properties}), every obstruction has also bounded size.
	The value of $g(h, \Size{\varphi})$ can be obtained by taking the maximum value of $g'(h, p_{\varphi'})$ over all formulas $\varphi'$ of length $\Size{\varphi'} = \Size{\varphi}$, which is computable.
\end{proof}

Observe that in terms of the elements of the obstruction set $\JJ_{h, \varphi}$, \Cref{thm:obstruction-max-size-computable} only shows that their size can be bounded by a computable function in $h$ and $\Size{\varphi}$, but does not give an explicit bound on the value of $g(h, \Size{\varphi})$.
Still, \Cref{thm:obstruction-max-size-computable} is the last missing ingredient to establish our main theorem.

\begin{proof}[Proof of Main Theorem]
	We first focus on the decision variant of the main theorem and then discuss how to make the algorithm constructive and the impact this has on the running time.
	To this end, we use \Cref{alg:main-algorithm} to solve an instance $(\RR, \varphi)$ of \snpmodelchecking.
	Correctness of the algorithm follows from \Cref{lem:algorithm-termination-correctness}, and it remains to argue its running time.

	\subparagraph*{Decision Variant.}
	For bounding the running time of the decision algorithm, we start by noting that \Cref{alg:main-algorithm} has two preprocessing steps.
	First, it has to compute the primal graph $G$ of $\RR$; recall \Cref{def:relational_structures_primal_graph}.
	Let $\mathcal{R} = (V, E_1,\dots,E_b)$.
	The vertex set of $G$ is $V$ and to compute the edge set of $G$, we consider every relation $E_i$, $i \in [b]$, and its tuples $\overline{u} \in E_i$.
	We introduce in $G$ a clique on the (distinct) elements in $\overline{u}$.
	It is well-known that the treedepth of $K_{\eta}$ is $\eta$ and treedepth does not increase when taking subgraphs.
	Thus, $\overline{u}$ consists of at most $\td(\RR)$ distinct elements.
	Hence, we can scan all tuples of $\RR$ once to identify their set of pairwise distinct elements.
	Afterwards, introducing a clique for $\overline{u}$ takes $\BigO{\td(\RR)^2}$ time.
	Overall, this takes $\BigO{\td(\RR)^2 \cdot \Size{\RR}}$ time, which is also an upper bound on the number of edges of the created primal graph $G$.
	In the same time, we can once more iterate over all edges of $G$ and remove any multi-edges to ensure that $G$ is simple.

	Once we have the primal graph $G$, \Cref{alg:main-algorithm} computes a treedepth decomposition $T$ of~$G$ of height $h\leq\td(\RR) + 1$.
	Using the algorithm by Reidl, Rossmanith, S{\'{a}}nchez Villaamil and Sikdar~\cite{ReidlRVS14}, this can be done in $2^{\BigO{\td(\RR)^2}}\cdot \Size{V(G)} \leq 2^{\BigO{\td(\RR)^2}}\cdot \Size{\RR}$ time.
	The latter upper bound is also an upper bound on the preprocessing steps of \Cref{alg:main-algorithm}.
	Next, we analyze the running time of the iteration for one fixed $p$.
	First, we have to compute the $p$-multiplicity structure~$\RR^p$ for $p$, $T$, and $\RR$.
	To this end, we compute the vertex $p$-types and the $p$-routes of all elements of~\RR.
	By \Cref{lem:computing_types_and_routes}, this can be done in time $g_1(h, p, \Size{\varphi}) \cdot \Size{\RR}$ for some computable function $g_1$.
	Once we have the $p$-routes, we can compute the $p$-multiplicity structure $\RR^p$ by iterating once over the elements and once over the relations in $\RR$; recall \Cref{def:s_multiplicity_structure}.
	Thus, we can compute $\RR^p$ in time $g_2(h,p,\Size{\varphi}) \cdot \Size{\RR}$ for some computable function $g_2$.
	The remaining steps in the iteration for $p$ consist of obtaining the solution set $\mathcal{L}^p$ and checking whether it is empty or contains a $p$-pure solution $\mathcal{S}^p$.
	Since \Cref{lem:multiplicity_properties} bounds the size of $\RR^p$ by a computable function in $h$ and $p$, we can upper bound the running time for the entire iteration by some computable function $g_3(\td(\RR) + 1, \Size{\varphi}, p) \cdot \Size{\RR}$.

	By \Cref{lem:algorithm-termination-correctness}, \Cref{alg:main-algorithm} terminates after at most $\max\{M,\Size{\varphi} (\td(\RR) + 1)\}$ iterations, where $M = \max\bigl(\{1\}\cup\{\Size{V(H)}\colon H\in\JJ_{\td(\RR) + 1,\varphi}\}\bigr)$ is the size of the largest element in the obstruction set $\JJ_{\td(\RR) + 1,\varphi}$.
	By \Cref{thm:obstruction-max-size-computable}, we can bound $M$ by a computable function $g_4(\td(\RR) + 1, \Size{\varphi})$.
	Define $I = \max\{g_4(\td(\RR) + 1, \Size{\varphi}),\Size{\varphi} (\td(\RR) + 1)\}$, which is computable.
	Since $g_3(\td(\RR) + 1, \Size{\varphi}, p)$ can be assumed without loss of generality to be monotonically increasing in $p$, all iterations together take at most $I \cdot g_3(\td(\RR) + 1, \Size{\varphi},I) \cdot \Size{\RR}$ time.
	Together with the preprocessing steps and $h \leq \td(\RR) + 1$, we can bound the running time of \Cref{alg:main-algorithm} by $f_1(\Size{\varphi}, \td(\RR)) \cdot \Size{\RR}$,
	where $f_1(\Size{\varphi}, \td(\RR)) = 2^{\BigO{\td(\RR)^2}} + I \cdot g_3(\td(\RR) + 1, \Size{\varphi}, I)$ is a computable function.

	\subparagraph*{Constructing a $\boldsymbol{Q}$-Solution.}
	For the constructive part of the statement assume that $\RR\models\varphi$ (otherwise there is nothing to show).
	\Cref{alg:main-algorithm} retains in this case a $p$-pure solution $\mathcal{S}^p$ for some $p$; recall Case~\ref{alg:main-algorithm-yes-case}.
	Recall that \Cref{thm:extension} not only guarantees the existence of a solution $R_1', \ldots, R_r'$ for $\RR$, but also gives us the means to compute one.
	To this end, for every $i \in Q$, we consider every $a_i$-tuple $\overline{u} \in V(\RR)^{a_i}$, where $a_i = \arity(R_i)$ is the arity of the relation $R_i$.
	By \Cref{thm:extension}, we can decide, after an $f'(p)\cdot\Size{\RR}$ preprocessing step, whether $R_i'(\overline{u})$ holds in $\BigO{a_ih\log(a_ih)}$ time.
	If this is the case, we output $\overline{u}$.
	Since \Cref{thm:extension} guarantees that $R_1', \ldots, R_r'$ is a solution, our output $(R_i')_{i \in Q}$ is a $Q$-solution.
	It remains to argue the running time.
	For a single $i \in Q$, this takes overall $f'(p)\cdot\Size{\RR} + \BigO{{\Size{\RR}^{a_i}}\cdot a_ih\log(a_ih)}$ time.
	Note that the preprocessing step can be done once globally instead of for every $i \in Q$ individually.
	Let $a = \max_{i \in Q} a_i$ be the maximum arity over all relations whose index is in $Q$.
	By the above arguments, we can upper bound the running time for the entire set $Q$, whose size is at most $r \leq \Size{\varphi}$, with $f'(p)\cdot\Size{\RR} + \BigO{\Size{\varphi} \cdot {\Size{\RR}^{a}}\cdot a h\log(a h)}$.
	Recall that $\Size{\varphi} \geq \sum_{i \in [r]} \arity(R_i)$.
	Thus, the above-stated running time can be upper bounded by $f'(p)\cdot\Size{\RR} + \BigO{\Size{\varphi} \cdot {\Size{\RR}^{a}}\cdot \Size{\varphi} h\log(\Size{\varphi} h)}$.
	As shown above, we can bound $p$ by a computable function in $\td(\RR) + 1$ and $\Size{\varphi}$.
	Therefore, we can rewrite the running time bound as $f_2(\Size{\varphi}, \td(\RR)) \cdot \Size{\RR}^{\max_{i \in Q}\arity(R_i)}$, where the function $f_2$ is a computable function that incorporates the computable functions $f'$ and $f_1$ (the running time of the decision variant of \Cref{alg:main-algorithm}) and the hidden constants in the $\mathcal{O}$-notation.

	We can use $f = f_2$ (since $f_2$ incorporates $f_1$) as the computable function from the theorem statement to cover both cases.
\end{proof}

\subsection{Comparison with Kolaitis and Vardi}
\label{paragraph:kolatis_vardi_comp}

Let us now elaborate on the aforementioned connections to Kolaitis and Vardi~\cite[Section~5]{KolaitisV87}.
Call a solution relation canonical if its truth on a tuple depends only on the atomic type of that tuple in the input structure expanded by an auxiliary total order.
A solution is canonical if each of its witness relations is canonical.

Writing $k$ for the number of universally quantified variables in $\phi$, \cite[Theorem~11]{KolaitisV87} states that if the countable random structure $\mathcal{A}$ satisfies $\phi$, then there is a finite ordered structure $\mathcal{B}$ that realizes every ordered input type on at most $k$ elements and admits a canonical solution.
To show this, fix a finite structure $\mathcal{B}$ (also equipped with a total order) such that each ordered atomic type on at most $k$ elements appears in it.
Then, Ramsey theory
is used to show that there exists a large finite structure $\mathcal{C}$, such that if $\mathcal{C} \models \phi$, then $\mathcal{C}$ contains a copy of $\mathcal{B}$ with a canonical solution.
Since $\mathcal{A}$ is the infinite countable random structure, $\mathcal{A}$ contains a copy of $\mathcal{C}$.\footnote{Technically, a copy of $\mathcal{C}$ after forgetting the auxiliary total order.}
From $\mathcal{A} \models \phi$ one can then deduce $\mathcal{C} \models \phi$, and hence obtains that $\mathcal{B}$ has a canonical solution.
Conversely, equip $\mathcal{A}$ with an auxiliary total order.
For each $k$-tuple $\bar v$ of $\mathcal{A}$, choose a $k$-tuple
$\bar w$ of $\mathcal{B}$ with the same ordered atomic input type,
and prescribe the solution type of $\bar v$ to match that of
$\bar w$.
Canonicity ensures that these prescriptions are independent of
the chosen representatives and consistent on shared subtuples,
yielding a solution on $\mathcal{A}$.
Decidability follows from the fact that $\mathcal{B}$ can be constructed effectively, and that it can be checked if $\mathcal{B}$ has a canonical solution.

Recall that in the problem studied by Kolaitis and Vardi, they are only given an \SNP\ sentence $\varphi$ and want to decide whether $\varphi$ holds on every relational structure.
The problem we are dealing with asks whether $\RR\models\varphi$ holds for a \emph{given} relational structure $\RR$.
In particular, we cannot construct an arbitrary structure $\mathcal{B}$ with a canonical solution but have to take $\RR$ into account.

Consequently, a first key difference of their approach compared to ours concerns the information recorded by types.
Ordered atomic input types suffice in the argument of Kolaitis and Vardi, but they are completely insufficient in our setting.
For example, consider $K_2$ on $\{u,v\}$ with $\phi_2$ being the standard \SNP{} encoding of $2$-colorability using two unary relations.
Obviously $K_2 \models \phi_2$.
The singleton tuples $(u)$ and $(v)$ have the same atomic input type even after adding a linear order.
A canonical solution would therefore assign them the same color, so no such solution exists for this encoding.
Hence, even in this simple case, the existence theorem (\cref{thm:existenceOfPureSolution}) cannot hold when using canonical solutions instead of pure solutions.

A second difference concerns the availability of a suitable structure
on which to apply Ramsey arguments.
In their setting, $\mathcal{A}$ contains $\mathcal{C}$ simply because $\mathcal{A}$
contains every finite input structure.
Hence, a solution on $\mathcal{A}$ restricts to one on $\mathcal{C}$.
In our setting, we must establish that $\RR^s\models\phi$ whenever
$\RR\models\phi$.
For $s\geq p\geq M$, this follows from
\Cref{thm:blowup_preserves_feasibility} and relies on the
well-quasi-ordering of structures of bounded treedepth.
Moreover, we implement the Ramsey step on a common integer index set and not on the relational structure $\RR^s$ directly.
There, tuples of vertices of $\RR$ having a fixed $p$-route pattern are encoded by integer subsets of specific sizes.
This encoding allows us to use ordinary hypergraph Ramsey
while retaining the prescribed structure.

Finally, our algorithm must decide $\RR\models\phi$ uniformly over pairs $(\phi,\RR)$, with a computable parameter dependence on the formula length and treedepth.
It searches for a negative certificate or a $p$-pure solution without constructing the large Ramsey structure or knowing $M$ in advance.
\Cref{thm:obstruction-max-size-computable} supplies the computable bound on this search, and \Cref{thm:extension} reconstructs witnesses on the input domain.

\section{Applying the Metatheorem}
\label{sec:applications}

In this section we illustrate the reach of Theorem~\ref{main_thm:treedepth_meta_theorem} by applying it to several graph parameters whose exact computation is \NP-hard already for small fixed values: stack number, queue number, track number, twin-width, and boxicity. For each we show that the property ``the parameter is at most~$k$'' is definable by an \SNP{} sentence whose length is bounded by a function of $k$. Since all five parameters are bounded in terms of treewidth, a well-studied width parameter that is in turn bounded by treedepth~\cite{NesetrilM06}, the meta-theorem yields fixed-parameter algorithms for their exact computation with respect to treedepth; we first record this deduction once, in a generic form, and then instantiate it after each encoding. The tractability with respect to treedepth was, to our knowledge, previously unknown for all five problems.

\begin{lemma}\label{lem:generic}
Let $\pi$ be a graph parameter and let $g_1,g_2$ be computable, non-decreasing functions. Suppose that for every $k\in\N$ there is an \SNP{} sentence $\Phi_k$ of arity at most $a$, computable from~$k$, with $|\Phi_k|\le g_1(k)$ and $G\models\Phi_k\iff\pi(G)\le k$ for every graph $G$, and that $\pi(G)\le g_2(\td(G))$ for every graph $G$. Then, given $G$ and $k$, deciding $\pi(G)\le k$ takes time $f'(\td(G))\cdot\Size{G}$ for a computable function $f'$; $\pi(G)$ can be computed within the same time bound. \end{lemma}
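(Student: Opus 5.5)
The plan is to reduce the case of large $k$ to a bounded-length formula, and then call \Cref{main_thm:treedepth_meta_theorem} on the fixed-length sentence $\Phi_{k'}$ for some $k'$ bounded in terms of $\td(G)$.

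\textbf{Step 1: compute treedepth.} Given $G$ and $k$, first compute $d=\td(G)$. Running the algorithm of Reidl et al.~\cite{ReidlRVS14} with increasing bounds yields the exact value in time $2^{\BigO{d^2}}\cdot\Size{V(G)}$. Alternatively, it suffices to take $h-1$ for the height $h$ of the computed decomposition, as an upper bound on $\td(G)$, since the functions are non-decreasing.

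\textbf{Step 2: the decision problem.} If $k\ge g_2(d)$, answer yes immediately, because $\pi(G)\le g_2(\td(G))\le k$. Otherwise $k<g_2(d)$. Compute $\Phi_k$ from $k$; its computation time is a computable function of $k$ and hence of $d$. Its length satisfies $|\Phi_k|\le g_1(k)\le g_1(g_2(d))$, using monotonicity of $g_1$. Encode $G$ as a relational structure with one binary (symmetric) edge relation. Its Gaifman graph is $G$ itself, so $\td$ is unchanged and the structure has size $\BigO{\Size{G}}$. Then apply the decision part of \Cref{main_thm:treedepth_meta_theorem}, which takes time $f(|\Phi_k|,d)\cdot\Size{G}\le \max_{L\le g_1(g_2(d))} f(L,d)\cdot\Size{G}$. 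If needed, replace $f$ by its monotone majorant, which is still computable.

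\textbf{Step 3: computing $\pi(G)$.} Run the decision procedure for $k=0,1,\dots,g_2(d)$ and return the smallest $k$ that is accepted. Such a $k$ exists since $\pi(G)\le g_2(d)$. There are at most $g_2(d)+1$ calls, each within the bound above, so the total is again $f'(d)\cdot\Size{G}$ for a computable $f'$.

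The only delicate point is ensuring that the dependence is a computable function of $d$ alone. This is handled by monotonicity of $g_1,g_2$, by taking a monotone computable majorant of $f$, and by computability of $k\mapsto\Phi_k$. The arity bound $a$ is not needed for the decision version, only for witness output.
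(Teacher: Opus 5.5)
Your proof is correct and follows the paper's argument. You compute $\td(G)$ exactly, answer yes outright when $k\ge g_2(\td(G))$, and otherwise run $\mathbb{A}$ on the bounded-length sentence $\Phi_k$; you then compute $\pi(G)$ by trying $k=0,\dots,g_2(\td(G))$. Your maximum over $L\le g_1(g_2(d))$ is slightly more careful than the paper's implicit assumption that $f$ is monotone.

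One small slip is in your ``alternatively'' remark. With the paper's convention, a computed decomposition of height $h$ satisfies $h-1\le\td(G)\le h$, so $h-1$ is a lower bound, not an upper bound. The safe value to plug into $g_2$ is $h$.
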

\begin{proof}
We regard $G$ as the relational structure $(V,E)$; its Gaifman graph is $G$, so its treedepth is~$\td(G)$. We compute $t:=\td(G)$ in \FPT\ time~\cite{ReidlRVS14}. If $k\ge g_2(t)$, then $\pi(G)\le g_2(t)\le k$ by assumption, and we answer yes. Otherwise $k<g_2(t)$, so $|\Phi_k|\le g_1(k)\le g_1(g_2(t))$, and running~$\mathbb A$ from Theorem~\ref{main_thm:treedepth_meta_theorem} on $\Phi_k$ and $G$ decides $\pi(G)\le k$ in time $f(g_1(g_2(t)),t)\cdot|G|$. Since $k\mapsto\Phi_k$ is computable, the number of steps needed to compute $\Phi_k$ is a computable function $T(k)$, and $T(k)\le\max_{k'\le g_2(t)}T(k')$ is bounded by a computable function of $t$ because $k<g_2(t)$. To compute~$\pi(G)$, run the above for the smallest value that $\pi(G)$ can possibly take and iterate with increasingly bigger~$k$ until  $g_2(t)$ is reached; return the least $k$ that is accepted.
\end{proof}

The brief complexity-theoretic overviews that complement the following subsections use standard notions of classical and parameterized complexity as well as standard structural graph parameters that play no role in our proofs; we do not define them here and refer the reader to the textbooks~\cite{Die25},~\cite{CyganFKLMPPS15} and~\cite{DowneyF13} instead.
\subsection{Stack and Queue Layouts}
A \emph{linear layout} of a graph $G = (V, E)$ consists of a total order $<$ on $V$ together with a partition of~$E$ into $k$ \emph{pages} $E_1, \ldots, E_k$, for some $k \in \mathbb{N}$, and one typically requires that the edges on each page satisfy a certain condition.
The two classical and by-far most studied linear layouts are stack and queue layouts~\cite{BK.btg.1979,HeathR92,DW.LLG.2004}; see also \Cref{fig:example-stack-queue} for an illustration.
\begin{definition}[Stack and queue layout]\label{def:layout}
Let $G=(V,E)$ be a graph and let $(<,E_1,\dots,E_k)$ be a $k$-page linear layout of $G$ for some $k\in \mathbb{N}$.
   Let $ab, cd \in E$ be two edges with $a < b$ and $c < d$, and assume by symmetry that $a < c$.
   They \emph{cross} if $c < b < d$, and they \emph{nest} if $c < d < b$.
We call $(<,E_1,\dots,E_k)$ a $k$-page \emph{stack layout} if no two edges on the same page cross and a $k$-page \emph{queue layout} if no two edges on the same page nest.
\end{definition}
\begin{figure}
	\centering
	\includegraphics{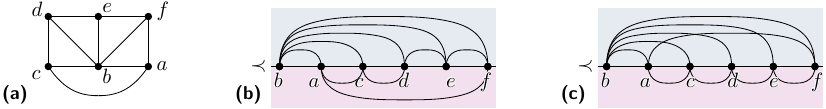}
	\caption{\textbf{\textsf{(a)}} A graph $G$, \textbf{\textsf{(b)}} a 2-page stack layout, and \textbf{\textsf{(c)}} a 2-page queue layout of $G$, both on the same linear order $<$. The two pages are colored blue and purple, respectively.}
	\label{fig:example-stack-queue}
\end{figure}
Stack and queue layouts give rise to the notions of stack and queue number of a graph:

\begin{definition}[Stack and queue number]\label{def:stackqueue-number}
	Let $G = (V, E)$ be a graph.
	The \emph{stack number} $\sn(G)$ of $G$ is the least $k$ such that $G$ admits a $k$-page stack layout, and the \emph{queue number} $\qn(G)$ of $G$ is the least $k$ such that $G$ admits a $k$-page queue layout.
\end{definition}

The relation between structural properties of a graph $G$ and its stack or queue number has been studied extensively in the past, and we refer to the survey of Dujmovi{\'c} and Wood~\cite{DW.LLG.2004} for an overview.
Most notably, planar graphs have stack number at most 4~\cite{Yan.EPG.1989} and queue number at most 49~\cite{DujmovicJMMUW20}, and the upper bound for the stack number is tight~\cite{BKK+.FPA.2020}.
Moreover, $\sn(G)\le\tw(G)+1$~\cite{DW.GTG.2007} and $\qn(G)\le2^{\tw(G)}-1$~\cite{Wie.QNG.2017} hold for every graph $G$, where $\tw(G)$ denotes the treewidth of $G$.

Already deciding whether $\sn(G) \leq 2$ or $\qn(G) \leq 1$ is \NP-complete~\cite{Wid.CHC.1982,BK.btg.1979,HeathR92}, which has motivated studying the complexity of deciding whether $\sn(G) \leq k$ or $\qn(G) \leq k$ with respect to structural parameters of $G$.
When $k$ is part of the input, both problems were shown to be \FPT\ with respect to the vertex cover number of $G$~\cite{BhoreGMN20,BhoreGMN22}.
More recently, this was extended to vertex integrity~\cite{DepianFGS25}.
For stack number, Ganian, M{\"{u}}ller, Ordyniak, Paesani, and~Rychlicki~\cite{GMO+.TST.2024} also obtain an \FPT\ algorithm parameterized by the feedback edge number.
For treedepth and treewidth, earlier fixed-parameter algorithms only handle special cases.
In particular, deciding $\qn(G) \leq 1$ is \FPT\ in $\td(G)$~\cite{BhoreGMN22}.
Bannister and Eppstein~\cite{bannisterCrossingMinimization1page2018} showed that deciding $\sn(G) \leq 2$ is \FPT\ in $\tw(G)$ using Courcelle's theorem, which was later improved to an algorithm with single-exponential dependence on $\tw(G)$~\cite{GMO+.TST.2024}.
Whether the general problems are \FPT\ with respect to less restrictive parameters, in particular treedepth or treewidth, has repeatedly been posed as an open question~\cite{GMNZ.PCG.2021,BSK.Bop.2025,GMO+.TST.2024,DepianFGS25}.
Recently, it was shown that deciding $\qn(G) \leq 1$ is para\NP-hard with respect to the pathwidth of $G$~\cite{bekosRecognitionOuterplanarGraphs2026}, ruling out the existence of an \FPT\ and even \XP\ algorithm with respect to the pathwidth (and thus treewidth) of $G$ (unless $\P = \NP$).
For treedepth, we answer this open question in the affirmative for both stack and queue number.
To this end, we provide \SNP\ sentences for both problems:

\begin{theorem}\label{thm:layouts}
For every $k\ge1$ there are \SNP{} sentences $\Phi^{\mathrm{st}}_k$ and $\Phi^{\mathrm{qu}}_k$ such that for every finite graph $G$,
\[
G\models\Phi^{\mathrm{st}}_k\ \Longleftrightarrow\ \sn(G)\le k
\qquad\text{and}\qquad
G\models\Phi^{\mathrm{qu}}_k\ \Longleftrightarrow\ \qn(G)\le k .
\]
Moreover, $\Phi^{\mathrm{st}}_k$ ($\Phi^{\mathrm{qu}}_k$, respectively) has $k+1$ existentially quantified relation symbols, all of arity~$2$, four first-order variables, and length $\mathcal{O}(k)$; the map $k\mapsto\Phi^{\mathrm{st}}_k$ ($k\mapsto\Phi^{\mathrm{qu}}_k$, respectively) is computable in time polynomial in $k$.
\end{theorem}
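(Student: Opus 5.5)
We write the sentences explicitly and check both directions. Over the graph signature $(V,E)$ with $E$ symmetric and irreflexive, introduce a binary relation $\Lo$ for the vertex order and binary relations $\Pg_1,\dots,\Pg_k$ for the pages. The sentence $\Phi^{\mathrm{st}}_k$ has the form $\exists \Lo\,\exists\Pg_1\cdots\exists\Pg_k\,\forall u,v,x,y\colon \psi$, where $\psi$ is the conjunction of the following universally quantified constraints:
\begin{enumerate}
\item \emph{Strict total order:} irreflexivity $\neg\Lo(u,u)$, transitivity $\Lo(u,v)\land\Lo(v,x)\to\Lo(u,x)$, and totality $u=v\lor\Lo(u,v)\lor\Lo(v,u)$.
\item \emph{Edge assignment:} $E(u,v)\to\bigvee_{i\in[k]}\Pg_i(u,v)$, together with symmetry $\Pg_i(u,v)\to\Pg_i(v,u)$ and $\Pg_i(u,v)\to E(u,v)$.
\item \emph{No conflict on a page, for each $i$:} for stacks, $\neg\bigl(\Lo(u,x)\land\Lo(x,v)\land\Lo(v,y)\land\Pg_i(u,v)\land\Pg_i(x,y)\bigr)$; for queues, the same with the order atoms $\Lo(u,x)\land\Lo(x,y)\land\Lo(y,v)$.
\end{enumerate}
Each $i$ contributes a constant number of atoms of constant length, and the order part is constant, so the length is $\mathcal{O}(k)$. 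The sentence uses $k+1$ binary relation symbols and four first-order variables, and it can clearly be written down in time polynomial in $k$.

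\textbf{Soundness.} Suppose $G\models\Phi^{\mathrm{st}}_k$. Then $\Lo$ is a strict total order. Assign each edge $uv$ to the least $i$ with $\Pg_i(u,v)$; this gives a partition $E_1,\dots,E_k$ of $E$, with empty classes allowed, which is harmless. Take two edges on page $i$ with endpoints written as $a<b$ and $c<d$, and assume $a<c$. If they crossed, then $c<b<d$, and instantiating $(u,v,x,y)=(a,b,c,d)$ would falsify the no-crossing clause. Nesting, for the queue sentence, is handled in the same way with the instantiation $c<d<b$. We need $\Pg_i$ to be symmetric so that the instantiation uses the oriented pair $(a,b)$ and not only $(b,a)$; constraint~2 supplies this.

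\textbf{Completeness.} Given a $k$-page layout $(<,E_1,\dots,E_k)$, set $\Lo={<}$ and let $\Pg_i$ be the symmetric closure of $E_i$. Every clause then holds: a violated clause~3 would exhibit two same-page edges that cross (respectively nest), and the requirement that $u,x,v,y$ be pairwise distinct is forced by the strict order atoms.

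The only delicate points are orientation and degenerate cases. Edges are unordered while $\Pg_i$ is binary, hence the symmetry clauses. The condition $a<c$ is taken care of because $\psi$ ranges over all assignments. Finally, the two edges may share an endpoint, but then the strict inequalities in clause~3 cannot all hold, and indeed such pairs neither cross nor nest under Definition~\ref{def:layout}. We expect these bookkeeping checks, rather than any conceptual step, to be the main care needed.
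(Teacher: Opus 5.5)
Your proposal is correct and follows the paper's proof essentially line by line. It uses the same relations $\Lo,\Pg_1,\dots,\Pg_k$, the same order, page-coverage/symmetry and four-vertex no-crossing/no-nesting clauses, and the same two-direction argument: pick one page per edge in one direction, take each page's symmetric closure in the other.
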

\begin{proof}
\noindent\emph{The \SNP sentences.}
Both sentences existentially quantify over a binary relation $\Lo$ and $k$ binary relations $\Pg_1,\dots,\Pg_k$. The intended interpretation is that $\Lo$ is the total order, written $u<v$ for $\Lo(u,v)$, and $\Pg_j(u,v)$ says that the edge $uv$ lies on page $j$.
First, we ensure that $\Lo$ is a total order via
\[
\text{(O1) } \neg \Lo(u,u), \quad \text{(O2) }\Lo(u,v) \wedge \Lo(v,x)\;\to\;\Lo(u,x), \quad\text{ and } \quad \text{(O3) } \Lo(u,v)\;\vee\;\Lo(v,u)\;\vee\;u=v.
\]
The next set of conditions ensures that pages cover exactly the edges of $G$. For all $j\in\{1,\dots,k\}$
\begin{enumerate}[label=(P\arabic*),leftmargin=3.2em]
\item\label{P1} $\Pg_j(u,v)\;\to\;E(u,v)$ \quad and\quad $\Pg_j(u,v)\to\Pg_j(v,u)$,
\item\label{P2} $E(u,v)\;\to\;\Pg_1(u,v)\vee\dots\vee\Pg_k(u,v)$.
\end{enumerate}
Note that the latter does not require the pages to be disjoint: an edge may lie on several pages. Although this deviates from the definition of layouts given above, it is harmless in terms of determining the stack and queue number of a graph.
With our last set of constraints, we enforce a stack or queue layout, respectively by adding the following constraints for each $j\in\{1,\dots,k\}$. Observe that in both cases, the layout condition is expressed via a simple forbidden pattern characterization on four vertices.
\begin{enumerate}[label=(S\arabic*),leftmargin=3.2em]
\item\label{S1} $\neg(\Lo(u,x)\ \wedge\ \Lo(x,v)\ \wedge\ \Lo(v,y)\ \wedge\ \Pg_j(u,v)\ \wedge\ \Pg_j(x,y))$ (no crossing within a page)
\end{enumerate}
\begin{enumerate}[label=(Q\arabic*),leftmargin=3.2em]
\item\label{Q1} $\neg (\Lo(u,x)\ \wedge\ \Lo(x,y)\ \wedge\ \Lo(y,v)\ \wedge\ \Pg_j(u,v)\ \wedge\ \Pg_j(x,y))$ (no nesting within a page)
\end{enumerate}

\medskip
Let $\forall\bar z\,\psi^{\mathrm{st}}_k$ be the conjunction of the universal closures of (O1)--\ref{P2} and \ref{S1}, and $\forall\bar z\,\psi^{\mathrm{qu}}_k$ that of (O1)--\ref{P2} and \ref{Q1}, for all $j$ and with $\bar z = (u,v,x,y)$.
 Then
\[
\Phi^{\mathrm{st}}_k=\exists\Lo\,\exists\Pg_1\cdots\exists\Pg_k\;\forall\bar z\;\psi^{\mathrm{st}}_k
\qquad\text{and}\qquad
\Phi^{\mathrm{qu}}_k=\exists\Lo\,\exists\Pg_1\cdots\exists\Pg_k\;\forall\bar z\;\psi^{\mathrm{qu}}_k .
\]
For each $j$, the conditions (O1)--\ref{P1} and \ref{S1} (respectively \ref{Q1}) have constant length and \ref{P2} consists of $k+1$ atoms; hence $\Phi^{\mathrm{st}}_k$ and $\Phi^{\mathrm{qu}}_k$ have length $\BigO{k}$ and four first-order variables, and each can be written down in time polynomial in $k$.

\medskip
\noindent\emph{Correctness.}
We demonstrate the correctness w.r.t.\ the stack number; the argument for the queue number is identical with ``cross'' replaced by ``nest'' and \ref{S1} by \ref{Q1}.

\smallskip
\noindent($\Leftarrow$) Let $\sn(G)\le k$ and fix a $k$-page stack layout $(<,E_1,\dots,E_k)$ of $G$. Interpret $\Lo$ as $<$ and $\Pg_j$ as $\{(u,v):uv\in E_j\}$. Clauses (O1)--(O3) hold because $<$ is a total order; \ref{P1} because $E_j\subseteq E$ and edges are unordered pairs; \ref{P2} because $E_1\cup\dots\cup E_k=E$. For \ref{S1}, suppose $u<x<v<y$ with $\Pg_j(u,v)$ and $\Pg_j(x,y)$. Then $uv,xy\in E_j$ and they cross in $<$, contradicting the definition of a stack layout. Hence, $G\models\Phi^{\mathrm{st}}_k$.\\

\noindent($\Rightarrow$) Let $(V,E,\Lo,\Pg_1,\dots,\Pg_k)\models\forall\bar z\,\psi^{\mathrm{st}}_k$. By (O1)--(O3), $\Lo$ is a strict linear order on $V$.
By \ref{P1}, each $\Pg_j$ is a symmetric relation contained in $E$, so $F_j:=\{uv:\Pg_j(u,v)\}$ is a set of edges of $G$, and by \ref{P2} every edge of $G$ lies in some $F_j$. Choose for every edge $e$ an index $j(e)$ with $e\in F_{j(e)}$ and put $E_j:=\{e:j(e)=j\}$; then $E_1,\dots,E_k$ is a partition of $E$ with $E_j\subseteq F_j$. It remains to check that no two edges on a common page $E_j$ cross. Suppose there exist $uv,xy\in E_j$ with $u<v$ and $x<y$ that cross in $<$. Then, by Definition~\ref{def:layout},
w.l.o.g.\ $u<x<v<y$. Since $u<x<v<y$ means $\Lo(u,x)$, $\Lo(x,v)$, and $\Lo(v,y)$, and since $E_j\subseteq F_j$ yields $\Pg_j(u,v)$ and $\Pg_j(x,y)$, condition \ref{S1} is falsified by $(u,v,x,y)$---a contradiction. Thus $(<,E_1,\dots,E_k)$ is a $k$-page stack layout and $\sn(G)\le k$.
\end{proof}
\begin{corollary}
Given a graph $G$ and $k\in\N$, deciding $\sn(G)\le k$ ($\qn(G)\le k$, respectively) is fixed-parameter tractable with respect to $\td(G)$. Moreover, $\sn(G)$ and $\qn(G)$ can be computed in time $f(\td(G))\cdot|G|$ for a computable function $f$.
\end{corollary}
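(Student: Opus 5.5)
The plan is to obtain the corollary directly from \Cref{lem:generic}, instantiated once with $\pi=\sn$ and once with $\pi=\qn$. Three hypotheses of that lemma have to be checked: the existence of computable \SNP{} sentences of bounded arity characterizing $\pi(G)\le k$, a computable length bound $g_1$, and a computable bound $g_2$ on $\pi$ in terms of treedepth.

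The first two hypotheses come from \Cref{thm:layouts}. The sentences $\Phi^{\mathrm{st}}_k$ and $\Phi^{\mathrm{qu}}_k$ have arity $a=2$ and length $\mathcal{O}(k)$. They are computable from $k$ in polynomial time. So we may take $g_1(k)=ck$ for a suitable constant $c$, which is computable and non-decreasing.

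For the third hypothesis, we invoke the known bounds $\sn(G)\le\tw(G)+1$ and $\qn(G)\le 2^{\tw(G)}-1$ quoted above. We combine them with the standard inequality $\tw(G)\le\td(G)$, which is in fact $\le\td(G)-1$, and which the paper already cites~\cite{NesetrilM06}. This yields $g_2(t)=t+1$ for the stack number and $g_2(t)=2^{t}$ for the queue number. Both functions are computable and non-decreasing.

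\Cref{lem:generic} then gives an algorithm that decides $\pi(G)\le k$ in time $f'(\td(G))\cdot\Size{G}$ and computes $\pi(G)$ within the same bound. This is precisely fixed-parameter tractability with respect to $\td(G)$, together with the claimed running time. The only point needing care is the citation of the treewidth–treedepth inequality; no real obstacle is expected, since all substantive work lies in \Cref{main_thm:treedepth_meta_theorem} and \Cref{thm:layouts}.
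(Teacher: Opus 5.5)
Your proposal is correct and follows the paper's proof essentially verbatim. Like the paper, you instantiate \Cref{lem:generic} with the $\mathcal{O}(k)$-length, arity-$2$ sentences from \Cref{thm:layouts} and the known treewidth bounds on $\sn$ and $\qn$; the only difference is that you take the slightly looser $g_2(t)=t+1$ and $g_2(t)=2^{t}$, whereas the paper uses $\tw(G)\le\td(G)-1$ to get $\sn(G)\le\td(G)$ and $\qn(G)\le 2^{\td(G)-1}-1$, and either choice suffices.
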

\begin{proof}
By Theorem~\ref{thm:layouts}, $\Phi_k$ has length $\mathcal{O}(k)$ and is computable from $k$; moreover, it is known that $\sn(G)\le\tw(G)+1$~\cite{GanleyH01}, hence $\sn(G)\le\td(G)$ and that $\qn(G)\le2^{\tw(G)}-1$~\cite{Wie.QNG.2017}, hence $\qn(G)\le2^{\td(G)-1}-1$. The statement then immediately follows from Lemma~\ref{lem:generic}.
\end{proof}

\subsection{Track Layouts}
Track layouts were introduced by Dujmovi\'c, Morin, and Wood~\cite{DujmovicMW05}, building on the notion of a \emph{layered} drawing: the vertices are arranged on a bounded number of parallel lines, the \emph{tracks}, each carrying a linear order, and one asks how many tracks are needed so that the edges between any two tracks can be drawn without crossings. See also~\cite{DPW04} for a systematic study of track layouts.

\begin{figure}
	\centering
	\includegraphics{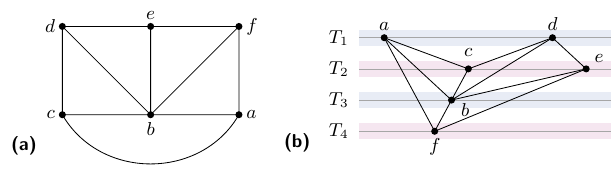}
	\caption{\textbf{\textsf{(a)}} A graph $G$ with track number $4$.
\textbf{\textsf{(b)}} A $4$-track layout of $G$: the vertices are partitioned into
four tracks $T_1,\dots,T_4$, each drawn as a horizontal band and
ordered from left to right. No two edges $uv$ and $xy$ between the same pair of
tracks form an X-crossing, i.e., there are no such edges with $u<x$ on one track and
$y<v$ on the other.}
	\label{fig:example-tracks}
\end{figure}

\begin{definition}[Track layouts, track number]\label{def:track}
Let $t\ge1$. A \emph{$t$-track layout} of $G$ consists of a partition $V=T_1\cup\dots\cup T_t$ of the vertex set into independent sets, the \emph{tracks}, together with a strict linear order $<_i$ on each track $T_i$, such that no two edges form an \emph{X-crossing}, that is, there are no $i\ne j$ and edges $uv,xy$ with $u,x\in T_i$, $v,y\in T_j$, $u<_ix$ and $y<_jv$. The \emph{track number} $\tn(G)$ is the least $t$ such that $G$ has a $t$-track layout.
\end{definition}

Equivalently, draw the tracks as parallel horizontal lines with the vertices placed along them in the given orders; then the layout condition says that the edges between any two tracks, drawn as straight segments, are pairwise non-crossing. Since tracks are independent sets, every edge runs between two different tracks; see Figure~\ref{fig:example-tracks} for a small example.

Deciding whether $\tn(G)\le 2$ is easy, since the graphs of track number at
most~$2$ are exactly the caterpillar forests~\cite{DPW04}.  However, already the next
value is hard: Bannister, Devanny, Dujmovi\'c, Eppstein, and
Wood~\cite{BannisterDDEW19} show that a graph is leveled planar if and only if it is
bipartite and admits a $3$-track layout; since recognizing the former is \NP-complete~\cite{HeathR92}, so is deciding $\tn(G)\le 3$, which settles a question left open by Dujmovi\'c, P\'or, and
Wood~\cite{DPW04}.

On the parameterized side, the same authors show by a Myhill--Nerode-type argument that leveled planarity is not finite-state with respect to treewidth, so that Courcelle's theorem and, more generally, dynamic programming over tree decompositions with a bounded number of states are not applicable~\cite[Sect.~5.1]{BannisterDDEW19}; as finite-state properties are closed under intersection and bipartiteness is finite-state, the same holds for deciding $\tn(G)\le3$. Whether track number is fixed-parameter tractable with respect to treewidth by other means remains open. With respect to treedepth they obtain fixed-parameter tractability only non-uniformly, via the well-quasi-ordering of graphs of bounded treedepth under induced subgraphs~\cite{dingSubgraphsWellquasiordering1992}: for each value of the treedepth an algorithm exists, but it is not known how to compute it. Corollary~\ref{cor:tracks} below replaces this existential statement by a single, explicitly given algorithm that also outputs a track layout.

\begin{theorem}\label{thm:tracks}
For every $t\ge1$ there is an \SNP{} sentence $\Phi^{\mathrm{tr}}_t$ such that for every finite graph $G$,
\[
G\models\Phi^{\mathrm{tr}}_t\quad\Longleftrightarrow\quad\tn(G)\le t .
\]
Moreover, $\Phi^{\mathrm{tr}}_t$ has $t+1$ existentially quantified relation symbols, of arity at most $2$, four first-order variables, and length $\mathcal{O}(t^2)$; the map $t\mapsto\Phi^{\mathrm{tr}}_t$ is computable in time polynomial in $t$.
\end{theorem}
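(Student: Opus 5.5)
The plan mirrors the proof of \Cref{thm:layouts}. We existentially quantify unary relations $\Tr_1,\dots,\Tr_t$, where $\Tr_i(u)$ means $u\in T_i$, and one binary relation $\Lo$. This gives $t+1$ relation symbols of arity at most $2$. A single global order $\Lo$ whose restriction to each track serves as $<_i$ avoids needing $t$ separate orders.

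The universal part has the following clauses.
\begin{enumerate}[label=(\roman*)]
\item Every vertex lies on some track, $\Tr_1(u)\vee\dots\vee\Tr_t(u)$.
\item Tracks are pairwise disjoint, $\neg(\Tr_i(u)\wedge\Tr_j(u))$ for $i<j$.
\item Tracks are independent, $\neg(E(u,v)\wedge\Tr_i(u)\wedge\Tr_i(v))$.
\item $\Lo$ is a strict total order, using (O1)--(O3) from \Cref{thm:layouts}.
\item There are no X-crossings: for all $i\ne j$,
\[
\neg\bigl(\Tr_i(u)\wedge\Tr_i(x)\wedge\Tr_j(v)\wedge\Tr_j(y)\wedge E(u,v)\wedge E(x,y)\wedge\Lo(u,x)\wedge\Lo(y,v)\bigr).
\]
\end{enumerate}
These clauses use the four first-order variables $u,v,x,y$. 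There are $\mathcal{O}(t^2)$ clauses over ordered pairs $(i,j)$, each of constant length, plus the disjunction of length $t$, so the total length is $\mathcal{O}(t^2)$. Writing the formula down is clearly polynomial in $t$.

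\emph{Correctness, ($\Leftarrow$).} Take a $t$-track layout. Interpret $\Tr_i$ as $T_i$. Let $\Lo$ be the concatenation of the orders $<_1,\dots,<_t$: all of $T_1$ first, ordered by $<_1$, then all of $T_2$, and so on. This is a total order whose restriction to $T_i$ is $<_i$. Clauses (i)--(iv) hold immediately. If (v) failed, the witnessing $u,x\in T_i$ and $v,y\in T_j$ would satisfy $u<_ix$ and $y<_jv$, which is an X-crossing.

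\emph{Correctness, ($\Rightarrow$).} Given a model, (i) and (ii) make the sets $T_i:=\Tr_i$ a partition of $V$, and (iii) makes each $T_i$ independent. Let $<_i$ be the restriction of $\Lo$ to $T_i$, which is a strict linear order by (iv). Any X-crossing would then falsify (v) on the tuple $(u,v,x,y)$.

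Empty tracks need no special treatment: partitions are allowed to have empty parts, and a layout with fewer tracks pads to exactly $t$.

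\emph{Subtle points.} The edge relation is symmetric, so the orientation in (v) matters only up to swapping the roles of $i$ and $j$. Quantifying over all ordered pairs $i\ne j$ covers both orientations. The other point to confirm is that the concatenated order in ($\Leftarrow$) never compares vertices on different tracks in a way (v) inspects. This holds because (v) only compares $u$ with $x$, both on track $i$, and $y$ with $v$, both on track $j$.
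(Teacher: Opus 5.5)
Your proof is correct and follows the paper's construction: unary $\Tr_1,\dots,\Tr_t$, one binary $\Lo$, the same partition, independence and X-crossing clauses, and the same two-direction argument. The only deviation is that you make $\Lo$ a global strict total order, so in the ($\Leftarrow$) direction you must concatenate the track orders. The paper instead only requires totality within a track, $\Tr_i(u)\wedge\Tr_i(v)\to u=v\vee\Lo(u,v)\vee\Lo(v,u)$, and interprets $\Lo$ as $\bigcup_i<_i$; both choices work equally well.
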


\begin{proof}
We treat a $t$-track layout as a coloring of the vertices with $t$ colors together with a linear order on each color class. As for stack and queue layouts, the layout condition is a forbidden pattern on four vertices.

\medskip
\noindent\emph{The \SNP sentence.}
$\Phi^{\mathrm{tr}}_t$ existentially quantifies over unary relations $\Tr_1,\dots,\Tr_t$ and a binary relation~$\Lo$. The intended interpretation is that $\Tr_i$ is the $i$-th track and that $\Lo(u,v)$ holds if $u$ and $v$ lie on a common track and $u$ precedes $v$ on it. We impose the following conditions, for all $i,j\in\{1,\dots,t\}$ with $i\ne j$.

The first set of conditions ensures that the tracks partition $V$ into independent sets.
\[
\text{(T1) } \Tr_1(u)\vee\dots\vee\Tr_t(u), \quad\quad \text{(T2) } \neg\big(\Tr_i(u)\wedge\Tr_j(u)\big), \quad\quad \text{(T3) } \neg\big(\Tr_i(u)\wedge\Tr_i(v)\wedge E(u,v)\big)
\]
Next, we enforce $\Lo$ to be a strict linear order on each track.
\begin{enumerate}[label=(O\arabic*),leftmargin=3.2em]
\item\label{O1} $\neg\Lo(u,u)$.
\item\label{O2} $\Lo(u,v)\wedge\Lo(v,w)\to\Lo(u,w)$.
\item\label{O3} $\Tr_i(u)\wedge\Tr_i(v)\;\to\;u=v\vee\Lo(u,v)\vee\Lo(v,u)$.
\end{enumerate}
Finally, we forbid X-crossings between two distinct tracks.
\begin{enumerate}[label=(X\arabic*),leftmargin=3.2em]
\item\label{X1} $\neg\big(\Tr_i(u)\wedge\Tr_i(x)\wedge\Tr_j(v)\wedge\Tr_j(y)\wedge E(u,v)\wedge E(x,y)\wedge\Lo(u,x)\wedge\Lo(y,v)\big)$.
\end{enumerate}

\medskip
Let $\forall\bar z\,\psi^{\mathrm{tr}}_t$ be the conjunction of the universal closures of (T1)--\ref{X1} over all admissible $i,j$. We define $\Phi^{\mathrm{tr}}_t:=\exists\Tr_1\cdots\exists\Tr_t\,\exists\Lo\;\forall\bar z\;\psi^{\mathrm{tr}}_t$. Condition (T1)  has $t$ atoms and there are $\mathcal{O}(t^2)$ conditions of type (T2) and \ref{X1}, each of constant length; hence $|\Phi^{\mathrm{tr}}_t|=\mathcal{O}(t^2)$, every condition has at most four variables, and $\Phi^{\mathrm{tr}}_t$ can be written down in time polynomial in $t$.

Note that $\Lo$ is only constrained on pairs of vertices from a common track; its behavior on other pairs is irrelevant, as \ref{X1} only inspects $\Lo$ on such pairs. In \ref{X1}, the atoms $\Tr_i(u),\Tr_i(x),\Tr_j(v),$ and $\Tr_j(y)$ with $i\ne j$, together with (T2) and \ref{O1}, force $u,x$ to be distinct vertices of track $i$ and $v,y$ distinct vertices of track $j$, so the negated conjunction describes exactly an X-crossing of $uv$ and $xy$ in the sense of Definition~\ref{def:track}.

\smallskip
\noindent\emph{Correctness.}
\noindent($\Leftarrow$) Let $\tn(G)\le t$ and fix a $t$-track layout $(T_1,\dots,T_t;<_1,\dots,<_t)$. Interpret $\Tr_i$ as $T_i$ and $\Lo$ as $\bigcup_i{<_i}$, i.e.\ $\Lo(u,v)$ iff $u,v\in T_i$ and $u<_iv$ for some $i$. Then (T1) and (T2) hold because the tracks partition $V$, and (T3) because they are independent sets. \ref{O1} holds because each $<_i$ is irreflexive; \ref{O2} because $\Lo(u,v)\wedge\Lo(v,w)$ places $u,v,w$ on one track $T_i$ (tracks are disjoint), where $<_i$ is transitive; and \ref{O3} because $<_i$ is total on $T_i$. Finally, if the conjunction in \ref{X1} held for some $u,x\in T_i$, $v,y\in T_j$, then $uv,xy\in E$ with $u<_ix$ and $y<_jv$ would be an X-crossing, contradicting the definition of a track layout. Hence $G\models\Phi^{\mathrm{tr}}_t$.

\noindent($\Rightarrow$) Let $(V,E,\Tr_1,\dots,\Tr_t,\Lo)\models\forall\bar z\,\psi^{\mathrm{tr}}_t$. By (T1) and (T2), $T_i:=\{u:\Tr_i(u)\}$ for $i=1,\dots,t$ is a partition of $V$, and by (T3), each $T_i$ is an independent set. Let $<_i$ be the restriction of $\Lo$ to~$T_i$. By \ref{O1} and \ref{O2}, it is irreflexive and transitive, and by \ref{O3} it is total on $T_i$; hence $<_i$ is a strict linear order on $T_i$. It remains to exclude X-crossings. Suppose $i\ne j$ and edges $uv,xy$ with $u,x\in T_i$, $v,y\in T_j$, $u<_ix$ and $y<_jv$ existed. Then $\Tr_i(u),\Tr_i(x),\Tr_j(v),\Tr_j(y)$, $E(u,v)$, $E(x,y)$, $\Lo(u,x)$ and $\Lo(y,v)$ all hold, so the conjunction negated in \ref{X1} is satisfied---a contradiction. Thus $(T_1,\dots,T_t;<_1,\dots,<_t)$ is a $t$-track layout of $G$ and $\tn(G)\le t$.
\end{proof}

\begin{corollary}\label{cor:tracks}
Given a graph $G$ and $t\in\N$, deciding $\tn(G)\le t$ is fixed-parameter tractable with respect to $\td(G)$. Moreover, $\tn(G)$ can be computed in time $f(\td(G))\cdot|G|$ for a computable function~$f$.
\end{corollary}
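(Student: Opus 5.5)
The plan is to derive Corollary~\ref{cor:tracks} from Lemma~\ref{lem:generic}, taking $\pi=\tn$ and $\Phi_t=\Phi^{\mathrm{tr}}_t$ from Theorem~\ref{thm:tracks}. The lemma has four hypotheses, and the theorem already supplies three of them:
\begin{itemize}
\item $\Phi^{\mathrm{tr}}_t$ is an \SNP{} sentence of arity at most $a=2$;
\item it is computable from $t$, in time polynomial in $t$;
\item its length satisfies $|\Phi^{\mathrm{tr}}_t|\le g_1(t)$ for some computable non-decreasing $g_1(t)=\mathcal{O}(t^2)$, and $G\models\Phi^{\mathrm{tr}}_t\iff\tn(G)\le t$.
\end{itemize}

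The fourth hypothesis, and the only step needing real input, is a computable non-decreasing $g_2$ with $\tn(G)\le g_2(\td(G))$ for every graph $G$. I would take this from the literature: track number is bounded by a function of treewidth. Dujmovi\'c, Morin, and Wood~\cite{DujmovicMW05} show that graphs of bounded treewidth have bounded track number, and their bound is explicit and computable (exponential in $\tw(G)$). Combining this with $\tw(G)\le\td(G)-1$~\cite{NesetrilM06} and replacing the resulting function by its running maximum gives a computable non-decreasing $g_2$.

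If a citation-free argument is preferred, I would try a self-contained bound via treedepth directly. Fix an elimination forest of height $d=\td(G)$. Assign each vertex to a track indexed by its depth together with its proper coloring relative to its ancestors (at most $2^{d}$ adjacency patterns). Order each track by a DFS order of the forest. Then check that no X-crossing occurs. This check is the delicate point, since edges between two tracks connect ancestor--descendant pairs, and DFS order must make them non-crossing. That is not obviously true, so the citation route is safer and is what I would use.

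With all four hypotheses in place, Lemma~\ref{lem:generic} decides $\tn(G)\le t$ in time $f'(\td(G))\cdot|G|$, which gives fixed-parameter tractability with respect to $\td(G)$. The same lemma computes $\tn(G)$ within the same bound. Finally, by Theorem~\ref{main_thm:treedepth_meta_theorem} with $Q$ the index set of $\Tr_1,\dots,\Tr_t,\Lo$ (arity at most $2$), the algorithm can also output an actual track layout, matching the claim in the text preceding the corollary.
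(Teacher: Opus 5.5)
Your proposal is correct and follows the paper's proof. Like the paper, you instantiate Lemma~\ref{lem:generic} with $\Phi^{\mathrm{tr}}_t$ from Theorem~\ref{thm:tracks} (computable from $t$, length $\mathcal{O}(t^2)$) and with the Dujmovi\'c--Morin--Wood bound $\tn(G)\le g_0(\tw(G))$, transferred to treedepth via $\tw(G)\le\td(G)-1$, i.e., $g_2(s)=g_0(s-1)$. Your tentative self-contained DFS-based bound is not needed, and the exact growth rate of the cited bound is irrelevant because only its computability is used.
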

\begin{proof}
By Theorem~\ref{thm:tracks}, $\Phi^{\mathrm{tr}}_t$ is computable from $t$ and has length $\mathcal{O}(t^2)$. By Dujmovi\'c, Morin and Wood~\cite{DujmovicMW05}, $\tn(G)\le g_0(\tw(G))$ for a computable, nondecreasing function $g_0$, hence $\tn(G)\le g_2(\td(G))$ with $g_2(s):=g_0(s-1)$. The statement then immediately follows from Lemma~\ref{lem:generic}.
\end{proof}

\subsection{Twin-Width}
Twin-width was introduced by Bonnet, Kim, Thomass\'e, and Watrigant~\cite{BonnetKTW22,BKTW.TwI.2020} in 2020. Its definition turns the mechanism behind the algorithm of Guillemot and Marx~\cite{GuillemotM14} for permutation pattern matching into a graph parameter: there, pairs of vertices that are ``almost twins''  are iteratively contracted while errors are charged in a red graph. The parameter proved remarkably general: classes of bounded twin-width include bounded treewidth and clique-width, proper minor-closed classes, unit interval graphs, posets of bounded width and pattern-avoiding permutation graphs~\cite{BonnetKTW22}; they are closed under first-order transductions. Moreover, first-order model checking is fixed-parameter tractable on every class of bounded twin-width, provided a contraction sequence is given~\cite{BonnetKTW22}. Twin-width has since become a central concept in the structural theory of dense graphs~\cite{BGKTW21,BonnetGMSTT24}.

Twin-width is usually defined through trigraphs, in which a contracted pair inherits a red edge whenever the two merged vertices disagree, and red edges persist; see Figure~\ref{fig:example-twinwidth} for a small example.
We use the partition definition by Bonnet, Kim, Thomass\'e, and Watrigant~\cite{BonnetKTW22}.
Throughout, we let $|V|=n$.

Let $\Part$ be a partition of $V$ and $u,v\in V$. We write $[v]_{\Part}$ for the part of the partition that contains~$v$, and $u\sim_{\Part}v$ for $[u]_\Part=[v]_\Part$.
Two distinct parts $A,B\in\Part$ are \emph{homogeneous} if either $A\times B\subseteq E$ or $(A\times B)\cap E=\emptyset$, and \emph{red-adjacent} otherwise, i.e.\ if there are $a,a'\in A$ and $b,b'\in B$ with $E(a,b)$ and $\neg E(a',b')$. The \emph{red degree} $\rdeg(\Part)$ is the maximum, over all $A\in\Part$, of the number of parts red-adjacent to $A$. For a contraction sequence $\Part_n,\dots,\Part_1$ we call the passage from $\Part_{i+1}$ to~$\Part_i$ \emph{step $n-i$}; thus step $j\in\{1,\dots,n-1\}$ produces $\Part_{n-j}$.

\begin{definition}[Twin-width]\label{def:tww}
 A \emph{contraction sequence} of a graph $G$ is a sequence of partitions $\Part_n,\Part_{n-1},\dots,\Part_1$ of $V$ such that $\Part_n$ is the discrete partition $\{\{v\}:v\in V\}$, we have $\Part_1=\{V\}$, and for each $1\le i<n$ the partition $\Part_i$ is obtained from $\Part_{i+1}$ by replacing two distinct parts $A,B\in\Part_{i+1}$ by their union $A\cup B$. Its \emph{width} is $\max_i\rdeg(\Part_i)$, and the \emph{twin-width} $\tww(G)$ is the least width of any contraction sequence of $G$.
\end{definition}

\begin{figure}
	\centering
	\includegraphics{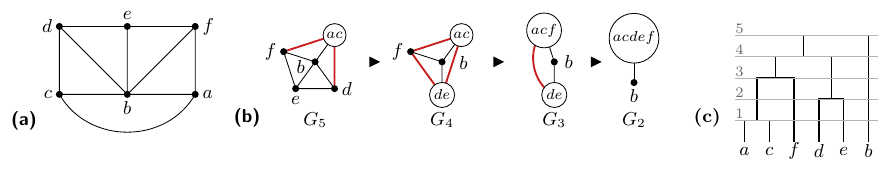}
	\caption{\textbf{\textsf{(a)}} A graph $G$ with $\tww(G)=2$, \textbf{\textsf{(b)}} a contraction sequence of width $2$, shown as the trigraphs $G_5,\dots,G_2$ with red edges drawn in red, and \textbf{\textsf{(c)}} the corresponding dendrogram: two vertices are joined at height $i$ if they first lie in a common part after the $i$-th contraction.}
	\label{fig:example-twinwidth}
\end{figure}

Computing twin-width is hard. Berg\'e, Bonnet, and D\'epr\'es~\cite{BergeBonnetDepres22} showed that deciding $\tww(G)\le k$ is already \NP-complete for $k=4$, ruling out fixed-parameter tractability w.r.t.~$k$.
Graphs of twin-width at most one can be recognized in polynomial time~\cite{BonnetKRTW22}, and the cases $k=2,3$ are open. No fixed-parameter approximation with respect to twin-width itself is known~\cite{GanianR26}. Instead, all known algorithms resort to stronger parameters: An \FPT{} approximation of twin-width can be computed in \FPT{} time with respect to the feedback edge number and the vertex integrity~\cite{BGR24}. Recently, Ganian and Rocton~\cite{GanianR26} obtained such an approximation also with respect to treedepth; this is the first tractability result for a parameter that is not a deletion distance to a simple graph class. The same work also proposes an exact \FPT{} algorithm, parameterized by vertex integrity~\cite{GanianR26}.
Whether twin-width can be computed \emph{exactly} in \FPT{} time with respect to treedepth, or even whether $\tww(G)\le k$ can be decided in \FPT{} time with respect to treedepth for a fixed $k\ge4$, is currently open. Below, we answer this question in the positive.

\begin{theorem}\label{thm:tww}
For every $k\in\N$ there is an \SNP{} sentence $\Phi_k$ such that for every finite graph $G$,
\[
G\models\Phi_k\quad\Longleftrightarrow\quad\tww(G)\le k .
\]
Moreover, $\Phi_k$ has two existentially quantified relation symbols, both of arity $4$, at most $\mathcal{O}(k)$ first-order variables, and length $\mathcal{O}(k^2)$; the map $k\mapsto\Phi_k$ is computable in time polynomial in $k$.
\end{theorem}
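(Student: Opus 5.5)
The plan is to encode a contraction sequence by two $4$-ary relations, with the steps of the sequence indexed by the vertices themselves. In a contraction sequence $\Part_n,\dots,\Part_1$, every step merges two parts. After the fact we may declare one of the two merged parts to be the \emph{absorbed} one and call its $L$-maximum the vertex that dies at this step. Dead vertices receive their death times in a linear order $\Lo$, and the last surviving vertex is placed last. So the first relation, used in the form $\Lo(u,v,u,v)$ with all other tuples forbidden, encodes a strict total order via the clauses (O1)--(O3) of \Cref{thm:layouts}. The second relation $\Pg(u,v,t,t)$ means that $u\sim v$ in the partition existing right after the step named by $t$. I will write the partition at the moment $t$ is eliminated as $\Part^{(t)}$. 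Universal clauses then say the following:
\begin{itemize}
\item for each $t$, $\Pg(\cdot,\cdot,t,t)$ is an equivalence relation;
\item the partitions coarsen monotonically: $\Lo(t,t')\wedge\Pg(u,v,t)\to\Pg(u,v,t')$;
\item each part of $\Part^{(t)}$ contains at most one vertex $u$ with $\Lo(t,u)$; this is a forbidden pattern $\Pg(u,v,t)\wedge\Lo(t,u)\wedge\Lo(t,v)\wedge u\neq v$, and it forces at least one part per surviving vertex;
\item the vertex $t$ itself already lies in a part with a vertex $s$ satisfying $\Lo(t,s)$.
\end{itemize}
Padding the unused coordinates keeps both relations of arity $4$.

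The red-degree condition is the part that works cleanly in \SNP. A violation of width $k$ at time $t$ is witnessed by vertices $a,b_1,\dots,b_{k+1}$ lying in pairwise distinct parts of $\Part^{(t)}$, together with, for every $i$, vertices $x_i,x_i'\sim a$ and $y_i,y_i'\sim b_i$ with $E(x_i,y_i)$ and $\neg E(x_i',y_i')$. Such a witness is a forbidden conjunction over $1+(k+1)\cdot 5+1=\BigO{k}$ first-order variables. The $\binom{k+2}{2}$ distinctness atoms $\neg\Pg(b_i,b_j,t)$ and $\neg\Pg(a,b_i,t)$ give length $\BigO{k^2}$, while all other clauses have constant length. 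So $\Phi_k$ has two $4$-ary relation symbols, $\BigO{k}$ variables and length $\BigO{k^2}$, and it is clearly computable in polynomial time from $k$. For the direction ($\Leftarrow$) of correctness, one interprets $\Lo$ and $\Pg$ from an optimal contraction sequence as above and checks each clause. For ($\Rightarrow$), one reads off the chain $(\Part^{(t)})_t$, shows that it is a genuine contraction sequence, and observes that every part of width more than $k$ would instantiate the forbidden red pattern.

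The main obstacle is the clause ``$t$ lies in a part with a later vertex''. Taken literally it is existential, $\exists s$, and therefore not \SNP. Without it I only get that $\Part^{(t)}$ has \emph{at least} as many parts as there are vertices after $t$. A lazy chain that merges less often, and need not end in $\{V\}$, still satisfies the other clauses. My first attempt at repairing this is to move the witness $s$ into the relation. I would reinterpret the $4$-ary $\Pg(u,v,t,s)$ as ``$u\sim v$ after step $t$, at which $t$ is absorbed into the part of $s$''. Uniqueness of $s$ and consistency across different $s$ are universal, and so is the requirement that the partition after step $t$ arises from the previous one by merging exactly $[t]$ and $[s]$. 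The remaining existence of some $s$ per $t$ I would try to obtain from the counting argument. The at-most-one-survivor-per-part clause, applied at consecutive times, together with a universal clause forbidding two distinct merges at a single step, pins the number of parts of $\Part^{(t)}$ to exactly the number of survivors of $t$. This in turn forces each non-final $t$ to be absorbed, and forces the partition after the $\Lo$-last vertex to be $\{V\}$.

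If this counting argument cannot be made universal, because ``consecutive'' itself hides an existential, the fallback is to show directly that lazy chains are harmless. A width-$\le k$ lazy chain whose final partition has red degree $\le k$ can be completed to a full contraction sequence without exceeding width $k$, or alternatively $\tww$ is unchanged under this relaxed definition. I expect this step to require the most care. Once it is settled, \Cref{lem:generic} applies, since $\tww(G)$ is bounded by a computable function of $\tw(G)\le\td(G)$.
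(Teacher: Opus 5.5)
\textbf{There is a genuine gap, at exactly the step you flag, and neither of your remedies closes it.}

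In your vertex-indexed encoding no clause ever forces a merge to happen. Two of your clauses bound the number of parts of $\Part^{(t)}$, but only from \emph{below}: the survivor clause (at most one $u$ with $\Lo(t,u)$ in each part) and the clause forbidding two merges in one step. Every form of the matching upper bound is a $\forall\exists$ statement in your encoding. This covers ``each part contains a survivor'', ``$t$ is absorbed somewhere'', ``the final partition is $\{V\}$'', and anything referring to the ``previous'' partition.

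Concretely, let every $\Part^{(t)}$ be the discrete partition (in the $4$-ary variant, record no absorption at all). This satisfies every clause you list, including the red-degree pattern, because there $x_i\sim a$ and $y_i\sim b_i$ force $E(a,b_i)\wedge\neg E(a,b_i)$. So your $\Phi_k$ would hold on every graph, e.g.\ on $P_4$ for $k=0$.

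The fallback is false for the same reason. A chain that never merges has width $0$ on every graph. Completing a partition of red degree at most $k$ to a contraction sequence of width at most $k$ is the twin-width question itself.

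Nor can a cleverer universal clause rescue the counting argument. Universal clauses are preserved under induced substructures. Take a genuine encoding and restrict it to two vertices $t_1<t_2$ that still lie in different parts of $\Part^{(t_2)}$. The resulting chain stays discrete, so at time $t_1$ it has two parts but only one survivor. Hence no universal clauses satisfied by your intended interpretations can imply that the number of parts equals the number of survivors.

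\textbf{The missing idea, which is how the paper proceeds,} is to index time by \emph{pairs} rather than by vertices. The partitions are then \emph{defined} from a single guessed order instead of being guessed separately. The relation $\mathrm{C}(u,v,x,y)$ says that $u,v$ are joined no later than $x,y$. The partition at the level of $xy$ is $\{uv : uv\preceq xy\}$, and the one just before it is $\{uv : uv\prec xy\}$. Clauses (C1)--(C5) are all universal: totality and transitivity of $\preceq$ on $V^2$, symmetry within pairs, the diagonal forming the lowest level, and transitivity of each $\sim_{xy}$. Since every pair lies in some level, the top partition is automatically $\{V\}$. The previous partition is expressible by $\mathrm{C}$-atoms alone, so (C6) is also universal: any two non-diagonal pairs on the same level join the same two parts of the previous level. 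This makes each level a single binary merge, giving exactly $n-1$ steps.

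On top of this encoding, your inlined red-degree witness works verbatim. It uses $a$, $b_i$, $x_i,x_i',y_i,y_i'$ with distinctness atoms $ab_i\not\preceq xy$ and $b_ib_j\not\preceq xy$. It is a legitimate substitute for the paper's auxiliary relation $\Red$ with (R1)--(R3), still with $\BigO{k}$ variables and length $\BigO{k^2}$.
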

\begin{proof}
\noindent\emph{Intuition.}
A natural way to represent a contraction sequence is to index its steps by vertices and to point from each step to the next one; but the sequence must not stop early, so every step would need a successor until everything is merged, and ``every step has a successor'' is inherently a $\forall \exists$ statement that \SNP{} does not allow. We therefore describe the sequence the way a dendrogram in hierarchical clustering does: not by listing the merges in order, but by recording for any two pairs of vertices which of them is joined first. This is a relation on pairs of pairs, and it determines the sequence completely---the partition after a given merge consists of all pairs joined no later than it, and the partition before it of all pairs joined strictly earlier. In this representation the sequence cannot stop early, because a total order on pairs assigns every pair a moment at which it is joined, and totality of an order is expressible with universal quantifiers only.

\smallskip
\emph{The \SNP sentence.} We define $\Phi_k$ as follows: Let $\Phi_k$ existentially quantify over two relations, $\mathrm{C}$ and $\Red$, both of arity $4$. $\mathrm{C}(u,v,x,y)$ is meant to say that whenever $x$ and $y$ lie in a common part of the contraction sequence, so do $u$ and $v$. The desired interpretation of $\Red(x,y,a,b)$ is that the parts containing $a$ and $b$ are red-adjacent in the partition obtained at the step that merges $x$ and~$y$. Below, we force all necessary properties of the intended interpretations.
In doing so, we make use of the following shorthand notation
\begin{align*}
uv \preceq xy&\;\coloneqq\; \mathrm{C}(u,v,x,y),\\
uv \prec xy&\;\coloneqq\; \mathrm{C}(u,v,x,y)\wedge\neg \mathrm{C}(x,y,u,v),\\
uv \equiv xy&\;\coloneqq\; \mathrm{C}(u,v,x,y)\wedge \mathrm{C}(x,y,u,v),
\end{align*}
and let $uv \not\preceq xy$ denote $\neg (uv \preceq xy)$.

First, we enforce the merge order to be a total preorder (i.e., an order that is total, reflexive, and transitive) on pairs of elements. We also ensure that within this order, pairs behave reflexively, symmetrically, and transitively.
\begin{enumerate}[label=(C\arabic*),leftmargin=3.2em]
\item\label{B1} $uv \preceq xy\;\vee\;xy \preceq uv$ (merge order is total and reflexive)
\item\label{B2} $uv \preceq xy
\wedge xy \preceq st\;\to\;uv \preceq st$ (merge order is transitive)
\item\label{B3} $uv \preceq xy\to vu \preceq xy$  and $uv \preceq xy\to uv \preceq yx$ (merge elements are symmetric)
\item\label{B4} $uu \preceq xy$ (merge elements are reflexive) and $x\neq y\to xy \not\preceq uu$ (in the initial partition, no distinct elements are merged)
\item\label{B5} $uv \preceq xy\wedge vw \preceq xy\;\to\;uw \preceq xy$ (merge elements are transitive)
\end{enumerate}
Furthermore, in a contraction sequence, we perform only one merge at a time. This can be expressed as follows.
\begin{enumerate}[label=(C\arabic*),leftmargin=3.2em,resume]
\item\label{B6} $uv\equiv xy\ \wedge\ u'v'\equiv xy\ \wedge\ u\neq v\ \wedge\ u'\neq v'$\\
\hspace*{2em}$\to\;\big(uu'\prec xy\wedge vv'\prec xy\big)\vee\big(uv'\prec xy\wedge vu'\prec xy\big)$.
\end{enumerate}
For $\Red$ we enforce the following: The first statement says that whenever two parts of a partition are not homogeneous, then they must be marked as red-adjacent. The second statement ensures that being red-adjacent is a property of pairs of parts and not of pairs of individual vertices.
\begin{enumerate}[label=(R\arabic*),leftmargin=3.2em]
\item\label{R1} $aa'\preceq xy\ \wedge\ bb'\preceq xy\ \wedge\ ab\npreceq xy\ \wedge\ E(a,b)\wedge\neg E(a',b')\;\to\;\Red(x,y,a,b)$.
\item\label{R2} $\Red(x,y,a,b)\ \wedge\ aa'\preceq xy\ \wedge\ bb'\preceq xy\;\to\;\Red(x,y,a',b')$.
\end{enumerate}
Lastly, the number of red-adjacent parts must be forced to be at most $k$ at all times. We express this by saying that there cannot be $k+1$ such parts at any merging step. Note that this is the only statement that makes use of parameter $k$.
\begin{enumerate}[label=(R\arabic*),leftmargin=3.2em,resume]
\item\label{R3} $\displaystyle\neg\Big(\bigwedge_{i=0}^{k}\Red(x,y,a,b_i)\;\wedge\bigwedge_{0\le i<j\le k}b_ib_j\npreceq xy\Big)$.
\end{enumerate}
Finally, we let $\Phi_k=\exists \mathrm{C}\,\exists\Red\;\forall\bar z\;\psi_k$, where $\forall\bar z\;\psi_k$ is the conjunction of the universal closures of the statements \ref{B1}--\ref{R3}.
Every condition---except \ref{R3}---has constant length, while \ref{R3} consists of~$k+1$ atoms of $\Red$ and $\binom{k+1}{2}$ atoms of $\mathrm{C}$; hence $\Phi_k$ has length $\mathcal{O}(k^2)$ and at most $\mathcal{O}(k)$ first-order variables, and it can be written down in time polynomial in $k$.

\smallskip
\emph{Correctness.}
($\Leftarrow$) Let $\Part_n,\dots,\Part_1$ be a contraction sequence of width at most $k$. Towards defining $\mathrm{C}$ and $\mathrm{Red}$, let $\mu\colon V^2\to\{0,\dots,n-1\}$ with $\mu(u,u)=0$ for all $u \in V$ and
\[
\mu(u,v)\;=\;\min\{\,j\ge 1:\ u\sim_{\Part_{n-j}}v\,\},
\]
for all $u,v \in V$ with $u\ne v$.
 Intuitively, $\mu(u,v)$ denotes the moment at which $u$ and $v$ are first put into a common part; such a moment exists because $\Part_1=\{V\}$.

Since the partitions only coarsen along the sequence, we have
\begin{equation}\label{eq:mu}
u\sim_{\Part_{n-j}}v\quad\Longleftrightarrow\quad\mu(u,v)\le j\qquad(0\le j\le n-1),
\end{equation}
and step $j$ merges two parts $A_j,B_j\in\Part_{n-j+1}$ with $\{(u,v):\mu(u,v)=j\}=(A_j\times B_j)\cup(B_j\times A_j)$. Now, choose $\mathrm{C}$ and $\mathrm{Red}$ as follows.
\begin{align*}
\mathrm{C}&:=\{(u,v,x,y)\in V^4:\ \mu(u,v)\le\mu(x,y)\},\\
\Red&:=\{(x,y,a,b)\in V^4:\ x\ne y,\ \ [a]\ne[b]\ \text{and $[a],[b]$ are red-adjacent in }\Part_{n-\mu(x,y)}\},
\end{align*}
where $[\cdot]$ denotes parts of $\Part_{n-\mu(x,y)}$. It remains to verify the clauses for all assignments of the variables. Note that by \eqref{eq:mu}, $uv\preceq xy$ holds iff $u\sim_{\Part_{n-\mu(x,y)}}v$, and $uv\prec xy$ holds iff $u\sim_{\Part_{n-\mu(x,y)+1}}v$ (for $x\neq y$).

\ref{B1}, \ref{B2}: $\le$ on integers is total, reflexive, and transitive. \ref{B3}: $\mu$ is symmetric. \ref{B4}: $\mu(u,u)=0\le\mu(x,y)$, and $\mu(x,y)\ge1>0$ for $x\ne y$. \ref{B5}: fix $x,y$ and put $j=\mu(x,y)$. By \eqref{eq:mu}, $uv\preceq xy$ holds iff $u$ and $v$ lie in a common part of $\Part_{n-j}$. So if $uv\preceq xy$ and $vw\preceq xy$, then $u,v,w$ all lie in the part of $v$, hence $uw\preceq xy$.

\ref{B6}: If $x=y$, the antecedent is unsatisfiable by \ref{B4}, so let $x \neq y$ and $j=\mu(x,y)\ge1$. Moreover, let $(u,v),(u',v')$ be pairs with $\mu(u,v)=\mu(u',v')=j$ and $u\neq v, u' \neq v'$. Then both lie in $(A_j\times B_j)\cup(B_j\times A_j)$. If $u,u'\in A_j$ then $v,v'\in B_j$, so $u\sim_{\Part_{n-j+1}}u'$ and $v\sim_{\Part_{n-j+1}}v'$, which is the first disjunct; the case $u,u'\in B_j$ is symmetric; if $u\in A_j$ and $u'\in B_j$ then $v\in B_j$, $v'\in A_j$, giving the second disjunct; and symmetrically.

\ref{R1}: if $aa'\preceq xy$, $bb'\preceq xy$, $ab\npreceq xy$, $E(a,b)$ and $\neg E(a',b')$, then in $\Part_{n-\mu(x,y)}$ the parts $[a]\ni a'$ and $[b]\ni b'$ are distinct and contain both an edge and a non-edge between them, hence are red-adjacent, so $\Red(x,y,a,b)$ holds. (If $x=y$, the antecedent is unsatisfiable: $aa'\preceq xx$ forces $a=a'$ and $bb'\preceq xx$ forces $b=b'$, contradicting $E(a,b)\wedge\neg E(a,b)$.) \ref{R2}: by definition, membership in $\Red$ depends only on the parts of $a$ and $b$.

\ref{R3}: suppose $\Red(x,y,a,b_i)$ for $i=0,\dots,k$ and $b_ib_j\npreceq xy$ for $i<j$. Then in $\Part_{n-\mu(x,y)}$ the parts $[b_0],\dots,[b_k]$ are pairwise distinct, each distinct from $[a]$ and red-adjacent to it, so $\rdeg(\Part_{n-\mu(x,y)})\ge k+1$, contradicting the width bound. Hence $G\models\Phi_k$.\\

($\Rightarrow$) Now let $(V,E,\mathrm{C},\Red)\models\forall\bar z\,\psi_k$. We use $\preceq$, $\prec$, $\equiv$ as defined above, also between pairs named by single letters $p\in V^2$.

By \ref{B1}--\ref{B3}, $\preceq$ is a total preorder on $V^2$ that is invariant under swapping the two coordinates of either pair.
Let $L_0<L_1<\dots<L_m$ be its $\equiv$-classes, which we call \emph{levels}, listed in $\preceq$-order. By \ref{B4}, any two diagonal pairs $uu$ and $vv$ are tied, and $uu\prec xy$ whenever $x\ne y$; so $L_0$ consists exactly of the diagonal pairs. For a pair $p$ define two binary relations on $V$ by
\[
u\sim_p v\;:\Longleftrightarrow\; uv\preceq p,\qquad\qquad u\sim_{<p}v\;:\Longleftrightarrow\; uv\prec p .
\]
Both depend only on the level of $p$, and $u\sim_{<p}v$ implies $u\sim_p v$. For diagonal $p$, \ref{B4} gives $u\sim_p v$ iff $u=v$.
Each $\sim_p$ is an equivalence relation on $V$: it is reflexive by \ref{B4}, symmetric by \ref{B3} and transitive by \ref{B5}. So is each $\sim_{<p}$ with $p\in L_i$, $i\ge1$: for $p'\in L_{i-1}$, $uv\prec p$ holds iff $uv$ lies in some $L_{i'}$ with $i'\le i-1$, which is true iff $uv\preceq p'$, so $\sim_{<p}$ coincides with $\sim_{p'}$.  Let $\Part(p)$ denote the partition of $V$ into the classes of $\sim_p$, and write $\Part(L_i)$ for $\Part(p)$ with $p\in L_i$. Thus $\Part(L_0)$ is the discrete partition, and for $p\in L_i$ with $i\ge1$ the classes of $\sim_{<p}$ form $\Part(L_{i-1})$.

Next we show that for $p\in L_i$, $i\ge1$, the partition $\Part(L_i)$ arises from $\Part(L_{i-1})$ by merging exactly two parts. Let $N_p$ be the set of pairs in the level of $p$, i.e.\ of those $uv$ with $uv\preceq p$ but not $uv\prec p$; it contains $p$ and no diagonal pair, and by definition $u\sim_p v$ holds iff $u\sim_{<p}v$ or $uv\in N_p$. Fix $u_0v_0\in N_p$ and let $A$ and $B$ be the parts of $\Part(L_{i-1})$ containing $u_0$ and $v_0$, respectively; then $A\ne B$, because $u_0\not\sim_{<p}v_0$. For any $uv\in N_p$, clause \ref{B6} applied to $uv$ and $u_0v_0$ (with $p=xy$) yields $u\sim_{<p}u_0$ and $v\sim_{<p}v_0$, or $u\sim_{<p}v_0$ and $v\sim_{<p}u_0$; so every pair in $N_p$ joins a vertex of $A$ with a vertex of $B$. Conversely, for $a\in A$ and $b\in B$ we have $a\sim_p u_0$, $u_0\sim_p v_0$ and $v_0\sim_p b$, hence $a\sim_p b$ by transitivity of $\sim_p$. Therefore $\Part(L_i)$ is obtained from $\Part(L_{i-1})$ by replacing $A$ and $B$ with $A\cup B$.

As a consequence, $\Part(L_i)$ has exactly one part fewer than $\Part(L_{i-1})$ for every $1\le i\le m$, hence $n-i$ parts. Every pair lies in some level, so all vertices are $\sim_p$-equivalent for $p\in L_m$, i.e.\ $\Part(L_m)=\{V\}$; hence $m=n-1$. Setting $\Part_{n-i}:=\Part(L_i)$ for $0\le i\le n-1$ we obtain a contraction sequence $\Part_n,\dots,\Part_1$ of $G$.

It remains to bound its width. The discrete partition has red degree $0$. Now, fix $1\le i\le n-1$ and $p=xy\in L_i$, so that $x\ne y$ and $u\sim_p v$ iff $uv\preceq xy$. First, $\Red(x,y,\cdot,\cdot)$ contains every red pair of $\Part(p)$: if $A\ne B$ are parts of $\Part(p)$ that are red-adjacent, choose $a,a'\in A$ and $b,b'\in B$ with $E(a,b)$ and $\neg E(a',b')$; then $aa'\preceq xy$, $bb'\preceq xy$ and $ab\npreceq xy$, so \ref{R1} gives $\Red(x,y,a,b)$, and \ref{R2} propagates this to $\Red(x,y,a'',b'')$ for all $a''\in A$ and $b''\in B$. Now suppose some part $A$ of $\Part(p)$ had $k+1$ distinct red-adjacent parts $B_0,\dots,B_k$. Pick $a\in A$ and $b_j\in B_j$. Then $\Red(x,y,a,b_j)$ for all $j$, and $b_jb_{j'}\npreceq xy$ for $j<j'$ because $B_j\ne B_{j'}$, which would violate \ref{R3}. Hence $\rdeg(\Part(p))\le k$. Consequently, the contraction sequence has width at most $k$, i.e.\ $\tww(G)\le k$.
\end{proof}
\begin{corollary}
Given a graph $G$ and $k\in\N$, deciding $\tww(G)\le k$ is fixed-parameter tractable with respect to $\td(G)$, and $\tww(G)$ can be computed in time $f(\td(G))\cdot|G|$ for a computable function $f$.
\end{corollary}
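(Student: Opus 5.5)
The plan is to apply Lemma~\ref{lem:generic} with $\pi=\tww$, feeding it the sentences $\Phi_k$ from Theorem~\ref{thm:tww}. That lemma needs three inputs:
\begin{enumerate}
\item a computable family of \SNP{} sentences $\Phi_k$ of bounded arity;
\item a computable, non-decreasing bound $g_1$ on their length;
\item a computable, non-decreasing function $g_2$ with $\tww(G)\le g_2(\td(G))$ for every graph $G$.
\end{enumerate}

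The first two come directly from Theorem~\ref{thm:tww}. Each $\Phi_k$ uses two existentially quantified relations, both of arity~$4$, so the arity bound is $a=4$, uniformly in $k$. Its length is $\mathcal{O}(k^2)$, so we may take $g_1(k)=ck^2$ for a suitable constant $c$. Finally, the map $k\mapsto\Phi_k$ is computable in time polynomial in $k$.

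The remaining ingredient is the bound $\tww(G)\le g_2(\td(G))$, which I would obtain by chaining known bounds. First, $\tw(G)\le\td(G)-1$, which is standard. Second, twin-width is bounded by a computable function of treewidth; for instance, Bonnet, Kim, Thomass\'e, and Watrigant show that bounded treewidth (indeed bounded clique-width) implies bounded twin-width with an explicit exponential bound, and Jacob and Pilipczuk give $\tww(G)\le 3\cdot 2^{\tw(G)-1}$. Taking $g_0(w)=3\cdot 2^{w}$, say, and setting $g_2(s)\coloneqq g_0(s)$ gives a computable, non-decreasing function with $\tww(G)\le g_2(\td(G))$. Lemma~\ref{lem:generic} then yields both the decision algorithm and the computation of $\tww(G)$ in time $f(\td(G))\cdot\Size{G}$.

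The only point needing care, and the place I would expect friction, is citing a correct computable bound of twin-width in terms of treewidth or treedepth. Any explicit computable bound suffices. If one prefers not to rely on a specific reference, a direct argument via treedepth also works: contract siblings in an elimination tree bottom-up, merging whole subtrees of the same ``ancestor-adjacency type''. This keeps the red degree bounded by a function of the depth, although I would only fall back on it if the literature bound were unavailable. Everything else is a direct instantiation, exactly as in the corollaries for stack, queue, and track number.
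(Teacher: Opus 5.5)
Your proposal is correct and matches the paper's proof: it applies \Cref{lem:generic} to the sentences $\Phi_k$ of \Cref{thm:tww}, using the Jacob--Pilipczuk bound $\tww(G)\le 3\cdot 2^{\tw(G)-1}$ together with $\tw(G)\le\td(G)-1$. The paper states the slightly sharper bound $3\cdot 2^{\td(G)-2}$, but your looser $g_2(s)=3\cdot 2^{s}$ works equally well; the only cosmetic fix is to take $g_1(k)=c(k+1)^2$ instead of $ck^2$, since $\Phi_0$ already has positive length.
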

\begin{proof}
By Theorem~\ref{thm:tww}, $\Phi_k$ has length $\mathcal{O}(k^2)$ and is computable from $k$; moreover, it is known that $\tww(G)\le3\cdot2^{\tw(G)-1}$~\cite{JacobP22}, hence $\tww(G)\le3\cdot2^{\td(G)-2}$. The statement then immediately follows from Lemma~\ref{lem:generic}.
\end{proof}

\subsection{Boxicity}
Boxicity was introduced by Roberts~\cite{Roberts69} in 1969 as a natural higher-dimensional analogue of interval graphs: where an interval graph records overlaps of intervals on a line, a graph of boxicity~$d$ records overlaps of boxes in $\mathbb{R}^d$, and $\bx(G)$ measures how many dimensions are needed to realize~$G$ in this way; see Figure~\ref{fig:example-boxicity} for a small example.

\begin{definition}[Boxicity]
A \emph{$d$-box} is a product $I^1\times\dots\times I^d\subseteq\real^d$ of closed intervals. A \emph{$d$-box representation} of $G$ assigns to every vertex $v$ a $d$-box $B_v$ such that for all distinct $u,v\in V$,
\[
uv\in E\quad\Longleftrightarrow\quad B_u\cap B_v\neq\emptyset .
\]
The \emph{boxicity} $\bx(G)$ \emph{of $G$} is the least $d\ge1$ such that $G$ has a $d$-box representation. In particular, graphs of boxicity $1$ are called \emph{interval graphs}.
\end{definition}
\begin{figure}
	\centering
	\includegraphics{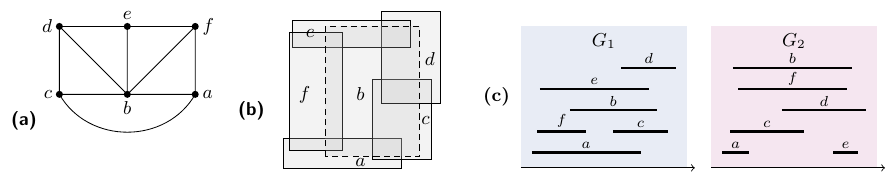}
	\caption{\textbf{\textsf{(a)}} A graph $G$, \textbf{\textsf{(b)}} a $2$-box representation of $G$, and \textbf{\textsf{(c)}} the interval models obtained by projecting the boxes onto the two coordinate axes. The projections define interval graphs $G_1$ (blue) and $G_2$ (purple) on $V(G)$ with $G=G_1\cap G_2$.}
	\label{fig:example-boxicity}
\end{figure}

Computing boxicity is hard: deciding $\bx(G)\le d$ is \NP-complete for every fixed $d\ge2$~\cite{Cozzens81,Yannakakis82,Kratochvil94}, and for every $\varepsilon > 0$ there is no polynomial-time algorithm that approximates the boxicity of an $n$-vertex graph within a factor of $n^{1-\varepsilon}$
unless $\NP=\ZPP$\footnote{i.e., unless every problem in \NP{} can be solved by a zero-error randomized algorithm in expected polynomial time}~\cite{CLN13}.
However, from the perspective of parameterized complexity, deciding $\bx(G)\le d$ is fixed-parameter tractable with respect to the vertex cover number~\cite{AdigaChitnisSaurabh10}, the twin cover number~\cite{Ganian15} and the cluster vertex deletion number~\cite{BCJS16}, and can be approximated within an additive $1$ in \FPT{} time with respect to pathwidth~\cite{BCJS16}. Computing boxicity exactly has been conjectured to remain \NP-complete on graphs of bounded treewidth~\cite{AdigaChitnisSaurabh10}, and even on graphs of bounded bandwidth~\cite{BCJS16}.

 Note that two boxes intersect if and only if they intersect in each of the $d$ coordinates. This yields a fairly straightforward characterization.

\begin{observation}[\cite{Roberts69}]\label{obs:intersection}
For every graph $G$ and $d\ge1$, $\bx(G)\le d$ if and only if $G=(V,\,E(G_1)\cap\dots\cap E(G_d))$ for interval graphs $G_1,\dots,G_d$ on $V(G)$.
\end{observation}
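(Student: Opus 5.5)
We prove the two directions of \Cref{obs:intersection} separately, using only the definition of a $d$-box representation and the fact that two closed boxes intersect if and only if their projections to every coordinate axis intersect.

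\emph{Forward direction.} Suppose $\bx(G)\le d$ and fix a $d$-box representation $(B_v)_{v\in V}$ with $B_v=I_v^1\times\dots\times I_v^d$; if the representation uses fewer than $d$ dimensions, we pad every box with the common interval $[0,1]$ in the remaining coordinates, which changes no intersection. For $i\in[d]$, let $G_i$ be the intersection graph of the intervals $(I_v^i)_{v\in V}$, that is, $uv\in E(G_i)$ for distinct $u,v$ if and only if $I_u^i\cap I_v^i\neq\emptyset$. Each $G_i$ is an interval graph on $V(G)$. Since $B_u\cap B_v\neq\emptyset$ holds if and only if $I_u^i\cap I_v^i\neq\emptyset$ for every $i$, we obtain $uv\in E(G)$ if and only if $uv\in E(G_i)$ for all $i$, that is, $E(G)=\bigcap_i E(G_i)$.

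\emph{Backward direction.} Given interval graphs $G_1,\dots,G_d$ on $V(G)$ with $E(G)=\bigcap_i E(G_i)$, fix closed interval models $(I_v^i)_{v\in V}$ for every $G_i$ and set $B_v=I_v^1\times\dots\times I_v^d$. The same coordinatewise characterization shows that for distinct $u,v$ we have $B_u\cap B_v\neq\emptyset$ if and only if $uv\in E(G_i)$ for all $i$, which holds if and only if $uv\in E(G)$. Thus $(B_v)_{v\in V}$ is a $d$-box representation and $\bx(G)\le d$.

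The only point requiring any care, and the nearest thing to an obstacle, is a convention issue rather than a mathematical one: we must ensure that the interval models may be taken with closed intervals, and that ``$\le d$'' is handled by padding. Both are standard, since every interval graph has a model by closed intervals.
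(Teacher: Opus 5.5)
Your proof is correct and takes the same route as the paper, which justifies the observation only by the remark that two boxes intersect if and only if they intersect in every coordinate, together with a citation to Roberts. Your padding step for representations with fewer than $d$ dimensions, and your note that interval graphs have closed-interval models (which holds by definition here), fill in details the paper leaves implicit.
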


Interval graphs in turn can be characterized by special vertex orderings.

\begin{observation}[\cite{Ramalingam88,Olariu91}]\label{obs:interval}
A graph $H=(V,E)$ is an interval graph if and only if there exists an \emph{interval ordering} of $V$, i.e.,
a linear order $<$ on $V$ with
\[
u<v<w \ \text{ and }\ uw\in E \quad\Longrightarrow\quad uv\in E\qquad\text{for all } u,v,w\in V .
\]
\end{observation}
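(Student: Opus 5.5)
The plan is to prove both directions directly, using a left-endpoint ordering for ($\Rightarrow$) and an explicit interval construction from the ordering for ($\Leftarrow$).

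\emph{($\Rightarrow$).} Fix an interval representation $v\mapsto I_v=[\ell_v,r_v]$ of $H$. Let $<$ order $V$ by left endpoints $\ell_v$, breaking ties arbitrarily but consistently. Take $u<v<w$ with $uw\in E$.
\begin{itemize}
\item Since $u<w$ we have $\ell_u\le\ell_w$.
\item Since $I_u\cap I_w\neq\emptyset$, this forces $r_u\ge\ell_w$.
\item Hence $\ell_u\le\ell_v\le\ell_w\le r_u$, so $\ell_v\in I_u\cap I_v$ and therefore $uv\in E$.
\end{itemize}
The only subtlety is ties among left endpoints. The chain of inequalities above uses only non-strict comparisons, so arbitrary tie-breaking does no harm.

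\emph{($\Leftarrow$).} Let $<$ be an interval ordering and let $\mathrm{pos}(v)\in[n]$ be the position of $v$ in it. For each $v$, let $\rho(v)$ be the $<$-largest vertex in $\{v\}\cup\{w>v: vw\in E\}$. Assign $I_v=[\mathrm{pos}(v),\mathrm{pos}(\rho(v))]$. We check that two distinct vertices $u<w$ satisfy $uw\in E$ if and only if $I_u\cap I_w\neq\emptyset$.
\begin{itemize}
\item \emph{Edge implies intersection.} If $uw\in E$, then $\rho(u)\ge w$ by the choice of $\rho(u)$. Hence $\mathrm{pos}(w)\in I_u\cap I_w$.
\item \emph{Intersection implies edge.} Suppose $I_u\cap I_w\neq\emptyset$. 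Since $I_w$ begins at $\mathrm{pos}(w)>\mathrm{pos}(u)$, we get $\mathrm{pos}(w)\le\mathrm{pos}(\rho(u))$, and in particular $\rho(u)\neq u$, so $u\rho(u)\in E$.
  \begin{itemize}
  \item If $w=\rho(u)$, then $uw\in E$ immediately.
  \item Otherwise $u<w<\rho(u)$ with $u\rho(u)\in E$, and the interval-ordering property gives $uw\in E$.
  \end{itemize}
\end{itemize}
Thus the intervals $I_v$ represent $H$, so $H$ is an interval graph.

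Neither direction involves a real obstacle. The one step needing care is the backward direction's choice of the right endpoint. Taking the \emph{farthest} later neighbour is what ensures every intersection is certified by a single edge $u\rho(u)$, to which the ordering property then applies.
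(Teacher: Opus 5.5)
Your proof is correct in both directions. The paper gives no proof of this observation: it cites Ramalingam--Rangan and Olariu and uses the characterization only as a black box in the proof of Theorem~\ref{thm:box}. Your argument makes this step self-contained with elementary reasoning.
\begin{itemize}
\item For ($\Rightarrow$), you order by left endpoints. The non-strict chain $\ell_u\le\ell_v\le\ell_w\le r_u$ correctly absorbs ties.
\item For ($\Leftarrow$), you build the explicit model $I_v=[\mathrm{pos}(v),\mathrm{pos}(\rho(v))]$. Choosing the farthest later neighbour reduces every intersection $I_u\cap I_w\neq\emptyset$ with $u<w$ to the single edge $u\rho(u)$, to which the ordering property applies.
\end{itemize}
Beyond self-containment, your route is constructive. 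The metatheorem can output the witness relations $\Lo_i,\Su_i$ for $\Phi_d$. Your construction turns each pair directly into an interval model with integer endpoints in $[n]$. Taking products then yields an actual $d$-box representation, which the citation alone leaves implicit. The citation route, for its part, keeps the applications section short by leaning on a well-established fact.
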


We now have all the ingredients at hand to express boxicity in \SNP{}.

 \begin{theorem}\label{thm:box}
For every $d\ge1$ there is an \SNP{} sentence $\Phi_d$ such that for every finite graph $G$,
\[
G\models\Phi_d\quad\Longleftrightarrow\quad\bx(G)\le d .
\]
Moreover, $\Phi_d$ has $2d$ existentially quantified relation symbols, all of arity $2$, three first-order variables, and length $\mathcal{O}(d)$; the map $d\mapsto\Phi_d$ is computable in time polynomial in $d$.
\end{theorem}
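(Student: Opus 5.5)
We combine Observation~\ref{obs:intersection} with Observation~\ref{obs:interval}. The sentence $\Phi_d$ will existentially quantify, for each coordinate $i\in[d]$, a binary relation $\Lo_i$ (the interval ordering) and a binary relation $F_i$ (the edge set of the interval supergraph $G_i$). The required statement is then: there are graphs $G_1,\dots,G_d$ on $V$, each admitting an interval ordering, whose edge intersection is exactly $E$. Every such condition is universal in at most three element variables, so it fits into \SNP.

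The clauses are as follows, for each $i\in[d]$:
\begin{itemize}
\item $\Lo_i$ is a strict total order, enforced by irreflexivity, transitivity and totality as in the clauses (O1)--(O3) for stack layouts. These use at most three variables $u,v,w$.
\item $F_i$ is a loopless symmetric relation: $\neg F_i(u,u)$ and $F_i(u,v)\to F_i(v,u)$.
\item $F_i$ contains $E$: $E(u,v)\to F_i(u,v)$.
\item The interval-ordering condition: $\Lo_i(u,v)\wedge\Lo_i(v,w)\wedge F_i(u,w)\to F_i(u,v)$.
\end{itemize}
Finally, a single clause $u\neq v\wedge F_1(u,v)\wedge\dots\wedge F_d(u,v)\to E(u,v)$ ensures the intersection is not larger than $E$. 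This gives $2d$ binary relations and three variables. There are $\mathcal{O}(d)$ constant-length clauses plus one clause with $d+2$ atoms, so the length is $\mathcal{O}(d)$ and the sentence can be written down in polynomial time.

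For correctness in the direction ($\Leftarrow$), take the interval graphs $G_i$ given by Observation~\ref{obs:intersection}. Interpret $F_i$ as the symmetric edge set of $G_i$ and $\Lo_i$ as an interval ordering from Observation~\ref{obs:interval}; each clause then holds by construction. We may assume $E\subseteq E(G_i)$, since $E$ is the intersection of these edge sets. For ($\Rightarrow$), each $\Lo_i$ is a linear order satisfying the interval-ordering implication for the graph $G_i=(V,F_i)$, so every $G_i$ is an interval graph. The two inclusion clauses give $E=\bigcap_i F_i$ on distinct pairs, and Observation~\ref{obs:intersection} yields $\bx(G)\le d$.

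The main obstacle is matching the exact form of Observation~\ref{obs:interval}, which is stated for unordered edges. Since $F_i$ is symmetric, the implication is needed only with $u<v<w$ read in that order, and this is exactly what the clause encodes. The reversed reading is covered by the companion clause $\Lo_i(u,v)\wedge\Lo_i(v,w)\wedge F_i(u,w)\to F_i(v,w)$, which we add if the characterization requires it; it is still three variables and constant length. Beyond this, the remaining checks are routine.
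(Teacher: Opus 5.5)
Your construction is the paper's: the same $2d$ binary relations (your $F_i$ is the paper's $\Su_i$), the same clauses (O1)--(O3), (S1)--(S3) and (I1), and the same correctness argument via Observations~\ref{obs:intersection} and~\ref{obs:interval}. The extra guard $u\neq v$ is harmless, so the proof is correct as long as you do \emph{not} follow your closing hedge. The one-sided clause is exactly the condition in Observation~\ref{obs:interval}, and either one-sided version alone suffices because reversing the order swaps them. Adding the companion clause would demand an umbrella-free ordering, which characterizes \emph{proper} interval graphs. Then $\Phi_1$ would force $\Su_1=E$ to be a proper interval graph and would reject the claw $K_{1,3}$, even though $\bx(K_{1,3})=1$.
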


\begin{proof}
\noindent\emph{The \SNP sentence.}
We let $\Phi_d$ existentially quantify over $2d$ binary relations $\Lo_1,\dots,\Lo_d$ and $\Su_1,\dots,\Su_d$. The intended interpretation is that $\Su_i$ is the edge relation of the $i$-th interval graph~$G_i$, and $\Lo_i$ is a linear order on $V$ that is an interval ordering of $G_i$ (for instance, the order of left endpoints in the interval representation of $G_i$). For each $i\in\{1,\dots,d\}$ we impose the following conditions.

Our first set of conditions ensures that each $\Lo_i$ is a strict linear order:
\[
\mathrm{(O1)}\;\; \neg\Lo_i(u,u)\;\;\quad \mathrm{(O2)}\;\;  \Lo_i(u,v)\wedge \Lo_i(v,w)\;\to\;\Lo_i(u,w)\;\; \quad \mathrm{(O3)}\;\;  \Lo_i(u,v)\;\vee\;\Lo_i(v,u)\;\vee\;u=v.
\]
Next, we define each $\Su_i$ to be a graph that contains $G$, and let all $\Su_i$ intersect in $G$.
\begin{enumerate}[label=(S\arabic*),leftmargin=3.2em]
\item\label{S1_b} $\Su_i(u,v)\to\Su_i(v,u)$  \quad and \quad $\neg\Su_i(u,u)$
\item\label{S2_b} $E(u,v)\;\to\;\Su_i(u,v)$
\item\label{S3_b} $\Su_1(u,v)\wedge\dots\wedge\Su_d(u,v)\;\to\;E(u,v)$
\end{enumerate}
Lastly, we enforce that each $\Lo_i$ is an interval ordering of $\Su_i$.
\begin{enumerate}[label=(I\arabic*),leftmargin=3.2em]
\item\label{I1} $\Lo_i(u,v)\ \wedge\ \Lo_i(v,w)\ \wedge\ \Su_i(u,w)\;\to\;\Su_i(u,v)$.
\end{enumerate}
Finally, $\Phi_d=\exists\Lo_1\cdots\exists\Lo_d\,\exists\Su_1\cdots\exists\Su_d\;\forall\bar z\;\psi_d$, where $\forall\bar z\,\psi_d$ is the conjunction of the universal closures of (O1)--\ref{I1} for all $i$.
For each $i$, the conditions (O1)--\ref{S2_b} and \ref{I1} have constant length and \ref{S3_b} consists of $d+1$ atoms; hence $\Phi_d$ has length $\BigO{d}$ and three first-order variables, and it can be written down in time polynomial in $d$.

\smallskip
\noindent\emph{Correctness.} When given graphs $G_i$, $1 \leq i \leq d$ with a common vertex set $V$, we will write $\bigcap_i G_i$ to denote $(V,\,E(G_1)\cap\dots\cap E(G_d))$.\\
\noindent($\Leftarrow$) Let $\bx(G)\le d$. By Observation~\ref{obs:intersection} there are interval graphs $G_1,\dots,G_d$ on $V$ with $G=\bigcap_iG_i$, and by Observation~\ref{obs:interval} each $G_i$  has an interval ordering $<_i$. Interpret $\Lo_i$ as $<_i$ and $\Su_i$ as $E(G_i)$. Then (O1)--(O3) hold because $<_i$ is a strict linear order; \ref{S1_b} because $G_i$ is a graph; \ref{S2_b} because $E\subseteq E(G_i)$; \ref{S3_b} because $\bigcap_iE(G_i)=E$; and \ref{I1} says that $<_i$ is an interval ordering of $G_i$: whenever $u<_iv<_iw$ and $uw\in E(G_i)$, also $uv\in E(G_i)$. Hence $G\models\Phi_d$.

\smallskip
\noindent($\Rightarrow$) Let $(V,E,\Lo_1,\dots,\Lo_d,\Su_1,\dots,\Su_d)\models\forall\bar z\,\psi_d$. Fix some $i$. By (O1)--(O3), $\Lo_i$ is a strict linear order on $V$, and by \ref{S1_b}, $G_i:=(V,\Su_i)$ is a graph. By \ref{I1}, $\Lo_i$ is an interval ordering of $G_i$, so $G_i$ is an interval graph by Observation~\ref{obs:interval}. By \ref{S2_b}, $E\subseteq\Su_i$ for every $i$, and by \ref{S3_b}, $\bigcap_i\Su_i\subseteq E$ on pairs of distinct vertices; hence, since $E$ is irreflexive, $E=\bigcap_i\Su_i$. Thus $G=G_1\cap\dots\cap G_d$ is an intersection of $d$ interval graphs, and $\bx(G)\le d$ by Observation~\ref{obs:intersection}.
\end{proof}

\begin{corollary}
Given a graph $G$ and $d\in\mathbb{N}$, deciding $\bx(G)\le d$ is fixed-parameter tractable with respect to $\td(G)$. Moreover, $\bx(G)$ can be computed in time $f(\td(G))\cdot|G|$ for a computable function $f$.
\end{corollary}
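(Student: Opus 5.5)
The plan is to instantiate \Cref{lem:generic} with $\pi=\bx$, exactly as in the corollaries for stack, queue, track number and twin-width. \Cref{lem:generic} has three hypotheses. First, we need sentences $\Phi_d$ of bounded arity that are computable from $d$, have length bounded by a computable non-decreasing function $g_1(d)$, and satisfy $G\models\Phi_d\iff\bx(G)\le d$. Second, we need a computable non-decreasing $g_2$ with $\bx(G)\le g_2(\td(G))$. Third, once both are in place, \Cref{lem:generic} directly gives the decision algorithm in time $f'(\td(G))\cdot\Size{G}$, and computing $\bx(G)$ by iterating over $d$ up to $g_2(\td(G))$ stays within the same bound.

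The first hypothesis is supplied by \Cref{thm:box}. It gives $\Phi_d$ with arity $2$ and length $\mathcal{O}(d)$, so we may take $g_1(d)=cd$ for a suitable constant $c$. The map $d\mapsto\Phi_d$ is computable in polynomial time.

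For the second hypothesis, I would use the known bound $\bx(G)\le\tw(G)+2$ (Chandran and Sivadasan) together with $\tw(G)\le\td(G)-1$. This yields $\bx(G)\le\td(G)+1$, so we set $g_2(s)=s+1$, which is computable and non-decreasing.

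The only step that needs care is this structural bound, since it is not established earlier in the paper and must be cited. If one prefers to avoid relying on the treewidth bound, a cruder fallback is available: boxicity is at most $\lfloor n/2\rfloor$, but that bound depends on $n$ and is therefore useless here. A usable alternative is $\bx(G)\le\chi$-type bounds via the degeneracy (Adiga et al.), since bounded treedepth implies bounded degeneracy. Any computable bound in terms of treedepth suffices, and the treewidth bound is the cleanest choice. With both hypotheses verified, \Cref{lem:generic} yields the corollary immediately.
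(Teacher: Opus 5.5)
Your proposal is correct and is essentially the paper's proof: you instantiate \Cref{lem:generic} with the sentences $\Phi_d$ from \Cref{thm:box} and the bound $\bx(G)\le\tw(G)+2\le\td(G)+1$ of Chandran and Sivadasan. The degeneracy-based fallback you mention would not work, since bounded degeneracy does not bound boxicity (for instance, 1-subdivisions of cliques are $2$-degenerate but have unbounded boxicity); your argument never uses it, though, so this does not affect correctness.
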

\begin{proof}
By Theorem~\ref{thm:box}, $\Phi_d$ has length $\mathcal{O}(d)$ and is computable from $d$; moreover, it is known that $\bx(G)\le\mathrm{tw}(G)+2$, hence $\bx(G)\le\td(G)+1$~\cite{ChandranS07}. The statement then immediately follows from Lemma~\ref{lem:generic}.
\end{proof}

\section{Barriers for Generalizing the Metatheorem}
\label{sec:extensions}

In this section, we exclude three natural generalizations of our result under the assumption $\P \neq \NP$.

\subsection{Pathwidth}

Recall that Courcelle's theorem~\cite{Courcelle90} in combination with Bodlaender's theorem~\cite{BodlaenderALinearTime} yields that \MSOtwo{} model checking is possible in time $f(\Size{\varphi},\tw(G)) \cdot |G|$, where $\tw(G)$ is the treewidth of~$G$.\footnote{Interestingly, on graph classes of bounded treedepth, \MSOtwo{} actually has the same expressive power as first-order logic~\cite{elberfeldWhereFirstorderMonadic2016}.}
It is natural to ask if we can solve \snpmodelchecking{} in time $f(\Size{\varphi},\tw(\RR)) \cdot \Size{\RR}^{\BigO{1}}$ too.

As shown in \cref{sec:applications}, it is possible to express the problem of deciding whether a graph has queue number at most one in an \SNP{} sentence of constant length.
However, detecting whether a graph has queue number at most one is \NP-hard, even if the input graphs are restricted to have constant pathwidth~\cite{bekosRecognitionOuterplanarGraphs2026}.
Therefore, it is not possible to obtain a similar theorem when replacing the treedepth of the primal graph with the pathwidth of the primal graph, unless $\P = \NP$.
Another example would be the problem of intervalizing  $4$-colored graphs, which is \NP-hard on graphs with pathwidth at most $2$~\cite{DBLP:journals/dm/AlvarezDS01}.
Although we will not explicitly show this here, it is also easy to express the problem as a fixed \SNP{} sentence on $4$-colored graphs.

\subsection{More Expressive Formulas}

Another reasonable extension, also motivated by Courcelle's theorem, would be to allow formulas of existential second-order logic (\ESO) (see, e.g.~\cite[Definition 7.3]{LibkinElementsOfFiniteModelTheory}), instead of just {\SNP} formulas.
{\ESO} formulas are of the form $\exists R_1,\dots, \exists R_r \psi$, where $\psi$ is a \emph{first-order formula}.
Therefore, the difference between \SNP{} and \ESO{} formulas is that \SNP{} restricts the first-order part of the formula to universal formulas, whereas \ESO{} allows arbitrary first-order formulas.

Again, we show that this extension is unlikely to be possible.
This is a straightforward consequence of Fagin's theorem~\cite{fagin1974generalized}, which essentially states $\ESO = \NP$.
We combine this theorem with an elementary trick to encode a list of integer numbers as a star forest.\footnote{
Related encodings of numerical item sizes by the sizes of stars also appear in the scheduling literature~\cite{AfratiPP88,ZinderBL24,BampisGT96}.}

\begin{theorem}
	\label{thm:eso_not_tractable_treedepth}
	There is a fixed \ESO{} sentence $\phi$ such that deciding whether $G \models \phi$ is \NP-complete, even when the input graph $G$ has treedepth at most $2$.
\end{theorem}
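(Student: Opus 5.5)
We reduce from a strongly \NP-complete number problem, say \problemfont{3-Partition}: given $3m$ positive integers $a_1,\dots,a_{3m}$ with target $B$ (each $B/4<a_i<B/2$ and $\sum a_i=mB$), decide whether they split into $m$ triples, each summing to $B$. Strong \NP-hardness lets us write the numbers in unary, so the following is a polynomial-time reduction. We encode the instance as a star forest $G$: one star with $a_i$ leaves for every item $i$, and $m$ further ``bin'' stars with exactly $B$ leaves each. (If this is awkward, we can use distinguishable center degrees, or attach a pendant path of length two to bin centers, and keep treedepth small; the simplest option is to tell item stars from bin stars by a unary relation encoded via graph structure, for instance bin centers get one extra leaf that is itself adjacent to a leaf. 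That creates a path of length three, which still has treedepth at most $3$, so we will prefer an encoding that keeps depth $2$, discussed below.) Every star forest has treedepth at most $2$: the root is the center and the leaves lie at depth $2$.

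The fixed \ESO{} sentence $\phi$ then guesses a binary relation $F$ that is a bijection between the leaves of item stars and the leaves of bin stars. It also guesses a relation $A$ assigning each item center to a bin center. The first-order part checks three things: $F$ is a perfect matching between item leaves and bin leaves; $A$ is a function from item centers to bin centers; and whenever $F(\ell,\ell')$ holds, the center of $\ell$ is $A$-assigned to the center of $\ell'$. Since each bin has exactly $B$ leaves and the matching is perfect, the items assigned to a bin have leaf counts summing to exactly $B$. The size bounds $B/4<a_i<B/2$ then force triples automatically, so we do not even need to count them. Membership in \NP{} holds either by Fagin's theorem or directly, and $\phi$ is fixed and independent of the instance. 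Correctness in both directions is routine: a valid partition yields $F$ and $A$ by matching leaves bin by bin, and conversely $F$ and $A$ read off a partition.

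The main obstacle is distinguishing bin stars from item stars while keeping treedepth $2$ and the graph signature. My first attempt is to avoid the distinction entirely. We make every item size odd and every bin marker even by scaling: replace $a_i$ by $2a_i+1$ and give each bin $2B+3$ leaves. But parity is not first-order definable, so this fails. Instead, we exploit that \ESO{} can guess the distinction. We guess a unary relation $Q$ marking the bin centers and require the following: every star is entirely bin or entirely item, and the matching constraints above hold. Any choice of $Q$ yielding a consistent assignment must respect the leaf-count arithmetic. We then choose the numbers so that bins are exactly the stars of size at least some threshold, say bins have $B$ leaves and items fewer than $B/2$. Now the bins are exactly the stars with more leaves than any item. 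To rule out a star being misclassified, we add the constraint that every $Q$-star is matched leaf-bijectively with a fixed... Rather than fuss, the cleanest fix is to require that $Q$-stars are pairwise in leaf-bijection via another guessed relation. Combined with the perfect matching, a counting argument then forces $Q$ to be exactly the $m$ largest stars. Verifying this counting argument is the step I would check carefully.
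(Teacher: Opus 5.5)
Your final construction (guessed marker $Q$, a relation forcing all $Q$-stars to have equally many leaves, the perfect leaf matching $F$, and the assignment $A$) works, and the counting step you flag does go through. Let $\mathcal{Q}$ be the set of $Q$-stars, all with $L$ leaves.

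\begin{itemize}
\item Since $A$ is a function and $F$ is a perfect matching compatible with $A$, for each $b\in\mathcal{Q}$ the relation $F$ restricts to a bijection between the leaves of $b$ and the leaves of the non-$Q$ stars assigned to $b$.
\item Hence the non-$Q$ stars split into $|\mathcal{Q}|$ groups of exactly $L$ leaves each. Counting all leaves gives $2|\mathcal{Q}|L=2mB$.
\item Suppose some bin star lies outside $\mathcal{Q}$. Its group forces $L\ge B$, so $L=B$, because no star has more leaves. Then $\mathcal{Q}$ consists of bins only, and $|\mathcal{Q}|=m$ makes it contain every bin, a contradiction.
\item So every bin is in $\mathcal{Q}$ and $L=B$. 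Since items have fewer than $B/2$ leaves, $\mathcal{Q}$ is exactly the set of bins. The bounds $B/4<a_i<B/2$ then force each group to be a triple.
\end{itemize}

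Two small loose ends remain. Recognizing centers as vertices of degree at least $2$ needs $a_i\ge 2$; this fails only in the trivial case $B=3$, which you can preprocess away. The parity detour should simply be dropped: parity is in fact \ESO-definable by guessing a pairing, and $2B+3$ is odd anyway.

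The paper's proof is much shorter because it uses the hard direction of Fagin's theorem as a black box. It considers the property ``disjoint union of $3n$ stars that can be grouped into $n$ triples with equal total vertex count''. This property is isomorphism-closed and in \NP, so some fixed \ESO{} sentence defines it. The reduction then just maps each $I_i$ to a star on $I_i$ vertices. This needs no bin gadgets, no center/leaf distinction, and no argument pinning down a guessed marker.

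Your route yields an explicit, small, readable sentence and needs only the trivial direction of Fagin's theorem (for \NP{} membership). The price is the gadget design and the counting argument above. The paper's route buys brevity: once Fagin's theorem is available, any isomorphism-invariant \NP-hard property of star forests suffices.
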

\begin{proof}
	We prove this by a reduction from \problemfont{3-Partition}, which is strongly \NP-hard~\cite{Garey3Partition,GareyJohnsonBook}.
	Let $\mathcal{P}$ be the following graph property: a graph $G$ is in $\mathcal{P}$ if and only if it is a disjoint union of $3n$ stars, for some $n \geq 1$, and the stars can be partitioned into exactly $n$ sets of size $3$ each, such that the number of vertices of the stars contained in each set of the partition is equal.

	It is easy to see that $\mathcal{P}$ is a property closed under isomorphism, and in \NP.
	By Fagin's theorem~\cite{fagin1974generalized} there exists a fixed \ESO{} sentence $\varphi$ such that $G \in \mathcal{P}$ if and only if $G \models \varphi$.

	Let $I_1,\dots,I_{3n} \geq 1$ be the given integers of the input instance of \problemfont{3-Partition}.
	Create a graph~$G$ by, for each $I_i$, adding a star with exactly $I_i$ vertices.
	Then, it is easy to see that the input instance is a yes-instance of \problemfont{3-Partition} if and only if $G \in \mathcal{P}$, and hence if and only if $G \models \varphi$.
	The reduction runs in polynomial-time since we may assume that each integer in the input is given in unary, and since $G$ is a forest of stars it has treedepth at most $2$.

	The \NP-membership of the problem is obvious.
\end{proof}

\subsection{MaxSNP}

Finally, we consider extending the metatheorem to the natural optimization
variant of \SNP{}, introduced by Papadimitriou and Yannakakis~\cite{PapadimitriouY91MaxSNP}.
We use the notation $\textsf{MaxSNP}_0$ from~\cite[Definition~13.5]{CCbook}, where $\textsf{MaxSNP}$ is defined by making $\textsf{MaxSNP}_0$ closed under $L$-reductions.
In \SNP{} we ask for a solution such that each $k$-tuple of elements of $\RR$ satisfies the inner formula $\psi$.
In $\textsf{MaxSNP}_0$, we instead want to maximize the number of $k$-tuples that fulfill the inner formula.
More formally, let $\psi(R_1,\dots,R_r,x_1,\dots,x_k)$ be a quantifier-free formula.
A problem in $\textsf{MaxSNP}_0$ takes a structure $\RR = (V,E_1,\dots,E_b)$ and an integer $p$ as input, and asks if there are relations $R_1,\dots,R_r$ such that there exist at least $p$ $k$-tuples $\bar v$ such that $\psi(R_1,\dots,R_r,\bar v)$ holds.

When using the natural parameter $p$, every fixed problem in this
class is fixed-parameter tractable~\cite{CaiC97MaxSNPTFPT} and admits a polynomial
kernel~\cite{Kratsch12MaxSNPKernel}.
Here, we instead consider treedepth as the parameter.

\begin{theorem}
	\label{thm:max_snp_not_tractable_treedepth}
	There is a fixed quantifier-free formula $\psi(A,x_1,\dots,x_{10})$, where $A$ is a binary relation variable, for which the following decision problem is \NP-complete:
	Given a relational structure $\RR$ and an integer $p$, determine whether there is a relation $A \subseteq V(\RR)^2$ such that
	\[
		\bigl|\{\bar v \in V(\RR)^{10} \mid \RR \models \psi(A, \bar v)\}\bigr| \geq p.
	\]
	This holds even when $\td(\RR) \leq 2$.
\end{theorem}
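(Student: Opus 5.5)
We reduce from \problemfont{Max Cut}, which is \NP-hard in general, and encode an arbitrary input graph into a structure of treedepth at most $2$ by moving all of its adjacency information into the relation $A$ the solver guesses. The ten universally counted variables are then used to force $A$ to encode the intended graph through the counting objective.

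\emph{Construction of $\RR$.} Given a graph $H$ with vertices $v_1,\dots,v_n$ and edges $e_1,\dots,e_m$, we let $\RR$ consist of disjoint stars.
\begin{itemize}
\item A \emph{vertex star} for each $v_i$: a centre marked by a unary relation $\mathrm{Vx}$.
\item An \emph{edge gadget} for each $e_j = v_av_b$: a centre marked by $\mathrm{Ed}$, with a distinctive number of leaves determined by $a$ and by $b$.
\end{itemize}
Concretely, each vertex centre $c_i$ gets exactly $i$ leaves. Each edge centre $d_j$ gets two groups of leaves, marked by unary relations $\mathrm{L}$ and $\mathrm{R}$, of sizes $a$ and $b$. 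A disjoint union of stars has treedepth at most $2$.

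\emph{Forcing the correspondence.} The idea is that $A$ is intended to contain, for each endpoint, a ``pointer'' pair $(d_j, c_a)$. Matching sizes $a=i$ then has to be certified by a bijection between leaf groups, again stored in $A$, and this bijection is checked locally with boundedly many variables. We can also add the side that $c_i$ is on in the cut as a loop $A(c_i,c_i)$.

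\emph{Scoring.} The formula $\psi$ counts tuples of the following kinds.
\begin{itemize}
\item Heavily weighted consistency tuples: for example, every leaf of $d_j$ in $\mathrm{L}$ is $A$-matched to a leaf of the pointed-to star injectively in both directions. Weight is obtained by padding with free variables over a large pool of dummy isolated vertices, so that a single consistent configuration dominates.
\item Lightly weighted cut tuples $(d_j, c, c')$: $A$ points $d_j$ to $c$ via $\mathrm{L}$ and to $c'$ via $\mathrm{R}$, and exactly one of $A(c,c)$, $A(c',c')$ holds.
\end{itemize}
With ten variables we have room for the pointer, leaf-bijection checks (two leaves and their images), and padding. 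The threshold $p$ is set to the full consistency score plus $\ell$. Then $H$ has a cut of size $\ell$ if and only if the maximum count is at least $p$.

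\NP-membership is clear: guess $A$ and count the satisfying $10$-tuples in polynomial time.

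\emph{Main obstacle.} The delicate step is ensuring that any $A$ violating consistency loses strictly more in the heavy tuples than it can gain in cut tuples. Since cut tuples number at most $m$, it suffices that each local violation costs more than $m$ padded tuples. We would achieve this by making the consistency clauses pass only when the padding variables range over a dummy pool of size larger than $m$. We would verify carefully that bijection violations, such as two leaves mapped to one, are detectable with a bounded number of variables.

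If this counting becomes too intricate, we would first try the simpler route of encoding \problemfont{3-Partition} via star sizes, as in \Cref{thm:eso_not_tractable_treedepth}, with $A$ grouping stars.
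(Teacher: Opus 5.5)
\textbf{There is a genuine gap}, at exactly the step you flag as the main obstacle. The trouble is conceptual, not bookkeeping.

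\textbf{Why local penalties cannot work.} A quantifier-free $\psi$, counted over tuples, can only penalize violations of \emph{universal} properties, since each violation must be witnessed by a single tuple. The properties that pin your pointers down are partly \emph{existential}:
\begin{itemize}
\item every $\mathrm{L}$-leaf of $d_j$ has an $A$-partner;
\item every leaf of the pointed-to star has a partner. This one is crucial for forcing $i=a$.
\end{itemize}
No tuple witnesses ``this leaf has no partner'', so your rule ``each local violation costs more than $m$ padded tuples'' never fires for these. More fundamentally, any family of such clauses survives deleting a star leaf and restricting $A$ (the hereditary nature of \SNP{}). Hence no such clauses can ever force $|\text{group}| = |\text{star}|$.

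\textbf{A concrete failure.} Let $A$ consist of:
\begin{itemize}
\item the loop $(c_1,c_1)$;
\item an $\mathrm{L}$-pointer from every $d_j$ to $c_1$;
\item an $\mathrm{R}$-pointer from every $d_j$ to $c_2$;
\item no leaf pairs at all.
\end{itemize}
Every injectivity, typing and compatibility clause then holds vacuously, so the full consistency score is reached. All $m$ cut tuples are also satisfied, so every instance with $\ell \le m$ becomes a yes-instance.

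\textbf{Rewarding matches is not enough.} Heavily rewarding matched leaf pairs does not fix this by itself. Pointing $d_j$ to any star with $i \ge a$ leaves still matches all $a$ gadget leaves. You would additionally need:
\begin{itemize}
\item a heavy term that penalizes the leaves of the pointed-to star, for example by rewarding typed triples $(d,c,y)$ with $E(c,y)$ and $\neg A(d,c)$, so the net reward $2\min(a,i)-i$ peaks exactly at $i=a$;
\item a three-level weight hierarchy (universal violations $\gg$ existence rewards $\gg m$) that fits into ten variables.
\end{itemize}
None of this is in the proposal.

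\textbf{How the paper handles it.} The paper sidesteps the issue by choosing a source problem whose only existential requirement is forced by the objective itself. It reduces from \problemfont{Exact Bin Packing}: stars of sizes $I_i$, bin vertices $b_1,\dots,b_m$, and $A$ assigning star vertices to bins. Heavily padded local clauses enforce only universal properties: typing, each star assigned as a whole, and at most one bin per vertex. The light term counts tuples with $A(x,c)$, $A(y,d)$ and $c\neq d$. This count equals $N^2-\sum_i \ell_i^2$. By Cauchy--Schwarz it is at most $(1-1/m)N^2 \le (1-1/m)T^2$. Reaching the threshold therefore forces both $N=T$ (every vertex assigned) and $\ell_i=T/m$ (balanced bins) at once.

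Your \problemfont{3-Partition} fallback points in this direction. It would still need a counting argument of exactly this kind to force every star into some group.
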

\begin{proof}
	It is clear that the described problem is in \NP{} by guessing the solution relation $A$ and verifying if the formula holds for each $10$-tuple of the structure.

	For the \NP-hardness, we reduce from \problemfont{Exact Bin Packing}, which is strongly \NP-hard~\cite{DBLP:conf/iwpec/HeegerHS23,Garey3Partition}.
	In this problem, the input consists of $n$ positive integers $I_1,\dots,I_n$, and a positive integer $m \geq 2$, such that $\sum_{i \in [n]} I_i$ is divisible by $m$.
	The question is if it is possible to assign each number to one of $m$ bins, such that the sum of the numbers in each bin is exactly $(\sum_{i \in [n]} I_i)/m$.
	We create a graph $G$ by adding, for each integer $I_i$ of the input, a star on $I_i$ vertices.
	Let $S$ be a unary relation containing all vertices of these stars.
	Add two additional isolated vertices $q_1,q_2$, and two further unary relations $Q_1$, $Q_2$, such that $Q_1 = \{q_1\}$ and $Q_2 = \{q_2\}$.
	Finally, add $m$ further isolated vertices $b_1,\dots,b_m$, and another unary relation $B = \{b_1,\dots,b_m\}$.

	The integers of the input are encoded by stars, and the vertices $b_1,\dots,b_m$ represent the bins these integers can be assigned to.
	The vertices $q_1,q_2$ exist for technical reasons that become clear later.
	Let $\RR$ be the resulting relational structure, and observe that it is a labeled graph with treedepth at most $2$.

	For our formula $\psi$, we will use a binary relation $A$, the intended meaning of $A(x,c)$ is that vertex~$x$ (of a star) is assigned to bin vertex $c$.
	Our first goal will be ensuring that each star is either assigned to no bin, or fully to a bin.
	For this purpose, we first introduce a formula $\mathrm{TypeCheck}(A,x,c)$ which simply verifies that, when $A(x,c)$ holds, then $x$ is a star vertex and $c$ a bin vertex that $x$ is assigned to.
	\begin{align*}
		\mathrm{TypeCheck}(A,x,c) \coloneq (A(x,c) \rightarrow S(x) \land B(c)).
	\end{align*}

	Now, we define the formula $\mathrm{Assigned}(A,x,y,c,d)$ that will ensure a star is either fully, or not at all assigned to a specific bin vertex.
	\begin{align*}
		\mathrm{Assigned}(A,x,y,c,d) \coloneq (A(x,c) \land E(x,y) \rightarrow A(y,c)) \land (A(x,c) \land A(x,d) \rightarrow c = d).
	\end{align*}

	The first conjunct of the formula will later ensure that if a vertex $x$ of a star is assigned to a bin vertex $c$, then all other vertices of the same star are also assigned to $c$.
	Note that this requires some (trivial) transitive argument based on the fact that a star is a connected graph.
	The second conjunct ensures that a vertex cannot be assigned to $c$ and $d$ at the same time when $c \neq d$.

	Now, if each $4$-tuple $(x,y,c,d)$ of elements fulfills $\mathrm{TypeCheck}(A,x,c) \land \mathrm{Assigned}(A,x,y,c,d)$, then each star is either fully assigned to a single bin by $A$, or not assigned to any bin.
	By using a simple trick that makes it catastrophic for the solution size if a $4$-tuple does not fulfill this formula, we will ensure it indeed holds for each $4$-tuple.
	Hence, the remaining task is encoding that the assignments to each bin all sum to the same integer.

	The key is creating a formula which prefers even-spread assignments.
	Concretely, consider
	\begin{align*}
		\mathrm{Even}(A,x,y,c,d) \coloneq (A(x,c) \land A(y,d) \land c \neq d).
	\end{align*}

	Let us give some intuition about the purpose of the formula $\mathrm{Even}(A,x,y,c,d)$.
	Consider some solution $A$, and assume that $\mathrm{TypeCheck}(A,x,c) \land \mathrm{Assigned}(A,x,y,c,d)$ holds for each $4$-tuple of elements $(x,y,c,d)$ under the solution $A$.
	For each bin $b_i$, let $\ell_i$ be the number of vertices assigned to $b_i$, that is, the number of elements $x$ for which $A(x,b_i)$ holds.
	Moreover, let $N = \sum_{i \in [m]} \ell_i$ be the total number of assigned vertices.
	Then, the number of $4$-tuples $(x,y,c,d)$ for which $\mathrm{Even}$ holds is exactly $N^2 - \sum_{i \in [m]} \ell_i^2$.
	This is because $\mathrm{Even}(A,x,y,c,d)$ holds exactly when $x$ is assigned to $c$, $y$ is assigned to $d$, and $c \neq d$.
	For each assigned vertex, there is exactly one bin $c$ it is assigned to.
	Hence, each pair of the $N^2$ pairs of assigned vertices fulfills the first two conditions, and then we only need to remove those pairs again which are assigned to the same bin.
	The number of $4$-tuples which fulfill the formula is maximized when all vertices are assigned and each bin contains exactly the same number of vertices.

	Now, we have all ingredients necessary to describe the formula.
	Formally, the fixed formula $\psi$ reads
	\begin{align*}
		\psi(A,z,x,y,c,d,a_1,\dots,a_5) \coloneq &[Q_1(z) \land \mathrm{TypeCheck}(A,x,c) \land \mathrm{Assigned}(A,x,y,c,d)] \\
		& \lor [Q_2(z) \land \mathrm{Even}(A,x,y,c,d) \land \bigwedge_{i \in [5]} a_i = z].
	\end{align*}
	Set $T = \sum_{i \in [n]} I_i$ and $D = T/m$.
	The integer $p$ is set to $|V(\RR)|^9 + T^2 - m D^2$.
	We will show that $I_1,\dots,I_{n},m$ is a yes-instance of \problemfont{Exact Bin Packing} if and only if there is a relation $A$ such that at least $p$ $10$-tuples $\bar v$ fulfill $\psi(A,\bar v)$.

	\begin{claim}
		If $(I_1,\dots,I_{n},m)$ is a yes-instance of \problemfont{Exact Bin Packing}, then there is a relation $A$ such that at least $p$ $10$-tuples $\bar v$ fulfill $\psi(A,\bar v)$.
	\end{claim}
	\begin{claimproof}
		Let $P_1,\dots,P_m$ be the corresponding partition.
		We add $(v,b_i)$ to $A$ if and only if the integer corresponding to the star containing $v$ is part of $P_i$.

		It is easy to check that this means each $10$-tuple $(z,x,y,c,d,a_1,\dots,a_5)$ with $z = q_1$ fulfills $\psi$.
		Indeed, then $Q_1(z)$ is true, and we have clearly chosen $A$ in such a way that $\mathrm{TypeCheck}(A,x,c)$ and $\mathrm{Assigned}(A,x,y,c,d)$ hold.
		This yields exactly $|V(\RR)|^9$ tuples for which $\psi$ holds.

		Now, we argue that there are $T^2 - mD^2$ further tuples for which $\psi$ holds.
		Concretely, fix some $4$-tuple $(x,y,c,d)$, where $x,y$ are vertices belonging to stars, vertex $x$ is assigned to $c$ and vertex $y$ is assigned to $d$ by relation $A$, and $c \neq d$.
		The number of such tuples is in fact exactly $T^2 - mD^2$ by arguments provided earlier in this proof, and we have that $\mathrm{Even}(A,x,y,c,d)$ holds.
		Then, the tuple $(q_2,x,y,c,d,q_2,q_2,q_2,q_2,q_2)$ fulfills $\psi$, because it fulfills the second disjunct of $\psi$.
	\end{claimproof}

	\begin{claim}
		If there is a relation $A$ such that at least $p$ tuples $\bar v$ fulfill $\psi(A,\bar v)$, then $(I_1,\dots,I_{n},m)$ is a yes-instance of \problemfont{Exact Bin Packing}.
	\end{claim}
	\begin{claimproof}
		Observe that the maximum number of tuples that can fulfill $\psi$ is bounded by $|V(\RR)|^9 + |V(\RR)|^4$, because the first disjunct can only hold if the tuple starts with $q_1$, and the second disjunct can only hold if the tuple starts with $q_2$, and the last five elements of the tuple are also $q_2$.

		We show that for each $4$-tuple $(x,y,c,d) \in V(\RR)^4$, $\mathrm{TypeCheck}(A,x,c)$ and $\mathrm{Assigned}(A,x,y,c,d)$ hold.
		Towards a contradiction, assume that this is not the case.

		If, for a fixed tuple $(x,y,c,d)$ we have that $\mathrm{TypeCheck}(A,x,c)$ or $\mathrm{Assigned}(A,x,y,c,d)$ do not hold, the number of tuples for which $\psi$ holds immediately decreases by $|V(\RR)|^5$, because the first disjunct of $\psi$ leaves $a_1,\dots,a_5$ free.
		So, the total number of tuples for which $\psi$ holds is at most $|V(\RR)|^9 - |V(\RR)|^5 + |V(\RR)|^4$.
		However, $|V(\RR)|^9 - |V(\RR)|^5 + |V(\RR)|^4 < |V(\RR)|^9 < p$, a contradiction.

		Hence, for each $4$-tuple $(x,y,c,d)$ we have that $\mathrm{TypeCheck}(A,x,c)$ and $\mathrm{Assigned}(A,x,y,c,d)$ hold.
		As argued before, this ensures that either all vertices of a star are assigned to the same unique bin vertex, or the vertices of the star are not assigned.

		We argue that then, the relation $A$ gives us a solution to the input instance.
		This is because there must be at least $T^2 - mD^2$ further tuples for which $\psi$ holds, which must then be tuples of the form $(q_2,x,y,c,d,q_2,q_2,q_2,q_2,q_2)$.
		As argued earlier, the number of such tuples which fulfill $\psi$ is exactly $N^2 - \sum_{i \in [m]} \ell_i^2$, where $\ell_i$ is the number of vertices assigned to $b_i$, and $N = \sum_i \ell_i$.

		By Cauchy--Schwarz, we have that $N^2/m \leq \sum_{i \in [m]} \ell_i^2$.
		Therefore,
		\begin{align*}
			N^2 - \sum_{i \in [m]} \ell_i^2 \leq (1- 1/m) N^2 \leq (1- 1/m) T^2 = T^2 - m D^2.
		\end{align*}
		Since $p = |V(\RR)|^9 + T^2 -mD^2$, equality must hold throughout.
		This implies $T = N$, and thus all vertices of stars must be assigned.
		Moreover, we must have $\sum_{i \in [m]} \ell_i^2 = m D^2$.
		From this and the fact that $\sum_{i \in [m]} \ell_i = T = m D$ we obtain
		\begin{align*}
			\sum_{i \in [m]} (\ell_i - D)^2 = \sum_{i \in [m]} \ell_i^2 - 2D \sum_{i \in [m]} \ell_i + m D^2 = mD^2 - 2mD^2 + mD^2 = 0.
		\end{align*}
		This directly yields that $\ell_i = D$ for each $i \in [m]$, showing that exactly $D$ vertices are assigned to each bin.
		Together with the previous observation that for each star either all or no vertices are assigned to a bin, this shows that the input instance of \problemfont{Exact Bin Packing} is a yes-instance.
	\end{claimproof}
	Since \problemfont{Exact Bin Packing} is strongly \NP{}-hard, we can assume the integers of the input are given in unary, and hence the reduction runs in polynomial-time.
	This concludes the proof.
\end{proof}

\section*{AI Methodology}

We used GPT-5.6 Sol to obtain a proof that computing the stack number of a graph is \FPT\ parameterized by treedepth.
Moreover, the same system was used to consider potential generalizations, eventually leading us to look into \snpmodelchecking{}.
The proofs of the main result for \snpmodelchecking\ follow but go beyond the proof for stack number provided by GPT, and these proofs were written by the authors directly.

GPT-5.6 Sol was used to look for further examples of problems expressible by \SNP{}, leading us to include \problemfont{Boxicity} as an additional example application.
GPT-5.6 Sol and GPT-6 Astra were used to look for an example of a problem that can witness the non-extendability of our main result, leading to \cref{thm:eso_not_tractable_treedepth}.
The result in \cref{thm:max_snp_not_tractable_treedepth} was obtained using GPT-6 Astra.

We also used various large language model tools (including GPT-5.6 Sol, GPT-6 Astra, Claude Opus~5, Opus~5.5, and Fable 5.1) to proofread the paper, to help with writing, and to search the literature. They also assisted in the creation of our figures.

The authors checked the correctness of the content, investigated relevant prior work, and take full responsibility for the article.

\subparagraph*{Acknowledgments.}
T.\ Depian, R.\ Ganian, M.\ Gr\"undel, and  S.\ Wietheger acknowledge support from the Austrian Science Fund (FWF Projects 10.55776/Y1329 and 10.55776/COE12) and the Vienna Science and Technology Fund (WWTF Project 10.47379/ICT22029).

\bibliography{references}

\end{document}